\newtheorem{lemma}{Lemma}
\newtheorem{cor}{Corollary}
\newtheorem{theorem}{Theorem}
\newtheorem{example}{Example}
\newtheorem{prop}{Proposition}
\DeclareMathOperator{\diag}{diag}
\DeclareMathOperator{\Cov}{Cov}
\DeclareMathOperator{\Tr}{Tr}
\DeclareMathOperator{\MISE}{MISE}
\DeclareMathOperator{\AMISE}{AMISE}
\DeclareMathOperator{\AIC}{AIC}
\DeclareMathOperator{\IB}{IB}
\definecolor{c1}{rgb}{0,  0, 1}
\definecolor{c2}{rgb}{1,  0, 0}
\definecolor{c3}{rgb}{128,  0, 128}
\DeclareRobustCommand \triangled{\tikz{\filldraw[color =  c3,fill= c3] (0,0.1) -- (0.1,0.3) --(0.2,0.1) -- cycle;}}
\DeclareRobustCommand\full  {\tikz[baseline=-0.6ex]\draw[c1, line width=0.35mm] (0,0)--(0.57,0);} 
\DeclareRobustCommand\denselydashed{\tikz[baseline=-0.6ex]\draw[c2,line width=0.35mm, dash pattern={on 7pt off 1.5pt}] (0,0)--(0.57,0);} 
\begin{document}

\title{Robust and efficient estimation of nonparametric generalized linear models}
\author{Ioannis Kalogridis$^1$, Gerda Claeskens$^2$ and Stefan Van Aelst$^1$}
\date{%
    $^1$ Department of Mathematics, KU Leuven  \\%
    $^2$ ORStat and Leuven Statistics Research Centre, KU Leuven \\[2ex]%
    \today
}
\maketitle

\begin{abstract}
Generalized linear models are flexible tools for the analysis of diverse datasets, but the classical formulation requires that the parametric component is correctly specified and the data contain no atypical observations. To address these shortcomings we introduce and study a family of nonparametric full rank and lower rank spline estimators that result from the minimization of a penalized power divergence. The proposed class of estimators is easily implementable, offers high protection against outlying observations and can be tuned for arbitrarily high efficiency in the case of clean data. We show that under weak assumptions these estimators converge at a fast rate and illustrate their highly competitive performance on a simulation study and two real-data examples.
\end{abstract}

{Keywords:} Generalized linear model, robustness, penalized splines, reproducing kernel Hilbert space, asymptotics.

{MSC 2020}:  62G08, 62R20, 62G35

\section{Introduction}

Based on data $(t_1, Y_1), \ldots, (t_n, Y_n)$ with fixed $t_i \in [0,1]$, consider the generalized linear model (GLM) stipulating that
\begin{align}
Y_i \sim F_{\theta_{0,i}, \phi_0}, \quad (i=1, \ldots, n),
\label{GLM}
\end{align}
where $F_{\theta_{0,i}, \phi_0}$ is a discrete or continuous exponential family of distributions over $\mathbbm{R}$. Here, $\theta_{0,i}$ is the ``canonical" parameter depending on $t_i$ and $\phi_0$ is a common nuisance or scale parameter. Each $F_{\theta_{0,i}, \phi_{0}}$ has density with respect to either the Lebesgue or counting measure of the form
\begin{align}
\label{eq:2}
f_{\theta_{0,i}, \phi_0}(y) = \exp\left[ \frac{y \theta_{0,i} - b(\theta_{0,i})}{\phi_0} + c(y, \phi_0) \right], \quad (i=1, \ldots, n),
\end{align}
for some functions $b: \mathbbm{R} \to \mathbbm{R}$ and $c: \mathbbm{R} \times \mathbbm{R}_{+} \to \mathbbm{R} $ that determine the shape of the density. In the classical formulation \citep[see, e.g.,][]{MCN:1983}, the systematic component of the model is defined indirectly through the relation $G(b^{\prime}(\theta_{0,i})) = \beta_0 + \beta_1 t_i$ for a known link function $G: \mathbbm{R} \to \mathbbm{R}$ and some unknown $(\beta_0,\beta_1) \in \mathbbm{R}^2$. In the present paper, following \citet{Green:1994}, we avoid this heavy parametric assumption and simply require that $\theta_{0,i} = g_0(t_i)$ for a smooth function $g_0: [0,1] \to \mathbbm{R}$ to be estimated from the data.

The limitations of parametric inference in GLMs have been previously noticed by several authors and a large number of broader alternatives have been proposed through the years. The interested reader is referred to the dedicated monographs of \citet{Green:1994}, \citet{Ruppert:2003} and \citet{Gu:2013} for extensive relevant discussions and illustrative examples of nonparametric GLM estimators. From a practical standpoint, an important drawback of most of these methods is their reliance on a correctly specified probability model for all the data, whereas, as \citet[p.\,3]{Huber:2009} note, mixture distributions and aberrant observations not following the model frequently occur in practice. The need for resistant estimators in the nonparametric framework has been acknowledged as early as \citet{Hastie:1990}, but the literature in the meantime has remained relatively sparse and overwhelmingly focused on continuous responses with constant variance. For this type of response, resistant penalized spline estimators have been considered by \citet{Kalogridis:2021} and \citet{Kal:2021}, but unfortunately these ideas do not naturally extend to the setting of GLMs.

A resistant estimation procedure encompassing many generalized linear models was proposed by \citet{Ali:2011}, who devised a robust backfitting algorithm based on the robust quasi-likelihood estimator of \citet{Cantoni:2001b} along with local polynomial estimation and showed the asymptotic unbiasedness of their estimates under a suitable set of conditions. \citet{Bianco:2011} investigated general weighted M-estimators defined and proved their pointwise consistency and asymptotic normality. \citet{Croux:2012} combined the approach of \citet{Cantoni:2001b} with the P-spline approach of \citet{Eilers:1996} in order to construct robust estimates of both mean and dispersion, but without theoretical support. For smoothing-spline type robust quasi-likelihood estimators, \citet{Wong:2014} showed that for smooth score functions and under certain limit conditions the robust estimator inherits the rate of convergence of the non-robust quasi-likelihood estimator. More recently, \citet{Aeb:2021} proposed obtaining robust estimates by wrapping log-likelihood contributions with a smooth convex function that downweights small log-likelihood contributions, but again with limited theoretical support.

A drawback shared by all aforementioned approaches is the lack of automatic Fisher consistency, that is, these methods do not estimate the target quantities without ad-hoc corrections. These corrections may seem unnatural to researchers and practitioners that are well-acquainted with classical maximum likelihood estimators. Moreover, the estimators of \citet{Ali:2011}, \citet{Croux:2012} and \citet{Wong:2014} depend on monotone score functions, which implies that outliers in both response and predictor spaces can still exert undue influence on the estimates. \citet{Croux:2012} proposed using a weight function to limit the influence of high leverage observations in the predictor space, but this greatly complicates theoretical investigations and does not help with gross outliers in the response space. On the other hand, the pointwise nature of the estimators of \citet{Bianco:2011} poses a computational challenge and makes uniform asymptotic results much more difficult to obtain.

To overcome these issues, we propose a new family of non-parametric spline estimators for GLMs that is based on the concept of density power divergence between distributions, as developed by \citet{Basu:1998}. We propose both a full-rank smoothing spline type estimator and a lower-rank estimator based on the principle of penalized splines, which is particularly advantageous for large datasets. The proposed estimators possess a number of notable advantages. First, they are inherently Fisher-consistent for all GLMs and thus do not require ad-hoc corrections. The proposed class of estimators is adaptive and can combine high efficiency in clean data with robustness against gross outliers.  Moreover, these estimators can be efficiently computed through combination of the locally supported B-splines and fast iterative algorithms. For this family of estimators we establish asymptotic rates of convergence in a Sobolev norm, a rate of uniform convergence and rates of convergence of the derivatives, the latter two of which are novel outside the classical nonparametric likelihood framework.

\section{The proposed family of estimators}
\label{sec:2}

\subsection{Density power divergence}

We begin by reviewing the definition of density power divergence, as introduced by \citet{Basu:1998} for i.i.d. data and modified by \citet{Ghosh:2013} for independent but not identically distributed data. Consider two densities $f$ and $h$, which for clarity we take to be Lebesgue-densities. Densities with respect to the counting measure are also covered by the subsequent arguments provided essentially that integrals are replaced by sums. The density power divergence $d_{\alpha}(h,f)$ is defined as
\begin{align*}
d_{\alpha}(h, f) = \begin{dcases} \int_{\mathbbm{R}} \left\{ f^{1+\alpha}(x) - \left(1+\frac{1}{\alpha} \right)h(x) f^{\alpha}(x) + \frac{1}{\alpha}h^{1+\alpha}(x) \right\}dx & \alpha>0 \\ \int_{\mathbbm{R}} h(x) \log\{h(x)/f(x) \} dx & \alpha = 0.  \end{dcases}
\end{align*}
It is easy to see that $d_{\alpha}(h, f)  \geq 0$ and that $d_{\alpha}(h, f) = 0$ if and only if $h = f$ almost everywhere. Moreover, it can be readily verified that $d_{\alpha}(h, f)$ is continuous for $\alpha \to 0$. In fact, $d_0(h, f)$ is the Kullback-Leibler divergence, which is closely associated with classical maximum likelihood estimators \citep[see, e.g.,][]{Claeskens:2008}, whereas $d_1(h, f)$ is the $L_2$-error between densities, which has been used by \citet{Scott:2001} for robust parametric modelling. Hence, as \citet{Basu:1998} note, density power divergence provides a smooth bridge between maximum likelihood estimation and $L_2$-distance minimization.

In the GLM setting we assume that the densities of the $Y_i$ share a possibly infinite-dimensional parameter $\theta_0$ determining the form of their conditional means through the relationship $G(\mu_i) = \ell_i (\theta_0)$ where $\ell_i$ is the evaluation functional for each $t_i$, i.e., $\ell_i (\theta_0) = \theta_0(t_i)$. Moreover, $\theta_0$ belongs to some metric space $\Theta$, which is known a priori. Henceforth, we shall place emphasis on estimating $\theta_{0}$, and assume that $\phi_0$, the dispersion parameter, is either known or suitably substituted. For popular GLMs such as logistic, Poisson and exponential models, $\phi_0$ is indeed known whereas for other GLMs, e.g., with Gaussian responses, $\phi_0$ can be estimated without any model fitting with the resistant Rice-type estimators proposed by \citet{Ghement:2008} and \citet{Boente:2010}. Therefore, we set $\phi_0 = 1$ without loss of generality and drop it from the notation from now on. Furthermore, for convenience we shall from now on use $\theta_i$ to denote $\ell_i(\theta)$. 

To derive estimators from the density power divergence, we fix an $\alpha>0$ and minimize $n^{-1} \sum_{i=1}^n d_{\alpha}(f_{\theta_{0,i}}, f_{\theta_i})$ over all $\theta \in \Theta$. For this minimization problem we may drop terms only involving $f_{\theta_{0,i}}$ as they do not depend on $\theta$. An unbiased estimator for the unknown cross term  $\int f_{\theta_i}^{\alpha}d F_{\theta_{0,i}}$ is given by $f_{\theta_i}^{\alpha}(Y_i)$. Thus, we may replace each summand by
\begin{align}
\label{eq:3}
l_{\alpha}(Y_i, \theta_i) = \int_{\mathbbm{R}} f_{\theta_i}^{1+\alpha}(x) dx - \left(1+\frac{1}{\alpha} \right) f_{\theta_i}^{\alpha}(Y_i), \quad (i =1 \ldots, n).
\end{align}
Below we explain how we use this principle to construct a new class of nonparametric GLM estimators with good asymptotic properties.

In the remainder of the paper we assume that $\Theta$ is a reproducing kernel Hilbert space (RKHS) of functions that is generated by a kernel $\mathcal{R}:[0,1]^2 \to \mathbbm{R}$. Let us denote this space by $\mathcal{H}(\mathcal{R})$. A crucial property of $\mathcal{H}(\mathcal{R})$ is that for every $f \in \mathcal{H}(\mathcal{R})$ it holds that $f(t_i) = \langle f, \mathcal{R}(t_i,\cdot) \rangle_{\mathcal{H}(\mathcal{R})}$.  A popular measure of robustness is the influence function (IF), which roughly measures the effect of a small proportion of contamination on the estimator \citep[see, e.g.,][]{Hampel:2011}. Estimators with bounded IF are considered robust, as in that case a small amount of contamination can only have a limited effect on the estimator. 

It can be shown that for every $\alpha>0$ the associated density power divergence estimator possesses a bounded influence function. Indeed, using standard M-estimation theory, it may be verified that the influence function for the density power divergence functional is proportional to
\begin{align}
\label{IF}
\sum_{i=1}^n \left[u_{\theta_{0,i}}(y_0) f^{\alpha}_{\theta_{0,i}}(y_0) - \int_{\mathbbm{R}} u_{\theta_{0,i}}(y) f_{\theta_{0,i}}^{1+\alpha}(y) dy   \right] \langle \mathcal{R}(t_i, \cdot), \cdot \rangle_{\mathcal{H}(\mathcal{R})}
\end{align}
where $u_{\theta}(y) = (y - b^{\prime}(\theta))$ is the derivative of the log-density with respect to its canonical parameter and $u_{\theta_{0,i}}(y)$ is the derivative evaluated at $\theta_{0,i}$. An analogous result holds for the discrete case. For $\alpha=0$ the IF of the maximum likelihood estimator is obtained, which is clearly an unbounded function of $y_0$. However, for $\alpha>0$ it may be seen that the IF is bounded, ensuring the resistance of density power divergence estimators are robust against small amounts of contamination. Their degree of resistance depends on the magnitude of $\alpha$, as larger values of $\alpha$ lead to the  faster decay of $u_{\theta}(y_0) f^{\alpha}_{\theta}(y_0)$ to zero, ensuring greater robustness.

\subsection{Smoothing spline type estimators}

Consider now the specific GLM~\eqref{GLM} with densities~\eqref{eq:2} where $\theta_{0,i} = \theta_0(t_i) = g_0(t_i)$ for $i=1, \ldots, n$, and $g_0$ needs to be estimated from the data. In this section we only require that $g_0$ belongs to the Hilbert-Sobolev space $\mathcal{W}^{m,2}([0,1])$ for some $m \geq 1$, which is defined as
\begin{align*}
\mathcal{W}^{m,2}([0,1]) = \{ f:[0,1] \to \mathbb{R}, f\ &\text{has $m-1$ absolutely continuous derivatives}  \nonumber \\  & f^{(1)}, \ldots, f^{(m-1)}\ \text{and} \ \int_0^1  \vert f^{(m)}(t) \vert^2 dt < \infty  \} \nonumber.
\end{align*}
The space $\mathcal{W}^{m,2}([0,1])$ is well-suited for nonparametric regression problems, as it forms a RKHS so that for each $x \in [0,1]$ the evaluation functionals $\mathcal{W}^{m,2}([0,1]) \to \mathbbm{R}: f \mapsto f(x) $ are continuous, see, e.g., \citet{Wahba:1990}.

As a compromise between goodness of fit and complexity we propose to estimate $g_0$ by the function $\widehat{g}_n$ solving
\begin{equation}
\inf_{g \in \mathcal{W}^{m,2}([0,1])} \left[ \frac{1}{n} \sum_{i=1}^n l_{\widehat{\alpha}_n}(Y_i,g(t_i)) + \lambda \int_{[0,1]} \vert g^{(m)}(t) \vert^2 dt \right],
\label{eq:5}
\end{equation}
for some $\lambda>0$, that acts as the tuning parameter. Here, we also allow for a random tuning parameter $\widehat{\alpha}_n$. This random tuning parameter may depend on the data itself leading to an adaptive estimator whose robustness and efficiency automatically adjust to the data. In particular, for $\widehat{\alpha}_n$ close to zero the objective function approaches the  penalized likelihood considered, for example, by  \citet{Cox:1990}, \citet{Mammen:1997} and \citet{Kau:2009}. As discussed previously, these estimators are efficient but not robust. On the other hand, for large $\widehat{\alpha}_n$, estimators minimizing \eqref{eq:5} are robust but not efficient. In practice, we aim to balance robustness and efficiency and select an $\widehat{\alpha}_n$ in $(0,1]$, although higher values can also be considered. Section~\ref{sec:4} outlines a possible strategy in this respect.

For bounded densities that are continuous with respect to their parameter, the objective function is bounded from below and continuous in $\mathcal{W}^{m,2}([0,1])$. Reasoning along the same lines as in the proof of Theorem 1 of \citet{Kalogridis:2021} reveals that for $n \geq m$ this minimization problem is well-defined and there exists at least one minimizer in $\mathcal{W}^{m,2}([0,1])$.  Arguing now in a standard way \citep[see, e.g.,][]{Eubank:1999} shows that this minimizer must be an easily computable $n$-dimensional natural spline with knots at the unique $t_1, \ldots, t_n$. As we discuss in Section~\ref{sec:4} below though, unrestricted B-splines may also be used in the computation of the estimator.

Even for Gaussian responses the smoothing spline type estimator in \eqref{eq:5} has not been previously considered. In this case the form of the loss function is rather simple. Indeed, for Gaussian $Y_i$ the first term in \eqref{eq:5} is constant as a function of $g \in \mathcal{W}^{m,2}([0,1])$. Hence, apart from constants the objective function becomes
\begin{align*}
\frac{1}{n} \sum_{i=1}^n \rho_{\widehat{\alpha}_n}(Y_i-g(t_i)) + \lambda \int_{[0,1]} \vert g^{(m)}(t) \vert^2 dt,
\end{align*}
with $\rho_{\alpha}(x) = -e^{- \alpha x^2/2}$. This exponential loss function has attractive properties for robust estimation because it is a bounded loss function which is infinitely differentiable with bounded derivatives of all orders. In the parametric setting the exponential squared loss function has been used, e.g., by \citet{Wang:2013}. In the nonparametric setting considered herein, the penalized exponential squared loss gives rise to a novel estimator that may be viewed as a more robust alternative to the Huber and least absolute deviations smoothing spline estimators studied in \citet{van de Geer:2000, Kalogridis:2021}. See Section~\ref{sec:5} for interesting comparisons.

A noteworthy property of the penalty functional in \eqref{eq:5} is the shrinkage of the estimator towards a polynomial of order $m$. To see this, assume that $\widehat{g}_n$ lies in the null space of the penalty so that $\|\widehat{g}_n^{(m)}\| = 0$. A Taylor expansion with integral remainder term shows that
\begin{align*}
\widehat{g}_n(t) = P_m(t) + \int_0^1 \frac{\widehat{g}_n^{(m)}(x)}{(m-1)!} (t-x)^{m-1}_{+} dx,
\end{align*}
where $P_{m}(t)$ is the Taylor polynomial of order $m$. The Schwarz inequality shows that the integral remainder vanishes for all $t \in [0,1]$, whence $ \sup_{t \in [0,1]} \vert \widehat{g}_n (t) - P_{m} (t) \vert= 0$. This implies that letting $\lambda \to \infty$ will cause the estimator to become a polynomial of order $m$, as the dominance of the penalty term in \eqref{eq:5} forces the estimator to lie in its null space.

A crucial property underlying the construction of all our GLM estimators is their inherent Fisher-consistency \citep[p.~83]{Hampel:2011}. In particular, our previous discussion shows that, for each $\alpha>0$, $\theta_{0,i}$ minimizes $\mathbb{E}\{l_\alpha(Y_i, \theta_i) \}$ 
for each $i = 1, \ldots, n$. Hence, our estimation method is Fisher-consistent for every model distribution. To the best of our knowledge, this is the first robust nonparametric estimator that enjoys this property without corrections, although other divergence based-estimators may also enjoy this property. Since Fisher-consistency corrections are model-dependent, the inherent Fisher-consistency yields an important practical bonus.

\subsection{Penalized spline type alternatives}

A drawback of smoothing spline type estimators is their dimension, which grows linearly with the sample size. This implies that for large $n$ smoothing spline estimators can be computationally cumbersome. Moreover, as noted by \citet[p.199]{Wood:2017}, in practice the value of $\lambda$ is almost always high enough such that the effective degrees of freedom of the resulting spline is much smaller than $n$. Penalized spline estimators offer a compromise between the complexity of smoothing splines and the simplicity of (unpenalized) regression splines \citep{O:1986, Eilers:1996}. We now discuss penalized spline estimators for our setting as a simpler alternative to the smoothing spline estimator discussed above.

Fix a value $K \in \mathbbm{N}_{+}$ and define the interior knots $0=x_0 <x_1, \ldots, x_{K} < x_{K+1} = 1$,  which do not have to be design points. For a fixed $p\in \mathbbm{N}_{+}$, let $S_{K}^p([0,1])$ denote the set of spline functions on $[0,1]$ of order $p$ with knots at the $x_i$. For $p=1,\ S_{K}^1([0,1])$ is the set of step functions with jumps at the knots while for $p\geq 2$,
\begin{align*}
S_{K}^p([0,1]) = \{ s \in \mathcal{C}^{p-2}([0,1]): s(x)\ &\text{is a polynomial of degree $(p-1)$ } \\ &
\text{on each $[x_i, x_{i+1} ]$} \}.
\end{align*}
Thus, $p$ controls the smoothness of the functions in $S_{K}^p$ while the number of interior knots $K$ represents the degree of flexibility of spline functions in $S_{K}^p([0,1])$, see e.g., \citet{Ruppert:2003,Wood:2017}. It is easy to see that $S_{K}^p([0,1])$ is a $(K+p)$-dimensional subspace of $\mathcal{C}^{p-2}([0,1])$ and B-spline functions yield 
a stable basis for $S_{K}^p([0,1])$ with good numerical properties \citep{DB:2001}.

For any $m \in \mathbbm{N}_{+}$ satisfying $m<p$ we define the penalized spline type estimator $\widehat{g}_n$ as the solution of the optimization problem
\begin{align}
\label{eq:6}
\min_{g \in S_{K}^p([0,1])} \left[ \frac{1}{n} \sum_{i=1}^n l_{\widehat{\alpha}_n}(Y_i,g(t_i)) + \lambda \int_{[0,1]} \vert g^{(m)}(t)\vert^2 dt\right],
\end{align}
with $\lambda \geq 0$.  Hence, we have replaced $\mathcal{W}^{m,2}([0,1])$ in~\eqref{eq:5} by a $(K+p)$- dimensional spline subspace. For $K \ll n$ this yields sizeable computational gains in relation to the smoothing spline estimator. Moreover, it turns out that penalized spline estimators do not sacrifice much in terms of accuracy if $K$ is large enough, but still smaller than $n$. See \citet{Claeskens:2009, Xiao:2019} for interesting comparisons in classical nonparametric regression models and Section~\ref{sec:3} below for a comparison in the present context.

Penalized spline estimators retain a number of important mathematical properties of their full rank smoothing spline counterparts. In particular, for $\lambda>0$ and $p = 2m$ it can be shown that the penalized spline estimator is a natural spline of order $2m$. Moreover, the null space of the penalty consists exactly of polynomials of order $\leq m$. In the frequently used setting of equidistant interior knots, the latter property is also retained if one replaces the derivative penalty with the simpler difference (P-spline) penalty $\sum_{j = m+1}^{K+p} \vert \Delta^m \beta_j \vert^2$ proposed by \citet{Eilers:1996}. Here, $\Delta^m$ is the $m$th backward difference operator and $\beta_j, j = 1, \ldots, K+p$ are the coefficients of the B-spline functions. In this case, these two penalties are scaled versions of one another with the scaling factor depending on $K$, $p$ and $m$ \citep[see, e.g., Proposition 1 of][]{Kal:2021}. Thus, P-spline estimators are also covered by the asymptotic results of the following section.

\section{Asymptotic behaviour of the estimators}
\label{sec:3}

\subsection{Smoothing spline type estimators}

As noted before, an essential characteristic of $\mathcal{W}^{m,2}([0,1])$
is that it is a RKHS. The reproducing kernel depends on the inner product that $\mathcal{W}^{m,2}([0,1])$ is endowed with. We shall make use of the inner product
\begin{align*}
\langle f, g \rangle_{m, \lambda} = \langle f, g \rangle + \lambda \langle f^{(m)}, g^{(m)} \rangle, \quad f,g \in \mathcal{W}^{m,2}([0,1])
\end{align*}
where $\langle \cdot, \cdot \rangle$ denotes the standard inner product on $\mathcal{L}^2([0,1])$.  It is interesting to observe that $\langle \cdot, \cdot \rangle_{m, \lambda}$ is well-defined and depends on the smoothing parameter $\lambda$, which typically varies with $n$. \citet[Chapter 13]{Eg:2009} show that there exists a finite positive constant $c_0$ such that for all $f \in \mathcal{W}^{m,2}([0,1])$ and $\lambda \in (0, 1]$ we have
\begin{align}
\label{eq:7}
\sup_{t\in [0,1]} |f(t)| \leq c_0 \lambda^{-1/(4m)} \left\|f\right\|_{m, \lambda},
\end{align}
with $\left\|f\right\|_{m, \lambda} = \langle f, f \rangle_{m,\lambda}^{1/2}$.
Hence, for any $\lambda \in (0,1]$, $\mathcal{W}^{m,2}([0,1])$ is indeed a RKHS under $\langle \cdot, \cdot \rangle_{m, \lambda}$. The condition $\lambda \leq 1$ is not restrictive, as for our main result below we assume that $\lambda \to 0 $ as $n \to \infty$ and our results are asymptotic in nature. 

The assumptions needed for our theoretical development are as follows.
\begin{itemize}
\item[(A1)] The support $\mathcal{Y} := \overline{ \{ y: f_{{\theta}}(y)>0 \} }$ does not depend on $\theta \in \mathbbm{R}$.
\item[(A2)] There exists an $\alpha_0>0$ such that $\widehat{\alpha}_n \xrightarrow{P} \alpha_0$.
\item[(A3)] The densities $\{f_{\theta}(y), \theta \in \mathbbm{R}\}$ are uniformly bounded, twice differentiable as a function of their canonical parameter $\theta$ in a neighbourhood of the true parameter $\theta_0$ and there exist $\delta>0$ and $M< \infty$ such that
\begin{align*} 
\sup_{t \in [0,1]} \sup_{\substack{|\alpha-\alpha_0|<\delta \\ |u| < \delta}} \sup_{y \in \mathcal{Y}} \left| \frac{\partial f_{\theta}^{\alpha}(y)}{\partial \theta} \bigg \vert_{\theta = \theta_{0}(t)+u}  \log\left(f_{\theta_0(t)+u}(y) \right) \right|  & \leq M.
\end{align*}
\item[(A4)] In the case of densities w.r.t. Lebesgue measure, the families of functions $\{m_t(u, \alpha), t \in [0,1] \}$ and $\{n_t(u, \alpha, y), t \in [0,1]\}$ defined by
\begin{align*}
m_t(u, \alpha) = \int_{\mathcal{Y}} \frac{\partial^2 f^{1+\alpha}_{\theta}(y)}{\partial \theta^2} \bigg \vert_{\theta = \theta_{0}(t)+u} dy, \quad n_t(u, \alpha, y) =  \frac{\partial^2 f^{\alpha}_{\theta}(y)}{\partial \theta^2} \bigg \vert_{\theta = \theta_{0}(t)+u},
\end{align*}
are equicontinuous at $u = 0$, for every $\alpha \in (\alpha_0-\delta, \alpha_0+\delta)$ with $\delta$ as in (A3) and $y \in \mathcal{Y}$. Moreover, there exists an $M^{\prime}>0$ such that
\begin{align*}
\sup_{t \in [0,1]}  \sup_{\substack{ |\alpha-\alpha_0|<\delta \\ |u|<\delta }} \left[ \left|m_t(u, \alpha)\right| + \sup_{y \in \mathcal{Y}} \left|n_t(u, \alpha, y)\right|  \right] \leq M^{\prime}.
\end{align*}
For densities w.r.t. counting measure, the sum over $y \in \mathcal{Y}$ replaces the integral in the definition of $m_t(u, \alpha)$.
\item[(A5)] For $\delta>0$, as given in (A3), there exist $0<c_0 \leq C_0<\infty$ such that
\begin{align*}
c_0 \leq \inf_{n} \min_{i \leq n} \mathbb{E}\{ f_{\theta_{0,i}}^{\alpha}(Y_i) |u_{\theta_{0,i}}(Y_i)|^2 \} \leq  \sup_{n}  \max_{ i \leq n} \mathbb{E}\{ f_{\theta_{0,i}}^{\alpha}(Y_i) |u_{\theta_{0,i}}(Y_i)|^2 \} \leq C_0,
\end{align*}
for all $\alpha \in (\alpha_0-\delta, \alpha_0+\delta)$,
with $u_{\theta_{0,i}}(y) = \partial \log(f_{\theta}(y))/\partial \theta$ the derivative of the log-density evaluated at $\theta_{0,i}$.
\item[(A6)] The family of design points $t_i$ is asymptotically quasi-uniform in the sense of \citet{Eg:2009}, that is, there exists an $n_0 \in \mathbbm{N}$ such that,
\begin{align*}
\sup_{f \in \mathcal{W}^{1,1}([0,1])} \frac{\left|n^{-1} \sum_{i=1}^n f(t_i) - \int_{0}^1 f(t) dt\right|}{\int_{0}^1|f^{\prime}(t)| dt }  = O(n^{-1}),
\end{align*}
for all $n \geq n_0$.
\end{itemize}

Condition (A1) is standard in the theory of exponential families and may be viewed as an identifiability condition. It is worth noting that  imposing the convergence in probability of nuisance (estimated) parameters as in (A2) is a common way of dealing with them theoretically, see, e.g., \citep[theorems 5.31 and 5.55]{VDV:1998}. Furthermore, since assumption (A2) requires no rate of convergence whatsoever, the methodology employed herein can be used to weaken such assumptions in other works, e.g., in \citet[Theorem 3]{Kalogridis:2021} whose assumptions require a specific rate of convergence of the preliminary scale estimator in the context of robust nonparametric regression with continuous responses.

Assumptions (A3)--(A4) are regularity conditions, which are largely reminiscent of classical maximum likelihood conditions \citep[cf. Theorem 5.41 in][]{VDV:1998}. These conditions essentially impose some regularity of the first and second derivatives in a neighbourhood of the true parameter. Since for any $\alpha>0$,
\begin{align*}
\frac{\partial^2 f_{\theta}^{\alpha}(y)}{\partial \theta^2} & = \alpha^2 |y-b^{\prime}(\theta)|^2 f_{\theta}^{\alpha}(y) - \alpha b^{\prime \prime}(\theta)f_{\theta}^{\alpha}(y),
\end{align*}
conditions (A3) and (A4) are satisfied for a wide variety of GLMs due to the rapid decay of $f_{\theta}^{\alpha}(y)$ for large values of $|y|$. In particular, they are satisfied for  the Gaussian, logistic and Poisson models. Similarly condition (A5) is an extension of the assumptions underpinning classical maximum likelihood estimation. For $\alpha = 0$ these moment conditions entail that the Fisher information is strictly positive and finite. The following examples demonstrate that condition (A5) holds for popular GLMs.

\begin{example}[Gaussian responses]
\normalfont
Clearly, $u_{\theta_{0,i}} = y - b^{\prime}(\theta_{0,i}) = y - \theta_{0,i}$ in this case and
\begin{align*}
 \int_{\mathbbm{R}} f_{\theta_{0,i}}^{1+\alpha} (y)|u_{\theta_{0,i}}(y)|^2 dy =  \frac{1}{\{(2\pi)^{\alpha}(1+\alpha)\}^{1/2}},
\end{align*}
so that (A5) is satisfied without any additional conditions.
\end{example}

\begin{example}[Binary responses]
\normalfont
The canonical parameter is $\theta_i = \log(p_i/(1-p_i))$ with $p_i$ denoting the probability of success for the $i$th trial. Thus,
\begin{align*}
\sum_{y \in \{0,1\}} f_{\theta_{0,i}}^{1+\alpha}(y) |u_{\theta_{0,i}}(y)|^2 = (1-p_{0,i})^{1+\alpha} p_{0,i}^2 + p_{0,i}^{1+\alpha}(1-p_{0,i})^2,
\end{align*}
where $p_{0,i} = 1/(1+e^{-\theta_{0,i}})$. Thus, (A5) is satisfied whenever $p_{0,i}$ is bounded away from zero and one, precisely as required by \citet{Cox:1990} and \citet{Mammen:1997} for classical nonparametric logistic regression.
\end{example}

\begin{example}[Exponential responses]
\normalfont
Let $f_{\theta_{0,i}}(y) =\exp[y\theta_{0,i} - \log(-\theta_{0,i})] \mathcal{I}_{(0,\infty)}(y)$ with $\theta_{0,i} = - \lambda_{0,i}$. A lengthy calculation shows that
\begin{align*}
\int_{\mathbbm{R}} f_{\theta_0,i}^{1+\alpha}(y) |u_{\theta_{0,i}}(y)|^2 dy = \frac{1+\alpha^2}{(1+\alpha)^3} \frac{1}{\lambda_{0,i}^{2-\alpha}}.
\end{align*}
Hence, (A5) is satisfied provided that $\lambda_{0,i}$ stays away from zero, since by compactness and continuity, $\lambda_0(t)$ is always bounded.

\end{example}
\begin{example}[Poisson responses]
\normalfont
Here, for positive rate parameters  $\lambda_{0,i}$, we have $\theta_{0,i} = \log (\lambda_{0,i}), i = 1, \ldots, n$. It can be shown that
\begin{align*}
\lambda_{0,i}^2 \exp[-\lambda_{0,i}(1+\alpha)] \leq \sum_{y = 0}^{\infty} f_{\theta_0,i}^{1+\alpha}(y) |u_{\theta_{0,i}}(y)|^2 \leq \lambda_{0,i},
\end{align*}
so that (A5) is satisfied provided that $\lambda_{0,i}$ stays away from zero.
\end{example}

It is easy to see that assumptions (A1)--(A5) also cover density power divergence estimators based on a fixed $\alpha_0>0$, for example, $L_2$-distance estimators with $\alpha_0=1$, as in this case condition (A2) is trivially satisfied and conditions (A3)--(A5) only need to hold for that particular $\alpha_0$. Finally, condition (A6) ensures that the design points are well-spread throughout the interval of interest. Call $Q_n$ the distribution function of $t_1, \ldots, t_n$. Then, for each $f \in \mathcal{W}^{1,1}([0,1])$, an integration by parts argument reveals that
\begin{align*}
\left|n^{-1} \sum_{i=1}^n f(t_i) - \int_{0}^1 f(t) dt\right| \leq \sup_{t \in [0,1]}|Q_n(t) - t| \int_{0}^1 |f^{\prime}(t)| dt.
\end{align*}
Consequently, (A6) is satisfied provided that $Q_n$ approximates well the uniform distribution function, for example if $t_i = i/(n+1)$ or $t_i = 2i/(2n+1)$. 

Theorem~\ref{thm:1} contains the main result for smoothing type spline estimators defined in~\eqref{eq:5}.

\begin{theorem}
\label{thm:1}
If assumptions (A1)--(A6) hold and $\lambda \to 0$ in such a way that $n \lambda^{1/m} \to \infty$ and $n \lambda^{1/(2m)} \exp[-\lambda^{-1/{m}}] \to 0$ as $n \to \infty$. Then,
\begin{align*}
\lim_{D \to \infty} \liminf_{n \to \infty} \Pr[& \text{there exists a sequence of local minimizers $\widehat{g}_n$ of \eqref{eq:5} satisfying} \\  &\ \left\|\widehat{g}_n - g_0\right\|_{m, \lambda}^2 \leq D(n^{-1} \lambda^{-1/(2m)} + \lambda)   ] = 1.
\end{align*}
\end{theorem}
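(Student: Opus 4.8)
The plan is the standard coercivity argument for penalized $M$-estimators in a reproducing kernel Hilbert space. Write $r_n^2 = n^{-1}\lambda^{-1/(2m)}+\lambda$ and, for $g\in\mathcal{W}^{m,2}([0,1])$, let
\[
G_n(g)=\frac1n\sum_{i=1}^n l_{\widehat{\alpha}_n}(Y_i,g(t_i))+\lambda\int_0^1|g^{(m)}(t)|^2\,dt .
\]
The heart of the proof is to show that for every $\varepsilon>0$ there is $D_\varepsilon<\infty$ such that, for all $D\ge D_\varepsilon$,
\[
\liminf_{n\to\infty}\Pr\!\Big[\inf_{\|\gamma\|_{m,\lambda}=\sqrt D\,r_n} G_n(g_0+\gamma)>G_n(g_0)\Big]\ge 1-\varepsilon ,
\]
i.e. with high probability the criterion strictly increases on the $\|\cdot\|_{m,\lambda}$-sphere of radius $\sqrt D\,r_n$ about $g_0$. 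Granting this, the existence of a local minimizer inside the ball follows by the direct method: the closed ball $\{g:\|g-g_0\|_{m,\lambda}\le\sqrt D\,r_n\}$ is weakly compact in the Hilbert space $(\mathcal{W}^{m,2}([0,1]),\langle\cdot,\cdot\rangle_{m,\lambda})$, and $G_n$ is weakly lower semicontinuous there, since the penalty is a continuous convex functional and, by the reproducing property, each map $g\mapsto g(t_i)$ is weakly continuous, hence so is $g\mapsto l_{\widehat{\alpha}_n}(Y_i,g(t_i))$ by continuity of $l_\alpha(y,\cdot)$ for bounded densities. Thus $G_n$ attains its infimum over the closed ball; on the event above this minimizer lies in the open ball and is therefore a local minimizer of \eqref{eq:5} with $\|\widehat g_n-g_0\|_{m,\lambda}^2\le D r_n^2$, which is the assertion.

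\textbf{Expansion.} For the key inequality I would expand $G_n(g_0+\gamma)-G_n(g_0)$ about $g_0$: a second-order Taylor expansion of each $l_{\widehat{\alpha}_n}(Y_i,\cdot)$ in its canonical argument together with the exact expansion of the quadratic penalty gives
\[
G_n(g_0+\gamma)-G_n(g_0)=L_n(\gamma)+\frac1{2n}\sum_{i=1}^n l''_{\widehat{\alpha}_n}(Y_i,\xi_i)\,\gamma(t_i)^2+\lambda\|\gamma^{(m)}\|^2 ,
\]
where $\xi_i$ lies between $g_0(t_i)$ and $g_0(t_i)+\gamma(t_i)$ and $L_n(\gamma)=n^{-1}\sum_i l'_{\widehat{\alpha}_n}(Y_i,g_0(t_i))\gamma(t_i)+2\lambda\langle g_0^{(m)},\gamma^{(m)}\rangle$ is the linear part. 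Using the reproducing property of the kernel $\mathcal{R}_{m,\lambda}$ of $\langle\cdot,\cdot\rangle_{m,\lambda}$ (which by \eqref{eq:7} satisfies $\mathcal{R}_{m,\lambda}(t,t)\le c_0^2\lambda^{-1/(2m)}$) and Riesz representation for the penalty cross term, one writes $L_n(\gamma)=\langle\gamma,V_n\rangle_{m,\lambda}$; Fisher consistency gives $\mathbb{E}\,l'_\alpha(Y_i,g_0(t_i))=0$ for each fixed $\alpha$, so that, bounding the score variances by (A5) and passing from $\alpha_0$ to the data-driven $\widehat{\alpha}_n$ via (A2)–(A3) (uniformly over a small neighbourhood of $\alpha_0$), the score contributes $O(n^{-1}\lambda^{-1/(2m)})$ and the cross term $O(\lambda)$ to $\mathbb{E}\|V_n\|_{m,\lambda}^2$. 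Hence $\|V_n\|_{m,\lambda}=O_P(r_n)$ and $|L_n(\gamma)|\le\|\gamma\|_{m,\lambda}\|V_n\|_{m,\lambda}=O_P(\sqrt D\,r_n^2)$ on the sphere.

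\textbf{Second-order part.} Here I would replace $l''_{\widehat{\alpha}_n}(Y_i,\xi_i)$ by $\mathbb{E}\,l''_{\alpha_0}(Y_i,g_0(t_i))$. Using $\partial_\theta f_\theta^\beta=\beta f_\theta^\beta u_\theta$ one computes $\mathbb{E}\,l''_\alpha(Y_i,g_0(t_i))=(1+\alpha)\,\mathbb{E}\{f^\alpha_{\theta_{0,i}}(Y_i)|u_{\theta_{0,i}}(Y_i)|^2\}\ge(1+\alpha_0-\delta)c_0>0$ by (A5). The replacement error splits into (i) $l''_{\widehat{\alpha}_n}(Y_i,\xi_i)-l''_{\widehat{\alpha}_n}(Y_i,g_0(t_i))$, which is uniformly $o(1)$ because $\|\gamma\|_\infty\le c_0\lambda^{-1/(4m)}\sqrt D\,r_n\to0$ (this uses $n\lambda^{1/m}\to\infty$) combined with the equicontinuity at $u=0$ in (A4); (ii) the effect of $\widehat{\alpha}_n\ne\alpha_0$, handled as above; and (iii) the centred fluctuation $n^{-1}\sum_i\{l''_{\alpha_0}(Y_i,g_0(t_i))-\mathbb{E}\,l''_{\alpha_0}(Y_i,g_0(t_i))\}\gamma(t_i)^2$, which must be controlled \emph{uniformly} over the ball. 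For (iii) I would apply symmetrization and the contraction principle ($x\mapsto x^2$ is Lipschitz on the range $\|\gamma\|_\infty$) followed by the reproducing property to reduce to $\mathbb{E}\,\|n^{-1}\sum_i\varepsilon_i Z_i\mathcal{R}_{m,\lambda}(t_i,\cdot)\|_{m,\lambda}$, which is of order $n^{-1/2}\lambda^{-1/(2m)}\|\gamma\|_{m,\lambda}^2=o(\|\gamma\|_{m,\lambda}^2)$. Finally, (A6) together with the interpolation inequality $\|\gamma'\|\lesssim\|\gamma\|^{1-1/m}\|\gamma^{(m)}\|^{1/m}+\|\gamma\|$ gives $n^{-1}\sum_i\gamma(t_i)^2\ge(1-o(1))\|\gamma\|^2$, so that after adding $\lambda\|\gamma^{(m)}\|^2$ the second-order part is at least $(c''-o_P(1))\|\gamma\|_{m,\lambda}^2$ for a fixed $c''>0$. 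Collecting the pieces, on the sphere $\|\gamma\|_{m,\lambda}=\sqrt D\,r_n$,
\[
G_n(g_0+\gamma)-G_n(g_0)\ge\sqrt D\,r_n^2\big[(c''-o_P(1))\sqrt D-O_P(1)\big],
\]
which is positive uniformly over the sphere once $D$ is large, with probability tending to one; taking $n\to\infty$ and then $D\to\infty$ yields the displayed limit.

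\textbf{Main obstacle.} The crux is step (iii): the \emph{uniform}-in-$\gamma$ control of the stochastic second-order term, and, alongside it, making every stochastic bound uniform over $\alpha$ in a neighbourhood of $\alpha_0$ so that $\widehat{\alpha}_n$ may be substituted at the end through (A2). This is exactly where the RKHS structure is indispensable: the effective dimension $\lambda^{-1/(2m)}$ of the $\|\cdot\|_{m,\lambda}$-ball, encoded in $\mathcal{R}_{m,\lambda}(t,t)\lesssim\lambda^{-1/(2m)}$, is what renders the empirical-process remainders negligible under $n\lambda^{1/m}\to\infty$; the second, super-polynomially small growth condition $n\lambda^{1/(2m)}\exp[-\lambda^{-1/m}]\to0$ enters only through a coarser bound on a technical remainder and is harmless for any polynomially decaying $\lambda$.
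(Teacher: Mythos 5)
Your overall architecture coincides with the paper's: reduce the claim to showing that the criterion strictly exceeds its value at $g_0$ on the $\|\cdot\|_{m,\lambda}$-sphere of radius $\sqrt{D}\,r_n$, then invoke weak compactness of the ball and weak lower semicontinuity of the objective (the paper isolates this as Lemma~\ref{lem:1}) to produce a local minimizer in the interior; expand the loss about $g_0$, use Fisher consistency to centre the score, bound the penalty cross term by Cauchy--Schwarz, and obtain the coercive lower bound on the expected quadratic part from (A5), the equicontinuity in (A4), the sup-norm bound \eqref{eq:7} and a quadrature inequality tied to (A6). Where you genuinely diverge is in how the two stochastic terms are controlled. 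The paper bounds the score term $\sup_{g\in\mathcal{B}_D}|n^{-1}\sum_i g(t_i)l'_{\alpha}(Y_i,g_0(t_i))|$ via the Cucker--Smale entropy of Sobolev balls and Corollary 8.3 of van de Geer, and the centred second-order remainder via bracketing entropy and Theorem 5.11 of van de Geer; you instead use the Riesz representer $V_n=n^{-1}\sum_i l'(Y_i,g_0(t_i))\mathcal{R}_{m,\lambda}(t_i,\cdot)+\cdots$ with a second-moment (Chebyshev) bound exploiting $\mathcal{R}_{m,\lambda}(t,t)\lesssim\lambda^{-1/(2m)}$, and symmetrization plus contraction for the quadratic fluctuation. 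Both routes rest on the same effective-dimension bound $\lambda^{-1/(2m)}$ and both deliver the rate $r_n^2=n^{-1}\lambda^{-1/(2m)}+\lambda$ under $n\lambda^{1/m}\to\infty$; yours is more elementary and avoids entropy integrals, at the cost of needing the multiplier/conditional form of the contraction principle with Lipschitz constant $2\|\gamma\|_\infty\asymp\lambda^{-1/(4m)}r_n$, which you do account for.

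Two points deserve attention. First, the uniformity in $\alpha$ over a neighbourhood of $\alpha_0$ --- which you correctly identify as necessary before substituting the random $\widehat{\alpha}_n$ via (A2) --- is only asserted (``handled as above''), whereas it is the most labour-intensive part of the paper's proof: an explicit net over $(\alpha_0-\delta/2,\alpha_0+\delta/2)$, Lipschitz bounds on $l'_\alpha$ in $\alpha$ extracted from (A3) via the mean-value theorem applied to $\alpha\mapsto f_\theta^{\alpha}$, and a union bound whose cardinality is exactly what consumes the second limit condition $n\lambda^{1/(2m)}\exp[-\lambda^{-1/m}]\to0$. A pointwise-in-$\alpha$ Chebyshev bound on $\|V_n(\alpha)\|_{m,\lambda}$ does not by itself survive the substitution $\alpha=\widehat{\alpha}_n$, so this step must be carried out, not waved at. Second, your quadrature claim $n^{-1}\sum_i\gamma(t_i)^2\ge(1-o(1))\|\gamma\|^2$ is not quite what (A6) delivers uniformly over the ball: the correct form (the paper's Lemma~\ref{lem:2}) is $n^{-1}\sum_i|\gamma(t_i)|^2\ge\|\gamma\|^2-c_m(n\lambda^{1/(2m)})^{-1}\|\gamma\|_{m,\lambda}^2$, whose error is relative to $\|\gamma\|_{m,\lambda}^2$ rather than $\|\gamma\|^2$; the desired conclusion $c\,n^{-1}\sum_i\gamma(t_i)^2+\lambda\|\gamma^{(m)}\|^2\ge c''\|\gamma\|_{m,\lambda}^2$ still follows, but only after adding the penalty term, so the inequality should be stated in that combined form. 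Neither issue is a wrong turn, but the first is a real gap in the write-up that the paper's proof shows requires a nontrivial argument.
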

\noindent
The result in Theorem~\ref{thm:1} is local in nature as the objective function \eqref{eq:6} is not convex, unless $\widehat{\alpha}_n = 0$. Similar considerations exist in \citet{Basu:1998}, \citet{Fan:2001} and \citet{Wang:2013}. The limit requirements of the theorem are met, e.g., whenever $\lambda \asymp n^{-2m/(2m+1)}$, in which case we are led to the optimal rate $\|\widehat{g}_n - g_0\|_{m, \lambda}^2 = O_P(n^{-2m/(2m+1)})$. This rate is similar to the rate obtained in nonparametric regression for continuous responses \citep{Kalogridis:2021}. A faster $n^{-4m/(4m+1)}$-rate can be obtained whenever $g_0$ and its derivatives fulfil certain boundary conditions, see \citet{Eg:2009, Kalogridis:2021}. Since we assume that, for all large $n$, $\widehat{\alpha}_n>0$ with high probability, the theorem does not cover penalized likelihood estimators, but these estimators may be covered with similar arguments as in \citep{Mammen:1997, van de Geer:2000} and the same rates of convergence would be obtained.

As for all $\lambda>0$, $\| \cdot \| < \|\cdot\|_{m, \lambda}$ the same rate applies for the more commonly used $\mathcal{L}^2([0,1])$-norm. However, the fact that our results are stated in terms of the stronger $\|\cdot\|_{m,\lambda}$ leads to two notable consequences. First, for $\lambda \asymp n^{-2m/(2m+1)}$ the bound in \eqref{eq:7} immediately yields
\begin{align*}
\sup_{t \in [0,1]}|\widehat{g}_n(t) - g_0(t)| = O_P(n^{(1/2-m)/(2m+1)}),
\end{align*}
which implies that convergence can be made uniform. Although this uniform rate is slower than the $\log^{1/2}(n) n^{-m/(2m+1)}$-rate in \citet[Chapter 21]{Eg:2009} for the classical smoothing spline in case of continuous data with constant variance, this rate is guaranteed for a much broader setting encompassing many response distributions. Secondly, by using Sobolev embeddings we can also describe the rate of convergence of the derivatives of $\widehat{g}_n$ in terms of the $\mathcal{L}^2([0,1])$-metric. These are given in Corollary~\ref{cor:1} below. To the best of our knowledge, these are the first results on uniform convergence and convergence of derivatives in among all robust estimation methods for nonparametric generalized linear models.
\begin{cor}
\label{cor:1}
Assume the conditions of Theorem~\ref{thm:1} hold. Then, for any $\lambda \asymp n^{-2m/(2m+1)}$, the sequence of minimizers in Theorem~\ref{thm:1} satisfies
\begin{align*}
\left\|\widehat{g}_n^{(j)}-g_0^{(j)}\right\|^2 = O_P(n^{-2(m-j)/(2m+1)}), \quad (j = 1, \ldots, m).
\end{align*}
\end{cor}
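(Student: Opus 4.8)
The plan is to upgrade the Sobolev-norm bound furnished by Theorem~\ref{thm:1} to bounds on individual derivatives by means of an interpolation inequality between $\|f\|$ and $\|f^{(m)}\|$. Write $r_n := \widehat{g}_n - g_0 \in \mathcal{W}^{m,2}([0,1])$. As already noted following Theorem~\ref{thm:1}, the choice $\lambda \asymp n^{-2m/(2m+1)}$ meets the limit requirements of that theorem, so it yields
\begin{align*}
\|r_n\|_{m,\lambda}^2 = O_P\!\left(n^{-1}\lambda^{-1/(2m)} + \lambda\right) = O_P\!\left(n^{-2m/(2m+1)}\right).
\end{align*}
Reading off the two pieces of $\|r_n\|_{m,\lambda}^2 = \|r_n\|^2 + \lambda\|r_n^{(m)}\|^2$ already disposes of the endpoint cases $j=0$ and $j=m$.

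The second ingredient is the deterministic estimate that for a constant $C = C(m)$, every $f \in \mathcal{W}^{m,2}([0,1])$, every $\lambda \in (0,1]$ and every $j \in \{0,1,\ldots,m\}$,
\begin{align}
\label{eq:interp}
\|f^{(j)}\|^2 \le C\,\lambda^{-j/m}\,\|f\|_{m,\lambda}^2.
\end{align}
I would derive this from the one-dimensional Gagliardo--Nirenberg interpolation inequality on the bounded interval $[0,1]$, namely $\|f^{(j)}\| \le C\big(\|f^{(m)}\|^{j/m}\|f\|^{1-j/m} + \|f\|\big)$ for $0 \le j \le m$ (see, e.g., \citet{Eg:2009}): since $\|f\|^2 \le \|f\|_{m,\lambda}^2$ and $\lambda\|f^{(m)}\|^2 \le \|f\|_{m,\lambda}^2$ by the definition of $\|\cdot\|_{m,\lambda}$, raising these to the appropriate powers gives $\|f^{(m)}\|^{2j/m}\|f\|^{2(1-j/m)} \le \lambda^{-j/m}\|f\|_{m,\lambda}^2$, while the additive term is absorbed into the same bound because $\lambda^{-j/m} \ge 1$ for $\lambda \le 1$. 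Applying \eqref{eq:interp} to $f = r_n$ and inserting the bound of the first paragraph then yields, for $j = 1, \ldots, m$,
\begin{align*}
\|r_n^{(j)}\|^2 = O_P\!\left(\lambda^{-j/m}\,n^{-2m/(2m+1)}\right) = O_P\!\left(n^{2j/(2m+1)}\,n^{-2m/(2m+1)}\right) = O_P\!\left(n^{-2(m-j)/(2m+1)}\right),
\end{align*}
which is exactly the claimed rate (for $j = m$ it reads $O_P(1)$).

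All the stochastic content is already contained in Theorem~\ref{thm:1}; from there the argument is purely analytic, so the only delicate point is establishing \eqref{eq:interp} in the $\lambda$-weighted form above. In particular one must keep in mind that on a bounded interval the Gagliardo--Nirenberg inequality is not purely multiplicative but carries the extra lower-order term $\|f\|$; this is harmless here since, as the computation shows, its contribution is of the same order and does not spoil the final rate. I do not expect any genuine obstacle beyond this bookkeeping.
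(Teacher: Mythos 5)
Your proof is correct and follows essentially the same route as the paper: both rest on the weighted interpolation estimate $\|f^{(j)}\|^2 \leq c_0\,\lambda^{-j/m}\|f\|_{m,\lambda}^2$ applied to $r_n = \widehat{g}_n - g_0$ together with the rate from Theorem~\ref{thm:1}. The only difference is cosmetic: the paper cites this embedding directly from \citet[Chapter~13, Lemma~2.17]{Eg:2009} in the form $\{\|f\|^2 + \lambda^{j/m}\|f^{(j)}\|^2\}^{1/2} \leq c_0\|f\|_{m,\lambda}$, whereas you rederive it from the Gagliardo--Nirenberg inequality, correctly handling the lower-order additive term via $\lambda^{-j/m} \geq 1$.
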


\subsection{Penalized spline type estimators}

The assumptions for the penalized spline type estimators are for the most part identical to those for smoothing spline type estimators. In addition to (A1)--(A5), we require the following assumptions.
\begin{itemize}
\item[(B6)] The number of knots $K = K_n \to \infty$ and there exists a $\delta^{\prime}>0$ such that $n^{\delta^{\prime}-1} K^2 \to 0$ as $n \to \infty$.
\item[(B7)] Let $h_i = x_i-x_{i-1}$ and $h = \max_i h_i$. Then, $\max_i |h_{i+1}-h_i| = o(K^{-1})$ and there exists a finite $M>0$ such that $(h/\min_i h_i) \leq M$.
\item[(B8)] Let $Q_n$ denote the empirical distribution of  the design points $t_i, i = 1, \ldots, n$. Then, there exists a distribution function $Q$ with corresponding Lebesgue density bounded away from zero and infinity such that $\sup_{t \in [0,1]}|Q_n(t) - Q(t)| = o(K^{-1})$.
\end{itemize}

Assumptions (B6)--(B8) have been extensively used in the treatment of penalized spline estimators, \citep[see, e.g.,][]{Claeskens:2009, Kau:2009, Xiao:2019}. Assumption (B6) imposes a weak condition on the rate of growth of the interior knots which is not restrictive in practice, as it is in line with the primary motivation behind penalized spline type estimators. Assumption (B7) concerns the placement of the knots and is met if the knots are equidistant, for example. Finally, assumption (B8) is the lower-rank equivalent of assumption (A6) and holds in similar settings.

Recall that $S_{K}^p([0,1])$ is a finite-dimensional space and, for any $f: [0,1] \to \mathbbm{R}$ which is continuous or a step function, $\|f\| = 0$ implies that $f = 0$. Consequently, endowing $S_{K}^p([0,1])$ with $\langle \cdot, \cdot \rangle_{m,\lambda}$ makes it a Hilbert space. Proposition~\ref{prop:1} shows that $S_{K}^p([0,1])$ is an RKHS allowing us to draw direct parallels between $\mathcal{W}^{m,2}([0,1])$ and its dense subspace $S_{K}^p([0,1])$.
\begin{prop}
\label{prop:1}
If assumptions (B6)--(B7) hold and $p>m \geq 1$, then there exists a positive constant $c_0$ such that for every $f \in S_{K}^p([0,1])$, $K \geq 1$ and $\lambda \in [0,1]$ it holds that
\begin{align*}
\sup_{t\in [0,1]} |f(t)| \leq c_0 \min\{K^{1/2}, \lambda^{-1/(4m)}\}\|f\|_{m,\lambda},
\end{align*}
where for $\lambda = 0$ we define $\lambda^{-1/(4m)} = \infty$.
\end{prop}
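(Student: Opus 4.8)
The plan is to prove the two bounds $\sup_{t}|f(t)| \le c_0 \lambda^{-1/(4m)}\|f\|_{m,\lambda}$ and $\sup_t |f(t)| \le c_0 K^{1/2}\|f\|_{m,\lambda}$ separately, then combine them by taking the smaller of the two. The first bound requires no new argument: since $p>m$, every $f \in S_{K}^p([0,1])$ has $p-2 \ge m-1$ continuous derivatives and a piecewise-polynomial, hence bounded and square-integrable, $m$-th derivative, so $S_{K}^p([0,1]) \subset \mathcal{W}^{m,2}([0,1])$ and the embedding inequality \eqref{eq:7} applies verbatim for $\lambda \in (0,1]$. For $\lambda = 0$ the convention $\lambda^{-1/(4m)} = \infty$ makes $\min\{K^{1/2}, \lambda^{-1/(4m)}\} = K^{1/2}$, so it remains only to establish $\sup_t|f(t)| \le c_0 K^{1/2}\|f\|_{m,\lambda}$, valid for all $\lambda \in [0,1]$.

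Since $\|f\|^2 \le \|f\|_{m,\lambda}^2$ for every $\lambda \ge 0$, it suffices to show $\sup_{t \in [0,1]}|f(t)| \le c\, K^{1/2}\|f\|$ with a constant $c$ depending only on $p$ and on the mesh-ratio bound $M$ from (B7). I would argue piecewise. On each subinterval $I_i = [x_{i-1},x_i]$, $i = 1,\dots,K+1$, the restriction $f|_{I_i}$ is a polynomial of degree at most $p-1$, so the key ingredient is a Nikolskii-type inverse inequality: there is a constant $C_p$ depending only on $p$ such that, for any interval $I$ and any polynomial $q$ of degree at most $p-1$, $\sup_{I}|q| \le C_p |I|^{-1/2}\|q\|_{\mathcal{L}^2(I)}$. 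This is obtained by the affine map sending $I$ onto $[0,1]$ together with the equivalence of the supremum norm and the $\mathcal{L}^2$-norm on the finite-dimensional space of polynomials of degree at most $p-1$.

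Applying this on each piece and bounding the pieces globally yields
\begin{align*}
\sup_{t \in [0,1]}|f(t)| = \max_{i} \sup_{I_i}|f| \le C_p \big(\min_i h_i\big)^{-1/2}\max_i \|f\|_{\mathcal{L}^2(I_i)} \le C_p\big(\min_i h_i\big)^{-1/2}\|f\|,
\end{align*}
where $h_i = x_i - x_{i-1}$. By (B7), $\min_i h_i \ge h/M$ with $h = \max_i h_i$, and since the $h_i$ sum to $1$ over the $K+1$ subintervals we have $h \ge 1/(K+1)$; hence $(\min_i h_i)^{-1/2} \le M^{1/2}(K+1)^{1/2} \le (2M)^{1/2}K^{1/2}$ for $K \ge 1$. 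This gives the claim with $c = C_p(2M)^{1/2}$, and taking $c_0$ to be the larger of this $c$ and the constant appearing in \eqref{eq:7} finishes the argument.

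The only point requiring care is the uniformity of the Nikolskii inequality, namely that $C_p$ may be chosen independent of the interval $I$ and therefore of $K$; after rescaling to a fixed reference interval this is immediate from finite-dimensionality, so I do not anticipate a serious obstacle, and everything else is bookkeeping with the knot mesh, for which (B7) is precisely designed. Note that (B6) plays no role in this particular estimate; only the mesh condition (B7) is used.
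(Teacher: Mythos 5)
Your argument is correct, but it takes a different route from the paper's. The paper also reduces matters to the bound $\sup_t|f(t)|\leq c_0K^{1/2}\|f\|_{m,\lambda}$ via the inclusion $S_K^p([0,1])\subset\mathcal{W}^{m,2}([0,1])$ and \eqref{eq:7}, but it then expands $f=\sum_j f_jB_j$ in the B-spline basis, uses the Schwarz inequality together with the partition-of-unity and boundedness properties of B-splines to get $|f(t)|^2\leq\sum_j|f_j|^2$, and invokes the basis-stability result of Lemma 6.1 in \citet{Zhou:1998} to bound $\sum_j|f_j|^2$ by $c_0K\|f\|^2$; as written, that chain passes through the design distribution $Q$ of (B8), even though the proposition only assumes (B6)--(B7). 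Your proof instead works locally: a Nikolskii-type inverse inequality for polynomials of degree at most $p-1$ on each knot interval, obtained by affine rescaling and norm equivalence on a finite-dimensional space, followed by the mesh-ratio bound of (B7) and the pigeonhole observation $\max_ih_i\geq1/(K+1)$. This is basis-free, fully self-contained, and makes transparent that only (B7) (not (B6) or (B8)) is needed, which if anything is a cleaner match to the hypotheses actually stated in the proposition; the paper's approach, in exchange, reuses a standard lemma that is needed elsewhere in its penalized-spline analysis. Both arguments ultimately rest on quasi-uniformity of the knots and yield the same $K^{1/2}$ growth, so the two proofs are interchangeable here.
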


Proposition~\ref{prop:1} implies the existence of a symmetric function $\mathcal{R}_{m,K,\lambda}: [0,1] \times [0,1] \to \mathbbm{R}$ depending on $K$, $\lambda$, $m$ and $p$ such that, for every $y \in [0,1]$ the map $x \mapsto \mathcal{R}_{m,K,\lambda}(x,y) \in S_{K}^p([0,1])$ and for every  $f \in S_{K}^p([0,1])$, $f(t) = \langle \mathcal{R}_{m,K,\lambda}(t, \cdot), f \rangle_{m,\lambda}$. 
Hence, $\mathcal{R}_{m,K,\lambda}$ is the reproducing kernel which can be derived explicitly in this setting. Let $\mathbf{H}_p$ denote the $(K+p) \times (K+p)$ matrix consisting of the inner products $\langle B_i, B_j \rangle$ with $B_1, \ldots, B_{K+p}$, the B-spline functions of order $p$, and let $\mathbf{P}_{m}$ denote the penalty matrix with elements $\langle B_i^{(m)}, B_j^{(m)} \rangle$.
Set $\mathbf{G}_{\lambda} = \mathbf{H}_p + \lambda \mathbf{P}_m$, then
\begin{align*}
\mathcal{R}_{m,K,\lambda}(x,y) = \mathbf{B}^{\top}(x) \mathbf{G}_{\lambda}^{-1} \mathbf{B}(y).
\end{align*}
Since, for $f, g \in S_{K}^p([0,1])$ we have $\langle f, g \rangle_{m,\lambda} = \mathbf{f}^{\top} \mathbf{G}_{\lambda} \mathbf{g}$, with $\mathbf{f},\mathbf{g} \in \mathbbm{R}^{K+p}$ the vectors of scalar coefficients, it is easy to see that $\mathcal{R}_{m,K,\lambda}$ satisfies the required properties.

Since $S_{K}^p([0,1]) \subset \mathcal{W}^{m,2}([0,1])$ for $p>m$ it follows from \eqref{eq:7} that for all $f \in S_{K}^p([0,1])$ it holds that $\sup_{t\in [0,1]} |f(t)| \leq c_0 \lambda^{-1/(4m)}\|f\|_{m,\lambda}$. However, the bound in Proposition~\ref{prop:1} is tighter if $K = o(\lambda^{-1/(2m)})$, that is, if the rate of growth of $K$ is outpaced by the rate of decay of $\lambda^{1/(2m)}$. This suggests that the rate of growth of $K$ and the rate of decay of $\lambda$ jointly determine the asymptotic properties of penalized spline estimators. This relation is formalized in Theorem~\ref{thm:2}.

\begin{theorem}
\label{thm:2} Suppose that assumptions (A1)--(A5) and (B6)--(B8) hold and assume that $\lambda \to 0$ in a such way that $K \min\{\lambda^2 K^{2m}, \lambda \} \to 0$ and $n \max\{K^{-1}, \lambda^{1/(2m} \} \exp[ - \min\{K^2, \lambda^{-1/m}\} ] \to 0 $ as $n \to \infty$. Then, if $g_0 \in \mathcal{C}^{j}([0,1])$ with $m \leq j \leq p$,	
\begin{align*}
\lim_{D \to \infty} \liminf_{n \to \infty} & \Pr[  \text{there exists a sequence of local minimizers $\widehat{g}_n$ of \eqref{eq:6} satisfying} \\   &  \left\|\widehat{g}_n - g_0\right\|^2 \leq D(n^{-1}\min\{K, \lambda^{-1/(2m)}\} + \min\{ \lambda^2 K^{2m}, \lambda \} + K^{-2j})   ] = 1.
\end{align*}
\end{theorem}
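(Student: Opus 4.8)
The plan is to transfer the argument behind Theorem~\ref{thm:1} to the finite-dimensional space $S_{K}^p([0,1])$, exploiting the reproducing kernel $\mathcal{R}_{m,K,\lambda}$ from Proposition~\ref{prop:1} and keeping track throughout of the interplay between the two ``dimensions'' $K$ and $\lambda^{-1/(2m)}$. Abbreviate $\rho_n = n^{-1}\min\{K,\lambda^{-1/(2m)}\} + \min\{\lambda^2K^{2m},\lambda\} + K^{-2j}$, write $\psi_\alpha = \partial l_\alpha/\partial\theta$, and let $L_n$ denote the objective in~\eqref{eq:6}. First I would fix a deterministic B-spline quasi-interpolant $g_* \in S_{K}^p([0,1])$ of $g_0$; since $g_0 \in \mathcal{C}^j([0,1])$ with $m \le j \le p$, standard spline approximation theory under (B7) gives $\|g_* - g_0\|_\infty = O(K^{-j})$ and $\|g_*^{(m)}\| = O(1)$. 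Because $\|g_*-g_0\|^2 = O(K^{-2j}) = O(\rho_n)$, it suffices to prove that, with probability tending to one, \eqref{eq:6} admits a local minimizer $\widehat g_n$ with $\|\widehat g_n - g_*\|_{m,\lambda}^2 \le D\rho_n$ for some large $D$; then $\|\widehat g_n - g_0\|^2 \le 2\|\widehat g_n - g_*\|^2 + 2\|g_*-g_0\|^2 = O_P(\rho_n)$ follows and, letting $D\to\infty$, gives the stated conclusion.

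To produce such a minimizer I would run the usual boundary argument. Writing $g = g_* + \gamma$ with $\gamma \in S_{K}^p([0,1])$, it is enough to show that for $D$ large $\Pr[\,L_n(g_*+\gamma) > L_n(g_*)\ \text{for all}\ \gamma \in S_{K}^p([0,1])\ \text{with}\ \|\gamma\|_{m,\lambda}^2 = D\rho_n\,] \to 1$, because closed balls of the finite-dimensional space $S_{K}^p([0,1])$ are compact and $L_n$ is continuous, so this forces the minimizer of $L_n$ over such a ball to lie in its interior, hence to be a local minimizer of~\eqref{eq:6}. A second-order Taylor expansion in $\gamma$ gives $L_n(g_*+\gamma) - L_n(g_*) = A_n(\gamma) + B_n(\gamma)$, where $A_n(\gamma) = n^{-1}\sum_{i=1}^n \psi_{\widehat\alpha_n}(Y_i,g_*(t_i))\gamma(t_i) + 2\lambda\langle g_*^{(m)},\gamma^{(m)}\rangle$ is linear in $\gamma$ and $B_n(\gamma) = \tfrac12 n^{-1}\sum_{i=1}^n \partial^2_\theta l_{\widehat\alpha_n}(Y_i,\theta_i^\dagger)\gamma(t_i)^2 + \lambda\|\gamma^{(m)}\|^2$ with $\theta_i^\dagger$ between $g_*(t_i)$ and $g(t_i)$. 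On the sphere, $\|\gamma\|_\infty \le c_0\min\{K^{1/2},\lambda^{-1/(4m)}\}\|\gamma\|_{m,\lambda} \to 0$ by Proposition~\ref{prop:1}, (B6) and the rate assumptions, so each $\theta_i^\dagger$ lies in a shrinking neighbourhood of $\theta_{0,i}$; combining this with (A2), the smoothness in (A3)--(A4), the strictly positive lower bound $\mathbb{E}\{\partial^2_\theta l_{\alpha_0}(Y_i,\theta_{0,i})\} = (1+\alpha_0)\mathbb{E}\{f^{\alpha_0}_{\theta_{0,i}}(Y_i)|u_{\theta_{0,i}}(Y_i)|^2\} \ge (1+\alpha_0)c_0 > 0$ on the population curvature supplied by (A5), and the spline norm equivalence $n^{-1}\sum_i\gamma(t_i)^2 \asymp \|\gamma\|^2$ together with matrix concentration for the random quadratic form (under (B6)--(B8)), one obtains $B_n(\gamma) \ge c\|\gamma\|_{m,\lambda}^2$ for some $c > 0$ with probability tending to one.

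The heart is then the bound $\sup_{\|\gamma\|_{m,\lambda}\le 1}|A_n(\gamma)| = O_P(\sqrt{\rho_n})$. I would split $\psi_{\widehat\alpha_n}(Y_i,g_*(t_i))$ into a centred part $\xi_i$ and its mean. For the centred part, the reproducing property $\gamma(t_i) = \langle\mathcal{R}_{m,K,\lambda}(t_i,\cdot),\gamma\rangle_{m,\lambda}$ turns $n^{-1}\sum_i\xi_i\gamma(t_i)$ into $\langle V_n,\gamma\rangle_{m,\lambda}$ with $V_n = n^{-1}\sum_i\xi_i\mathcal{R}_{m,K,\lambda}(t_i,\cdot)$; since $\mathbb{E}(\xi_i^2) \le C$ by (A5) and the uniform boundedness in (A3), and $n^{-1}\sum_i\mathcal{R}_{m,K,\lambda}(t_i,t_i) = \Tr(\mathbf{G}_\lambda^{-1}n^{-1}\mathbf{B}^\top\mathbf{B}) \asymp \Tr\{(\mathbf{H}_p + \lambda\mathbf{P}_m)^{-1}\mathbf{H}_p\} \asymp \min\{K,\lambda^{-1/(2m)}\}$ — the effective degrees of freedom, from the standard spectral estimate for the pencil $(\mathbf{P}_m,\mathbf{H}_p)$ under (B7)--(B8) — we get $\|V_n\|_{m,\lambda} = O_P((n^{-1}\min\{K,\lambda^{-1/(2m)}\})^{1/2})$, and a uniform-in-$\alpha$ version near $\alpha_0$ (using the Lipschitz-in-$\alpha$ control of $\psi_\alpha$ from (A3) for a chaining argument over the one-dimensional $\alpha$-direction) absorbs the randomness of $\widehat\alpha_n$ on the event $\{|\widehat\alpha_n-\alpha_0|\le\delta\}$, whose probability tends to one by (A2). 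For the mean part, Fisher consistency holds for \emph{every} $\alpha$, so $\mathbb{E}\{\psi_a(Y_i,\theta_{0,i})\} = 0$ uniformly in $a$, whence a first-order expansion gives $|\mathbb{E}\{\psi_a(Y_i,g_*(t_i))\}| \le C|g_*(t_i)-g_0(t_i)| = O(K^{-j})$ uniformly for $a$ near $\alpha_0$; thus the mean part contributes $O(K^{-j})\|\gamma\|_{m,\lambda}$, and — crucially — the $\alpha$-dependence never enters multiplied by anything that is not already small, which is exactly why (A2) needs no rate. Finally the penalty cross term is estimated in two ways: $|2\lambda\langle g_*^{(m)},\gamma^{(m)}\rangle| \le 2\lambda^{1/2}\|g_*^{(m)}\|\,\|\gamma\|_{m,\lambda}$ and also, via the spline inverse inequality $\|\gamma^{(m)}\| \le c'K^m\|\gamma\|$ (valid under (B7) since $m < p$), $|2\lambda\langle g_*^{(m)},\gamma^{(m)}\rangle| \le 2c'\lambda K^m\|g_*^{(m)}\|\,\|\gamma\|_{m,\lambda}$, and retaining the better of the two produces the $\sqrt{\min\{\lambda^2K^{2m},\lambda\}}$ contribution. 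Collecting everything, on $\|\gamma\|_{m,\lambda}^2 = D\rho_n$ we obtain $L_n(g_*+\gamma) - L_n(g_*) \ge cD\rho_n - O_P(\sqrt{\rho_n})\sqrt{D\rho_n} > 0$ for $D$ large, which completes the argument.

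I expect the main obstacle to be, as in Theorem~\ref{thm:1}, the deterministic bookkeeping — in particular the two-regime spectral estimate $\Tr\{(\mathbf{H}_p+\lambda\mathbf{P}_m)^{-1}\mathbf{H}_p\} \asymp \min\{K,\lambda^{-1/(2m)}\}$ together with the Gram-matrix norm equivalence and the inverse inequality, all resting on sharp estimates for the B-spline matrices $\mathbf{H}_p$ and $\mathbf{P}_m$ under (B6)--(B8). A secondary technical point is making the stochastic bounds uniform for possibly unbounded responses (e.g. Poisson, Gaussian), which is where the rate hypothesis $n\max\{K^{-1},\lambda^{1/(2m)}\}\exp[-\min\{K^2,\lambda^{-1/m}\}]\to 0$ plays the role that $n\lambda^{1/(2m)}\exp[-\lambda^{-1/m}]\to 0$ plays in the proof of Theorem~\ref{thm:1}, with $\min\{K^2,\lambda^{-1/m}\}$ now the inverse squared equivalent bandwidth of the penalized spline smoother.
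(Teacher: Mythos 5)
Your proposal reproduces the paper's overall architecture: a spline quasi-interpolant $g_*$ of $g_0$ with $\|g_*-g_0\|_{\infty}=O(K^{-j})$, a boundary argument on the sphere $\|\gamma\|_{m,\lambda}^2=D\rho_n$, and a decomposition into a curvature term bounded below by $c\|\gamma\|_{m,\lambda}^2$, a stochastic linear term of order $(n^{-1}\min\{K,\lambda^{-1/(2m)}\})^{1/2}\|\gamma\|_{m,\lambda}$, a deterministic bias contribution of order $K^{-j}\|\gamma\|_{m,\lambda}$, and a penalty cross term of order $(\min\{\lambda^2K^{2m},\lambda\})^{1/2}\|\gamma\|_{m,\lambda}$; the final assembly is identical. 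Two choices genuinely differ from the paper. First, the paper keeps the score centred at the true $g_0(t_i)$ and absorbs the approximation error $R(t_i)=s_{g_0}(t_i)-g_0(t_i)$ into the limits of integration of the quadratic term, whereas you centre at $g_*(t_i)$ and control the resulting nonzero mean by Fisher consistency plus a first-order expansion; both yield the $K^{-j}$ term. Second, and more substantially, for the centred linear term the paper bounds the metric entropy of the ball in two ways (a Sobolev bound of order $\lambda^{-1/(2m)}\epsilon^{-1/m}$ and a finite-dimensional bound of order $(K+p)\log(1/\epsilon)$), takes the better of the two in the Dudley integral, and applies Corollary 8.3 of \citet{van de Geer:2000}; you instead write the term as $\langle V_n,\gamma\rangle_{m,\lambda}$ with $V_n=n^{-1}\sum_i\xi_i\mathcal{R}_{m,K,\lambda}(t_i,\cdot)$ and compute $\mathbb{E}\|V_n\|_{m,\lambda}^2\lesssim n^{-1}\Tr\{\mathbf{G}_{\lambda}^{-1}\mathbf{H}_p\}\asymp n^{-1}\min\{K,\lambda^{-1/(2m)}\}$. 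The representer/trace route is conceptually cleaner and identifies the variance term as an effective degrees of freedom, but it shifts the burden onto the Demmler--Reinsch-type spectral estimate for the pencil $(\mathbf{P}_m,\mathbf{H}_p)$, which the paper's entropy route deliberately avoids having to prove.

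Two steps are thinner than your sketch suggests. (i) The second-moment bound on $\|V_n\|_{m,\lambda}$ gives only Chebyshev-type (polynomial) tails, but since (A2) supplies no rate for $\widehat{\alpha}_n$ you must take a supremum over an $\alpha$-grid whose cardinality grows with $n$, and a union bound with polynomial tails does not close. You need an exponential inequality for the Hilbert-space norm of $V_n(\alpha_k)$ (e.g.\ a Pinelis--Bernstein bound exploiting $\|\xi_i\mathcal{R}_{m,K,\lambda}(t_i,\cdot)\|_{m,\lambda}\le c_0\min\{K^{1/2},\lambda^{-1/(4m)}\}$); this is precisely where the exponential rate hypothesis of the theorem must enter, and as written your argument never invokes it for the linear term. (ii) The lower bound $B_n(\gamma)\ge c\|\gamma\|_{m,\lambda}^2$ requires concentration of the random quadratic form uniformly over the sphere and over $\alpha$; an appeal to ``matrix concentration'' does not suffice because $\partial_{\theta}^2 l_{\widehat{\alpha}_n}(Y_i,\theta_i^{\dagger})$ is evaluated at a $\gamma$-dependent intermediate point, which is why the paper treats this remainder as an empirical process indexed by $(g,\alpha)$ and applies Theorem 5.11 of \citet{van de Geer:2000} with a bracketing entropy of order $(K+p)\log(1/\epsilon)$. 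Neither issue invalidates your approach, but both force you back onto essentially the same empirical-process machinery that the RKHS representation was meant to bypass.
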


Theorem~\ref{thm:2} presents the $\mathcal{L}^2([0,1])$-error as a decomposition of three terms accounting for the variance, the modelling bias and the approximation bias of the estimator, respectively. It is interesting to observe that, apart from the term $K^{-2j}$ stemming from the approximation of a generic $\mathcal{C}^{j}([0,1])$-function with a spline, the error simultaneously depends on $K$ and $\lambda$, highlighting the intricate interplay between knots and penalties in the asymptotics of penalized spline estimators. 

For $K< \lambda^{-1/(2m)}$ Theorem~\ref{thm:2} leads to the regression spline asymptotics established by \citet{Claeskens:2009, Xiao:2019} for Gaussian responses (see also \citet{Kim:2004}) whereas for  $g_0 \in \mathcal{C}^m([0,1])$ and $K \geq \lambda^{-1/(2m)}$ we obtain
\begin{align*}
\left\|\widehat{g}_n - g_0\right\|^2 = O_P(n^{-1}\lambda^{-1/(2m)}) + O_P(\lambda)  + O_P(K^{-2m}),
\end{align*}
which, apart from the approximation error $K^{-2m}$, corresponds to the $\mathcal{L}^2([0,1])$ error decomposition of smoothing spline estimators in Theorem~\ref{thm:1}. This fact has important practical implications, as it allows for an effective dimension reduction without any theoretical side effects. Indeed, taking $\lambda \asymp n^{-2m/(2m+1)}$ and $K \asymp n^{\gamma}$ for any $\gamma \geq 1/(2m+1)$ leads to $\|\widehat{g}_n - g_0\|^2 = O_P(n^{-2m/(2m+1)})$, which is the same rate of convergence as for smoothing spline estimators. Moreover, with this choice of tuning parameters, the convergence rates of the derivatives given in Corollary~\ref{cor:1} carry over to the present lower-rank estimators, as summarized in Corollary~\ref{cor:2}.
\begin{cor}
\label{cor:2}
Assume the conditions of Theorem~\ref{thm:2} hold and $g_0 \in \mathcal{C}^m([0,1])$. Then, for any $\lambda \asymp n^{-2m/(2m+1)}$ and $K \asymp n^{\gamma}$ with $\gamma \geq 1/(2m+1)$,
\begin{align*}
\left\|\widehat{g}_n^{(j)}-g_0^{(j)}\right\|^2 = O_P(n^{-2(m-j)/(2m+1)}), \quad (j = 1, \ldots, m).
\end{align*}
\end{cor}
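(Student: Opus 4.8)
The plan is to reduce the statement for the derivatives to the $\mathcal{L}^2$-bound for $\widehat g_n$ itself that Theorem~\ref{thm:2} already provides, combined with the approximation-theoretic properties of spline spaces. First I would fix $\lambda \asymp n^{-2m/(2m+1)}$ and $K \asymp n^{\gamma}$ with $\gamma \ge 1/(2m+1)$; one checks quickly that these choices satisfy the growth conditions of Theorem~\ref{thm:2} (here $K \ge \lambda^{-1/(2m)}$, so $\min\{K,\lambda^{-1/(2m)}\} = \lambda^{-1/(2m)}$ and $\min\{\lambda^2K^{2m},\lambda\} = \lambda$), and that under $g_0 \in \mathcal{C}^m([0,1])$ the approximation term $K^{-2m}$ is dominated by $n\lambda^{1/(2m)}$, so Theorem~\ref{thm:2} delivers $\|\widehat g_n - g_0\|^2 = O_P(n^{-2m/(2m+1)}) = O_P(\lambda)$. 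It will also be convenient to record, from the proof machinery of Theorem~\ref{thm:2}, the companion bound on the penalty: $\lambda\|\widehat g_n^{(m)} - s_0^{(m)}\|^2 = O_P(\lambda)$ for a suitable spline approximant $s_0 \in S_K^p([0,1])$ of $g_0$, equivalently $\|\widehat g_n - g_0\|_{m,\lambda}^2 = O_P(\lambda)$, which is what the $\|\cdot\|_{m,\lambda}$-control in the proof actually yields. In particular $\|\widehat g_n^{(m)} - g_0^{(m)}\| = O_P(1)$, giving the endpoint case $j = m$.

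The core step is an interpolation (Landau--Kolmogorov / Sobolev) inequality for functions in $S_K^p([0,1])$: for $0 \le j \le m$ and $f \in S_K^p([0,1])$,
\begin{align*}
\|f^{(j)}\| \le C\left( \|f\|^{1-j/m} \|f^{(m)}\|^{j/m} + K^{j}\|f\| \right),
\end{align*}
or, more usefully in the penalized form, $\|f^{(j)}\|^2 \le C(\lambda^{-j/m}\|f\|_{m,\lambda}^2 + \|f\|^2)$ when restricted to the relevant spline space; such inverse-type inequalities for splines under the quasi-uniform knot condition (B7) are standard (cf.\ de Boor, or the arguments in \citealp{Eg:2009, Xiao:2019}). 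Applying this to $f = \widehat g_n - s_0$ and then adding the deterministic error $\|s_0^{(j)} - g_0^{(j)}\| = O(K^{-(m-j)}) = O(\lambda^{(m-j)/(2m)})$ coming from standard spline approximation of a $\mathcal{C}^m$-function, I obtain
\begin{align*}
\|\widehat g_n^{(j)} - g_0^{(j)}\|^2 = O_P\!\left(\lambda^{-j/m}\cdot\lambda\right) + O_P\!\left(\lambda\right) + O\!\left(\lambda^{(m-j)/m}\right) = O_P\!\left(\lambda^{(m-j)/m}\right),
\end{align*}
and substituting $\lambda \asymp n^{-2m/(2m+1)}$ gives $\lambda^{(m-j)/m} \asymp n^{-2(m-j)/(2m+1)}$, which is the claimed rate.

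The main obstacle is establishing the spline interpolation inequality with constants uniform in $K$ and in $\lambda$ simultaneously, and in a form compatible with the mixed norm $\|\cdot\|_{m,\lambda}$ rather than the plain Sobolev norm. One has to be careful that the inverse inequality $\|f^{(j)}\| \le CK^{j}\|f\|$ for splines is only valid on the spline space (it fails on $\mathcal{W}^{m,2}$), so the decomposition $\widehat g_n - g_0 = (\widehat g_n - s_0) + (s_0 - g_0)$ is essential — the first piece lives in $S_K^p([0,1])$ and is controlled by the inverse inequality plus the $\|\cdot\|_{m,\lambda}$-bound, while the second is purely deterministic approximation. A secondary point requiring care is the choice of $s_0$: it must be the same near-interpolant used inside the proof of Theorem~\ref{thm:2} so that the bound $\|\widehat g_n - s_0\|_{m,\lambda}^2 = O_P(\lambda)$ is genuinely available, and it must simultaneously satisfy $\|s_0^{(j)} - g_0^{(j)}\| = O(K^{-(m-j)})$ for all $j \le m$; a quasi-interpolant of de Boor--Fix type does both. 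Once these ingredients are in place the corollary follows by the short chain of estimates above, exactly parallel to the passage from Theorem~\ref{thm:1} to Corollary~\ref{cor:1}.
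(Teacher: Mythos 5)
Your argument is correct and lands on the same skeleton as the paper's proof: extract the bound $\|\widehat g_n - s_{g_0}\|_{m,\lambda}^2 = O_P(C_n)$ from the proof of Theorem~\ref{thm:2}, note that with $\lambda \asymp n^{-2m/(2m+1)}$ and $K \asymp n^{\gamma}$, $\gamma \geq 1/(2m+1)$, one has $C_n \asymp \lambda$, account for the approximation error of $s_{g_0}$, and then convert control in $\|\cdot\|_{m,\lambda}$ into control of the $j$th derivative via an interpolation inequality that costs a factor $\lambda^{-j/m}$. The one place where you diverge — and where you make a claim that is not quite right — is the assertion that the split $\widehat g_n - g_0 = (\widehat g_n - s_0) + (s_0 - g_0)$ is \emph{essential} because the derivative inequality is only available on the spline space. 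The inequality the paper actually uses, namely $\|f\|^2 + \lambda^{j/m}\|f^{(j)}\|^2 \leq c_0^2\,\|f\|_{m,\lambda}^2$ from \citet[Chapter~13, Lemma~2.17]{Eg:2009}, is a Landau--Kolmogorov/Sobolev interpolation inequality valid for \emph{all} of $\mathcal{W}^{m,2}([0,1])$, not a spline inverse inequality of the form $\|f^{(j)}\| \lesssim K^{j}\|f\|$. Consequently the paper first assembles $\|\widehat g_n - g_0\|_{m,\lambda}^2 = O_P(\lambda)$ — using the triangle inequality together with $\|s_{g_0}^{(m)} - g_0^{(m)}\| = O(1)$ from \citet[p.~155]{DB:2001} to absorb the penalty part of $\|s_{g_0}-g_0\|_{m,\lambda}^2$ — and then applies the interpolation inequality once, to $\widehat g_n - g_0$ directly, exactly as in Corollary~\ref{cor:1}. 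Your route (spline-space inequality on $\widehat g_n - s_0$, plus the deterministic rate $\|s_0^{(j)} - g_0^{(j)}\| = O(K^{-(m-j)})$ for the quasi-interpolant) also gives the stated rate since $K^{-2(m-j)} \leq \lambda^{(m-j)/m}$ under $K \gtrsim \lambda^{-1/(2m)}$, but it requires you to prove the uniform-in-$K$ spline inequality you flag as the main obstacle; the paper's version of the argument sidesteps that obstacle entirely by staying in $\mathcal{W}^{m,2}([0,1])$.
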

While there have been attempts in the past to cast penalized spline estimators in an RKHS framework \citep[e.g., in][]{Pearce:2006}, this was from a computational perspective. To the best of our knowledge, our theoretical treatment of penalized spline estimators based on the theory of RKHS is the first of its kind in both the classical and the robustness literature and may be of independent mathematical interest. The interested reader is referred to the accompanying supplementary material for the technical details.

\section{Practical implementation}
\label{sec:4}

\subsection{Computational algorithm}

The practical implementation of the smoothing and penalized spline estimators based on density power divergence requires a computational algorithm as well as specification of their parameters, namely the tuning parameter $\alpha$, the penalty parameter $\lambda$ and $K$, the  dimension of the spline basis, in case of penalized splines. Practical experience with penalized spline estimators has shown that the dimension of the spline basis is less important than the penalty parameter, provided that $K$ is taken large enough \citep{Ruppert:2003, Wood:2017}. Hence, little is lost by selecting $K$ in a semi-automatic manner. We now discuss a unified computational algorithm for the smoothing and penalized type estimators described in Section~\ref{sec:3} and briefly discuss the selection of $K$.

By using the B-spline basis, the computation of the proposed estimators can be unified. Recall from our discussion in Section~\ref{sec:3} that a solution to \eqref{eq:5} is a natural spline of order $2m$. Assume for simplicity that all the $t_i$ are distinct and that $\min_{i} t_i>0$ and $\max_{i} t_i<1$. Then, the natural spline has $n$ interior knots and we may represent the candidate minimizer $g \in S_{K}^p([0,1])$ as $g = \sum_{k = 1}^{n + 2m} g_k B_{k}$ where the $B_k$ are the B-spline basis functions of order $2m$ supported by the knots at the interior points $t_i$ and the $g_k$ are scalar coefficients \citet[p. 102]{DB:2001}. The penalty term now imposes the boundary conditions. The reasoning is as follows: if that were not the case, it would always be possible to find a $2m$th order natural spline $\widehat{h}_n(t)$, written in terms of $B_1, \ldots, B_{K+p}$, that interpolates $\widehat{g}_n(t_i), i = 1, \ldots, n$ leaving the first term in \eqref{eq:5} unchanged, but due to it being a polynomial of order $m$ outside $[\min_{i} t_i, \max_{i} t_i] \subset [0,1]$ we would always have $\|\widehat{h}_n^{(m)}\| <  \|\widehat{g}_n^{(m)}\|$. 

The above discussion shows that to compute either the smoothing spline type or the penalized spline type estimators it suffices to minimize
\begin{align}
\label{eq:8}
L_n(g) =   \sum_{i=1}^n l_{\widehat{\alpha}_n}(Y_i,g(t_i)) + \lambda \int_{[0,1]} |g^{(m)}(t)|^2 dt,
\end{align}
over $g \in S_{K}^p([0,1])$, where in the smoothing spline case the knots satisfy $x_i = t_i, i =1, \ldots, n$ and the order $p = 2m$ while in the penalized spline case $0< x_1 < \ldots < x_K <1$ and $p > m$. By default we set $p = 2m$ in our implementation of the algorithm and use all interior knots, i.e., $x_i = t_i$ for $i \leq n$ if $n \leq 50$. For $n>50$ we adopt a thinning strategy motivated by the theoretical results of the previous section as well as the strategy employed by the popular \textsf{smooth.spline} function in the \textsf{R} language \citep[see][p. 189]{Hastie:2009}. In particular, for $n>50$ we only employ $K \simeq n^{1/(2m+1)}$ interior knots, which we spread in the $[0,1]$-interval in an equidistant manner. While this leads to large computational gains, Theorem~\ref{thm:2} assures that no accuracy is sacrificed. For example, with $m=2$ and $n = 5000$ our strategy amounts to using only 83 knots; a dramatic dimension reduction.

We solve \eqref{eq:8} with a Newton-Raphson procedure, which we initiate from the robust estimates of \citet{Kal:2021} and \citet{Croux:2012}. The updating steps of the algorithm can be reduced to penalized iteratively reweighted least-squares updates, in the manner outlined by \citet[Chapter 5.2.3]{Green:1994}. The weights in our case are given by
\begin{align*}
w_{i} &= (1+\widehat{\alpha}_n)^2 \mathbb{E}_{g_i} \{ |Y_i - b^{\prime}(g_i)|^2 f_{g_i}^{\widehat{\alpha}_n}(Y_i)\} - (1+\alpha) b^{\prime \prime}(g_i) \mathbb{E}_{g_i}\{f_{g_i}^{\widehat{\alpha}_n}(Y_i) \}
\\ & \quad + (1+\widehat{\alpha}_n) b^{\prime \prime}(g_i) f_{g_i}^{\widehat{\alpha}_n}(Y_i) - \widehat{\alpha}_n(1+\widehat{\alpha}_n)|Y_i - b^{\prime}(g_i)|^2 f_{g_i}^{\widehat{\alpha}_n}(Y_i), \quad (i=1, \ldots, n),
\end{align*}
where $g_i = \mathbf{B}^{\top}(t_i) \boldsymbol{g}$ and the vector of ``working" data $\mathbf{z} \in \mathbbm{R}^n$ consists of
\begin{align*}
z_i = g_i - (1+\widehat{\alpha}_n)\frac{ \mathbb{E}_{g_i}\{(Y_i-b^{\prime}(g_i))f_{g_i}^{\widehat{\alpha}_n}(Y_i) \} - \{Y_i-b^{\prime}(g_i)\} f_{g_i}^{\widehat{\alpha}_n}(Y_i)}{w_{i}}.\end{align*}
Alternatively, taking expected values we may replace the weights with $(1+\widehat{\alpha}_n) \mathbb{E}_{g_i}\{|Y_i-b^{\prime}(g_i)|^2f_{g_i}^{\widehat{\alpha}_n}(Y_i)\}$ thereby obtaining a variant of the Fisher scoring algorithm. For $\widehat{\alpha}_n = 0$, these formulae reduce to those for penalized likelihood estimators and canonical links, cf. \citet{Green:1994}.

\subsection{Selection of the tuning and penalty parameters}

The tuning parameter $\alpha$ determines the trade-off between robustness and efficiency of the estimators. Selecting $\alpha$ independent of the data could lead to lack of resistance towards atypical observation or an undesirable loss of efficiency. To determine $\alpha$ in a data-driven way we modify the strategy of \citet{Ghosh:2015} and rely on a suitable approximation of the integrated mean integrated squared error (MISE) of $\widehat{g}_n$, i.e.,  $\mathbb{E}\{\left\|\widehat{g}_{n}- g_0\right\|^2 \}$. To show the dependence of the estimator on $\alpha$ we now denote it by $\widehat{g}_{n,\alpha}$. For each $\alpha \geq 0$ its MISE can be decomposed as 
\begin{align}
\label{eq:9}
\MISE(\alpha) = \left\| \mathbb{E}\{\widehat{g}_{n,\alpha} \} - g_0\right\|^2 + \mathbb{E}\{\left\|\widehat{g}_{n,\alpha} - \mathbb{E}\{\widehat{g}_{n,\alpha}\}\right\|^2\},
\end{align}
where the first term on the RHS represents the integrated squared bias, while the second term is the integrated variance of the estimator. Neither of these terms can be computed explicitly since $\mathbb{E}\{\widehat{g}_{n,\alpha} \}$ and $g_0$ are both unknown. Therefore, we seek approximations. 
Following \citet{War:2005} and \citet{Ghosh:2015}, we replace $\mathbb{E}\{\widehat{g}_{n,\alpha} \}$ in the bias term by $\widehat{g}_{n,\alpha}$ and use a ``pilot" estimator instead of the unknown $g_0$. To limit the influence of aberrant observations on the selection of $\alpha$ we propose to replace $g_0$ by the robust estimate $\widehat{g}_{n,1}$ which minimizes the $L_2$-distance between the densities, as described in Section~\ref{sec:2}.

To approximate the variance term, observe that $\mathbb{E} \{\left\|\widehat{g}_{n,\alpha} - \mathbb{E}\{\widehat{g}_{n,\alpha}\}\right\|^2\} = \Tr\{ \mathbf{H}_p \Cov\{\widehat{\mathbf{g}}_{n,\alpha} \} \}$.
Using the notation  
\begin{align*}
l^{\prime}_{\alpha}(Y_i,g(t_i)) = \frac{\partial l_{\alpha}(Y_i,x)}{\partial x} \bigg \vert_{x = g(t_i)}, \quad (i=1, \ldots, n).
\end{align*}
for the first derivatives and analogously for the second derivatives $l^{\prime \prime}_{\alpha}(Y_i,g(t_i))$, define $\mathbf{C}_{\alpha} = \diag\{|l^{\prime}_{\alpha}(Y_1,\widehat{g}_{n,\alpha}(t_1))|^2, \ldots, |l^{\prime}_{\alpha}(Y_n, \widehat{g}_{n,\alpha}(t_n))|^2 \}$ as well as $\mathbf{D}_{\alpha}  = \diag\{l^{\prime \prime }_{\alpha}(Y_1,\widehat{g}_{n,\alpha}(t_1)), \ldots, l^{\prime \prime}_{\alpha}(Y_n, \widehat{g}_{n,\alpha}(t_n)) \}$. A first order Taylor expansion of the score function of \eqref{eq:8} readily yields
\begin{align*}
\Cov\{\widehat{\mathbf{g}}_{n,\alpha}\} \approx \left[ \mathbf{B}^{\top} \mathbf{D}_{\alpha} \mathbf{B} + 2 \lambda \mathbf{P}_m \right]^{-1} \mathbf{B}^{\top} \mathbf{C}_{\alpha} \mathbf{B} \left[ \mathbf{B}^{\top} \mathbf{D}_{\alpha} \mathbf{B} + 2 \lambda \mathbf{P}_m \right]^{-1},
\end{align*}
where $\mathbf{B}$ is the $n \times (K+p)$ spline design matrix with $ij$th element given by $B_j(t_i)$.

Inserting the approximations of the bias and variance in \eqref{eq:9} we obtain the approximate mean-squared error (AMISE) given by
\begin{align*}
\AMISE(\alpha) & =  (\widehat{\mathbf{g}}_{n,\alpha} - \widehat{\mathbf{g}}_{n,1})^{\top} \mathbf{H}_p (\widehat{\mathbf{g}}_{n,\alpha} - \widehat{\mathbf{g}}_{n,1}) \\  & \ +  \Tr\{ \mathbf{H}_p \left[ \mathbf{B}^{\top} \mathbf{D}_{\alpha} \mathbf{B} + 2 \lambda \mathbf{P}_m \right]^{-1} \mathbf{B}^{\top} \mathbf{C}_{\alpha} \mathbf{B} \left[ \mathbf{B}^{\top} \mathbf{D}_{\alpha} \mathbf{B} + 2 \lambda \mathbf{P}_m \right]^{-1} \}.
\end{align*}
We propose to select $\widehat{\alpha}_n$ by minimizing $\AMISE(\alpha)$ over a grid of $20$ equidistant candidate values in $[0,1]$. This grid includes both the maximum likelihood ($\alpha = 0$) and $L_2$-distance estimators ($\alpha  = 1$) as special cases.

The bias approximation and thus the selection of $\alpha$ depends on the pilot estimator  $\widehat{g}_{n,1}$. To reduce this dependence \citet{Basak:2021} propose to iterate the selection procedure. That is, using $\widehat{g}_{n,1}$ as pilot estimator determine the value $\widehat{\alpha}$ minimizing $\AMISE(\alpha)$ and use the corresponding density power divergence estimate as the new ``pilot" estimate instead of $\widehat{g}_{n,1}$. This procedure is repeated until $\widehat{\alpha}$ converges, which in our experience takes between $1$ and $3$ iterations for the vast majority of cases in our setting. 

The computation of the estimator for a given value of $\alpha$ requires an appropriate value of the penalty parameter $\lambda$. To determine $\lambda$, we utilize the Akaike Information Criterion in the form proposed by \citet{Hastie:1990} given by
\begin{align*}
\AIC(\lambda) = 2 \sum_{i=1}^n l_{\alpha}(Y_i,\widehat{g}_{n, \alpha}(t_i)) + 2 \Tr\{ \left[ \mathbf{B}^{\top} \mathbf{D}_{\alpha} \mathbf{B} + 2 \lambda \mathbf{P}_m \right]^{-1} \mathbf{B}^{\top} \mathbf{D}_{\alpha} \mathbf{B} \}.
\end{align*}
Implementations and illustrative examples of the density power divergence smoothing/penalized type spline estimators are available at \url{https://github.com/ioanniskalogridis/Robust-and-efficient-estimation-of-nonparametric-GLMs}.

\section{Finite-sample performance}
\label{sec:5}

We now illustrate the practical performance of the proposed density power divergence spline estimators for GLM with Gaussian, Binomial and Poisson responses. We compare the following estimators.
\begin{itemize}
\item The adaptive density power divergence estimator discussed in the previous Section, denoted by DPD($\widehat{\alpha}$).
\item The $L_2$-distance estimator corresponding to $\alpha=1$, denoted by DPD($1$).
\item The standard penalized maximum likelihood estimator corresponding to $\alpha=0$, abbreviated  as GAM.
\item The robust Huber-type P-spline estimator of \citet{Croux:2012} with 40 B-spline functions, denoted by RQL, as implemented in the \textsf{R}-package \textsf{DoubleRobGam}.
\item The robust local polynomial estimator of \citet{Ali:2011}, denoted by RGAM, as implemented in the \textsf{R}-package \textsf{rgam}.
\end{itemize}
For the first three estimators we use the default settings described in Section~\ref{sec:4} with B-splines of order $p=4$ combined with penalties of order $m=2$. For Gaussian data, we use the resistant Rice-type estimator discussed in \citet{Kalogridis:2021} to estimate $\phi$. This estimate is used for the DPD$(\widehat{\alpha})$, DPD$(1)$ and RQL estimators. Since RGAM is not implemented for Gaussian responses, we have replaced it in this case by the robust local linear estimator obtained with the \textsf{loess} function and reweighting based on Tukey's bisquare, see \citet{Clev:1979}. Both RQL and RGAM are by default tuned for nominal 95\% efficiency for Gaussian data. The penalty parameter is selected via robust AIC for RQL and robust cross-validation for RGAM.

Our numerical experiments assess the performance of the competing estimators on both ideal data and data that contain a small number of atypical observations. For our experiments we set  $t_i = i/(n+1), i = 1, \ldots, n,$ and consider the following two test functions:
\begin{itemize}
\item $g_1(t) = -\sin(25t/6)/0.8 - 1$,
\item $g_2(t) = 1.8 \sin(3.4x^2)$.
\end{itemize}
which were also considered by \citet{Ali:2011} and \citet{Croux:2012}. For Gaussian responses we generate each $Y_i$ from a mixture of two Gaussian distributions with mixing parameter $\epsilon$, 
equal mean $g_j(t_i)$, $j \in \{1, 2\}$ and standard deviations equal to $1$ and $9$, respectively. This yields a proportion $(1-\epsilon)$ of ideal data distorted with a proportion $\epsilon$ of outliers originating from a Gaussian distribution with heavier tails.

For Binomial and Poisson responses each $Y_i$ has mean $\mu_j(t_i) = G^{-1}(g_j(t_i))$ for $j=1$ or $2$ with $G$ the respective canonical links. We then contaminate a fraction $\epsilon$ of the responses as follows. For Binomial data $Y_i$ is set to $0$ if the original value was equal to $1$ and vice versa. For Poisson data $Y_i$ is replaced by a draw from a Poisson distribution with mean (and variance) equal to $3\mu_j(t_i)$. In the Binomial case, this contamination represents the frequently occurring situation of missclassified observations while we have  a small number of overdispersed observations in the Poisson case. The contamination fraction $\epsilon$ takes values in $\{0, 0.05, 0.1\}$ reflecting increasingly severe contamination scenarios. For each setting we generated $1000$ samples of size $n = 200$. We evaluate the performance of the competing estimators via their mean-squared error (MSE), given by $\sum_{i=1}^n |\widehat{\mu}_i - \mu_i|^2/n$. 
Since, MSE distributions are right-skewed in general, we report both the average and median of the MSEs.
Tables 1--3 report the results for data with Gaussian, Binomial and Poisson responses, respectively.
\begin{table}[h]
\centering
\caption{Mean and median MSEs ($\times 100$) for Gaussian responses from 1000 simulated datasets with $n = 200$.}
\label{tab:1}
\resizebox{\textwidth}{!}{%
\begin{tabular}{c c cc cc cc cc cc cc}
& & \multicolumn{2}{c}{DPD($\widehat{\alpha}$)} & \multicolumn{2}{c}{DPD(1)} &   \multicolumn{2}{c}{GAM} & \multicolumn{2}{c}{RQL} &  \multicolumn{2}{c}{RGAM} \\ \\[-2ex]
$g_0$ & $\epsilon$ & Mean & Median & Mean & Median & Mean & Median & Mean & Median & Mean & Median
\\ \\ [-1.2ex]
\multirow{3}{*}{$g_1$} & 0  & 2.98 & 2.28 & 5.29  & 4.29  & 2.81 & 2.22 &  3.23 & 2.72 & 2.93 & 2.31  \\ \\[-2ex]
&  0.05 & 3.56 & 2.84 & 5.95  & 4.56  & 12.35 & 8.87 & 4.68 & 4.03 & 12.65 & 9.24  \\ \\[-2ex]
&  0.1 &  4.48 & 3.49 & 5.93 & 4.98 &  20.99 & 15.72 & 7.85 & 6.54 & 22.89 & 16.90
\\ \\[-1.2ex]
\multirow{3}{*}{$g_2$} & $0$ & 4.34 & 3.37 & 7.57 & 6.72 & 3.40 & 2.66 & 4.64 & 3.91 & 3.38 & 2.75  \\ \\[-2ex]
&  $0.05$ & 5.21 & 4.23 & 8.27 & 7.37 & 15.86 & 11.16 & 6.72 & 5.93 & 14.80 & 11.25 \\ \\[-2ex]
&  $0.1$ & 6.24 & 5.26 & 8.67 & 7.88 & 28.52 & 19.95 & 11.19 & 10.21 & 25.85 & 19.11
\end{tabular}}
\end{table}

The simulation experiments lead to several interesting findings. For uncontaminated Gaussian data, Table~\ref{tab:1} shows that 
GAM and RGAM perform slightly better than DPD$(\widehat{\alpha})$ and RQL while the $L_2$-distance estimator DPD$(1)$ is less efficient. However, even with a small amount of contamination, the advantage of GAM and RGAM evaporates. RQL offers more protection against outlying observations, but it is outperformed by DPD($\widehat{\alpha})$ and also by DPD(1) in case of more severe contamination. This difference may be explained by the use of the monotone Huber function in RQL versus the redescending squared exponential loss function in DPD \citep{Maronna:2019}.

As seen in Table~\ref{tab:2}, for binomial data DPD($\widehat{\alpha})$ remains highly efficient and robust. In ideal data, DPD$(\widehat{\alpha})$ even attains a lower mean and median MSE than GAM, although this is likely an artefact of sampling variability. A surprising finding here is the exceptionally good performance of DPD$(1)$ in clean data. It outperforms both RQL and RGAM, performing almost as well as DPD($\widehat{\alpha})$. This fact suggests that efficiency loss for Binomial responses is much less steep as a function of $\alpha$ than for the Gaussian case. See \citet{Ghosh:2016} for efficiency considerations in the parametric case.

\begin{table}[h]
\caption{Mean and median MSEs ($\times 100$) for Binomial responses from 1000 simulated datasets with $n = 200$.}
\label{tab:2}
\centering
\resizebox{\textwidth}{!}{%
\begin{tabular}{c c cc cc cc cc cc cc}
& & \multicolumn{2}{c}{DPD($\widehat{\alpha}$)} & \multicolumn{2}{c}{DPD(1)} &   \multicolumn{2}{c}{GAM} & \multicolumn{2}{c}{RQL} &  \multicolumn{2}{c}{RGAM} \\ \\[-2ex]
$g_0$ & $\epsilon$ & Mean & Median & Mean & Median & Mean & Median & Mean & Median & Mean & Median
\\ \\ [-1.2ex]
\multirow{3}{*}{$g_1$} & 0 & 0.34 & 0.32 & 0.40 & 0.34 & 0.54  & 0.33  & 0.50 & 0.47 & 0.68 & 0.43 \\ \\[-2ex]
&  0.05 & 0.53 & 0.46 & 0.54 & 0.48 & 0.64  & 0.52 & 0.62 & 0.55 & 0.87 &  0.54   \\ \\[-2ex]
&  0.1 & 0.88 & 0.80 & 0.87 & 0.80 & 1.05 & 0.86   & 0.95 & 0.86 & 1.36 & 1.01
\\ \\[-1.2ex]
\multirow{3}{*}{$g_2$} & $0$ & 0.55 & 0.47 & 0.58 & 0.49 & 0.78 & 0.55 & 0.76 & 0.72 & 0.98 & 0.66  \\ \\[-2ex]
&  $0.05$ & 0.65 & 0.55 & 0.66  &  0.58 & 0.84 & 0.63 & 0.87 & 0.83 & 1.18 & 0.75 \\ \\[-2ex]
&  $0.1$ & 0.90 & 0.82 & 0.91 & 0.82 & 1.10 & 0.86 & 1.05 & 1.02 & 1.38 & 0.98
\end{tabular}}
\end{table}

As evidenced from Table~\ref{tab:3}, the situation for count responses is reminiscent of the situation with Gaussian responses. In clean data, GAM exhibits the best median performance, closely followed by DPD($\widehat{\alpha})$. RQL performs slightly worse than DPD($\widehat{\alpha})$ in clean data, but the difference between these two estimators becomes more pronounced in favour of DPD($\widehat{\alpha})$ the heavier the contamination. In ideal data, DPD(1) performs adequately for $\mu_1(t)$, but is rather inefficient for $\mu_2(t)$. 
RGAM performs better for $\mu_2(t)$ than for $\mu_1(t)$, but is outperformed by DPD($\widehat{\alpha})$ in both cases.
Some further insights on the good performance of DPD($\widehat{\alpha})$ are given in the supplementary material.
\begin{table}[h]
\caption{Mean and median MSEs ($\times 100$) for Poisson responses from 1000 simulated datasets with $n = 200$.}
\label{tab:3}
\centering
\resizebox{\textwidth}{!}{%
\begin{tabular}{c c cc cc cc cc cc cc}
& & \multicolumn{2}{c}{DPD($\widehat{\alpha}$)} & \multicolumn{2}{c}{DPD(1)} &   \multicolumn{2}{c}{GAM} & \multicolumn{2}{c}{RQL} &  \multicolumn{2}{c}{RGAM} \\ \\[-2ex]
$g_0$ & $\epsilon$ & Mean & Median & Mean & Median & Mean & Median & Mean & Median & Mean & Median
\\ \\ [-1.2ex]
\multirow{3}{*}{$g_1$} & 0 & 1.02 & 0.80 & 1.03 & 0.82 & 1.15 & 0.76  & 1.17 & 0.94 & 2.57 & 1.61 \\ \\[-2ex]
&  0.05 & 1.13 & 0.94 & 1.21 & 1.04 & 1.67 &  0.98 & 1.27 &  1.00 & 3.15 & 1.93   \\ \\[-2ex]
&  0.1 & 1.39 & 1.06 & 1.40 & 1.10 & 2.76 & 1.75 & 1.72 & 1.34 & 4.27 & 2.69
\\ \\[-1.2ex]
\multirow{3}{*}{$g_2$} & $0$ & 10.61 & 7.62 & 17.70 & 14.57 & 10.12 & 7.62 & 14.93 & 12.05 & 11.89 & 9.24  \\ \\[-2ex]
&  $0.05$ & 13.53 & 10.22 & 18.61 & 15.82 & 59.32 & 44.28 & 15.81 & 13.97 & 15.34 & 12.01  \\ \\[-2ex]
&  $0.1$ & 17.72 & 14.17 & 20.67 &17.26 & 152.7 & 140.1 & 22.57 & 21.34 & 26.06 & 21.20
\end{tabular}}
\end{table}

\section{Applications}
\label{sec:6}

\subsection{Type 2 diabetes in the U.S.}

The National Health and Nutrition Examination Survey is a series of home interviews and physical examinations of randomly selected individuals conducted by the U.S. Centers for Disease Control and Prevention (CDC), normally on a yearly basis. Due to the Covid-19 pandemic, the cycle of data collection between 2019 and 2020 was not completed and, in order to obtain a nationally representative sample, the CDC combined these measurements with data collected in the 2017--2018 period leading to a dataset consisting of 9737 individuals. We refer to \url{https://wwwn.cdc.gov/nchs/nhanes} and the accompanying documentation for a detailed description of the survey and the corresponding data.

For this example, we study of the relationship between type 2 diabetes and high-density-lipoprotein (HDL) cholesterol as well as the relationship between type 2 diabetes and the amount of glycohemoglobin in one's bloodstream. Measuring the concentration of the latter often constitutes an expedient way of detecting diabetes while low HDL cholesterol values are considered a risk factor. The response variable is binary with value $1$ if the interviewee has been diagnosed with type 2 diabetes and $0$ otherwise, while the covariates are continuous. The response variable is plotted versus either covariate in the left and right panels of Figure~\ref{fig:1}.

\begin{figure}[H]
\centering
\subfloat{\includegraphics[width = 0.495\textwidth]{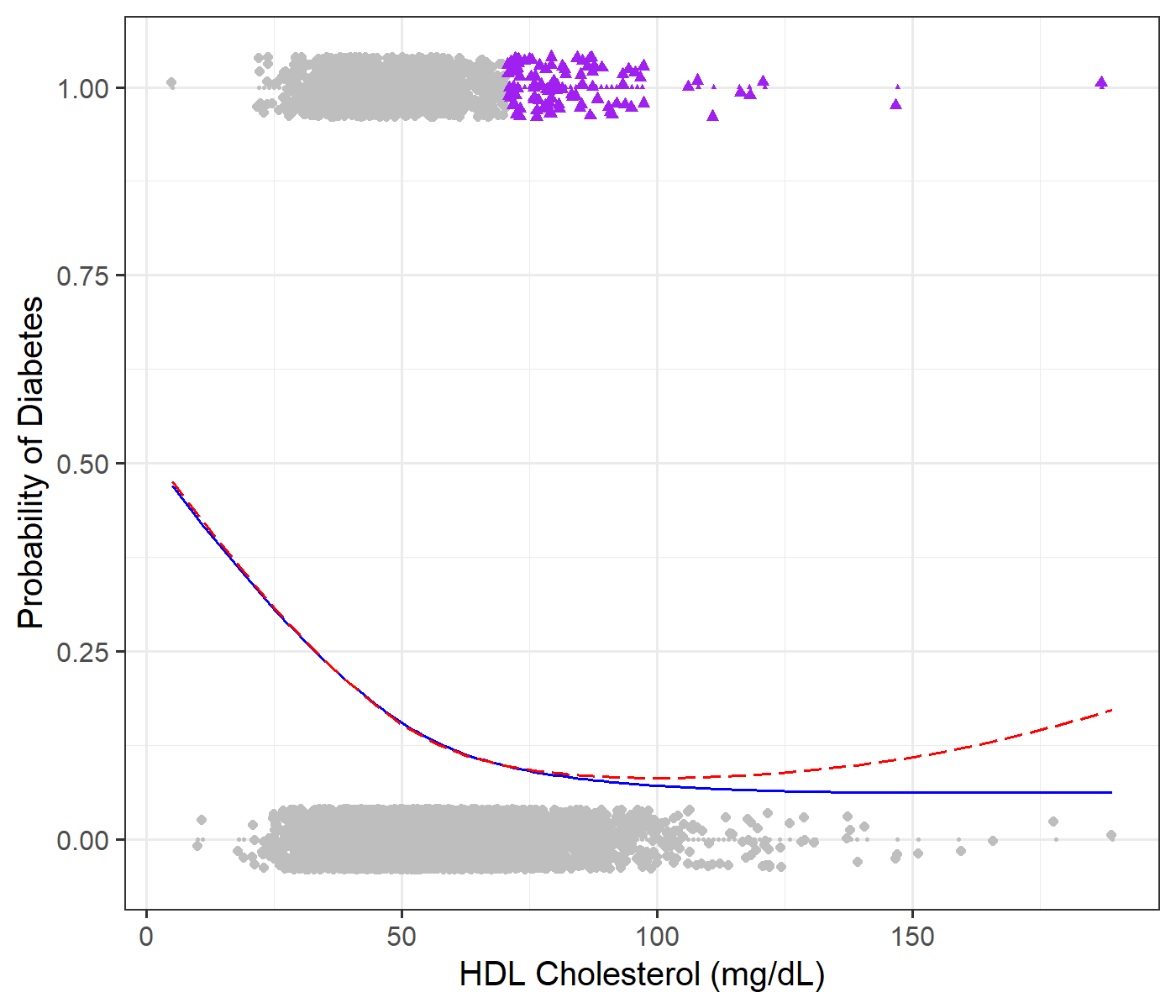}} \
\subfloat{\includegraphics[width = 0.495\textwidth]{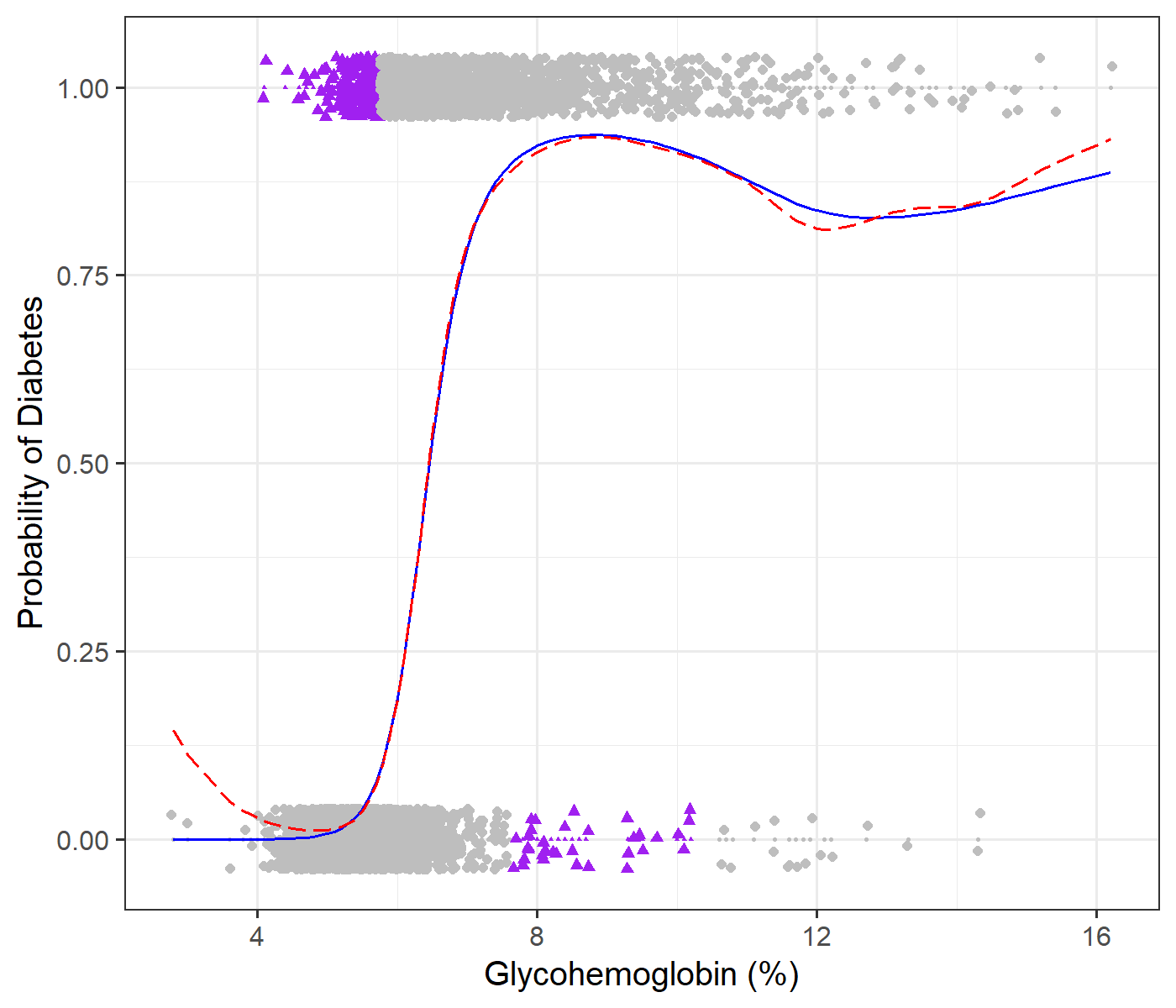}} 		
\caption{Left: Incidence of diabetes versus HDL cholesterol (mg/dL). Right: Incidence of diabetes versus glycohemoglobin (\%). The lines (\full, \denselydashed) correspond to DPD($\widehat{\alpha}$) and GAM estimates respectively. Observations indicated with (\triangled) exhibit large Anscombe residuals according to DPD($\widehat{\alpha}$).}
\label{fig:1}
\end{figure}

Since the way that the covariates influence the probability of diabetes cannot be specified in advance, we apply our methodology to estimate the two regression functions in a nonparametric manner. The algorithm described in Section~\ref{sec:4} selects $\alpha = 1$ in both cases, indicating the presence of several atypical observations among our data. The estimates are depicted with the solid blue lines in the left and right panels of Figure~\ref{fig:1}.  For comparison, the standard GAM estimates obtained with the \texttt{gam} function of the \texttt{mgcv} package \citep{Wood:2017} are depicted by dashed red lines in the figure.

Comparing the robust and GAM-estimates reveals that, despite some areas of agreement, there is notable disagreement between the estimates. In particular, the estimated probabilities of Type 2 diabetes differ significantly for large values of HDL cholesterol and for both small and medium-large concentrations of glycohemoglobin. Inspection of the panels suggests that the GAM estimates are strongly drawn towards a number of atypical observations, corresponding to individuals with high HDL cholesterol but no diabetes and diabetes patients with low and medium concentrations of glycohemoglobin, respectively. The results in both cases are counter-intuitive, as, for healthy individuals with good levels of HDL cholesterol or low levels of glucose, GAM predicts a non-negligible probability of diabetes. By contrast, the robust DPD($\widehat{\alpha}$)-estimates remain unaffected by these atypical observations leading to more intuitive estimates.

Since robust estimates are less attracted to outlying observations, such observations can be detected from their residuals. For GLMs we may make use of Anscombe residuals \citep[p. 29]{MCN:1983}, which more closely follow a Gaussian distribution than their Pearson counterparts. For Bernoulli distributions, these are given by
\begin{align*}
r_{A,i} = \frac{\IB(Y_i, 2/3,2/3) - \IB(\widehat{\mu}_i, 2/3,2/3)}{\widehat{\mu}_i^{1/6}(1-\widehat{\mu}_i)^{1/6}}, \quad (i=1, \ldots, n),
\end{align*}
where $\IB(x,a,b) =\int^x_0 t^{a-1} (1-t)^{b-1} dt$. We classify an observation as an outlier if $|r_{A,i}| \geq 2.6$, which is a conventional cut-off value for the standard Gaussian distribution. The outliers for our examples are shown in Figure~\ref{fig:1} with a different shape and color coding. These plots show that these outliers are largely located in the areas in which the DPD($\widehat{\alpha})$ and GAM-estimates differ, thus confirming the sensitivity of GAM-estimates.

\begin{figure}
\centering
\subfloat{\includegraphics[width = 0.495\textwidth]{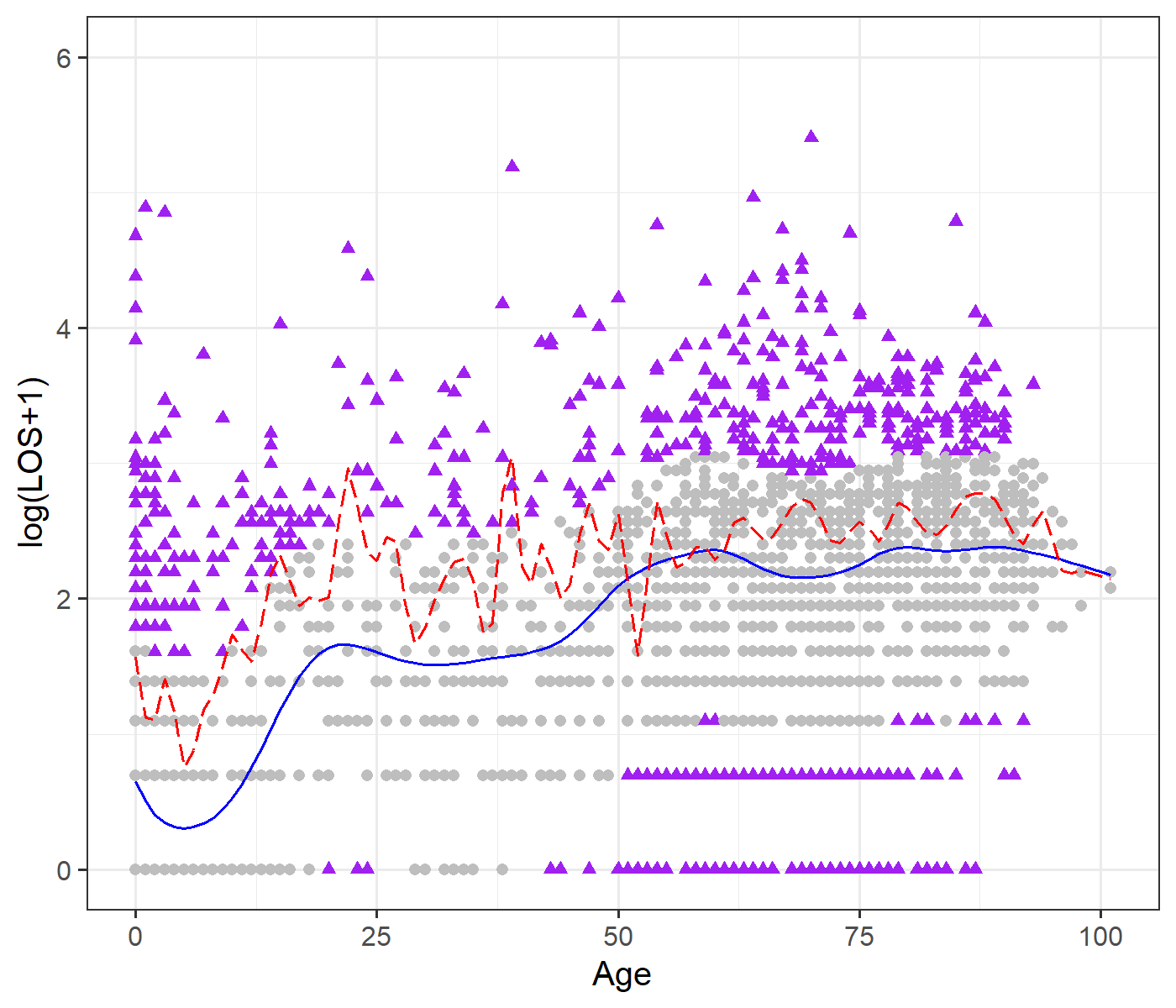}} \
\subfloat{\includegraphics[width = 0.495\textwidth]{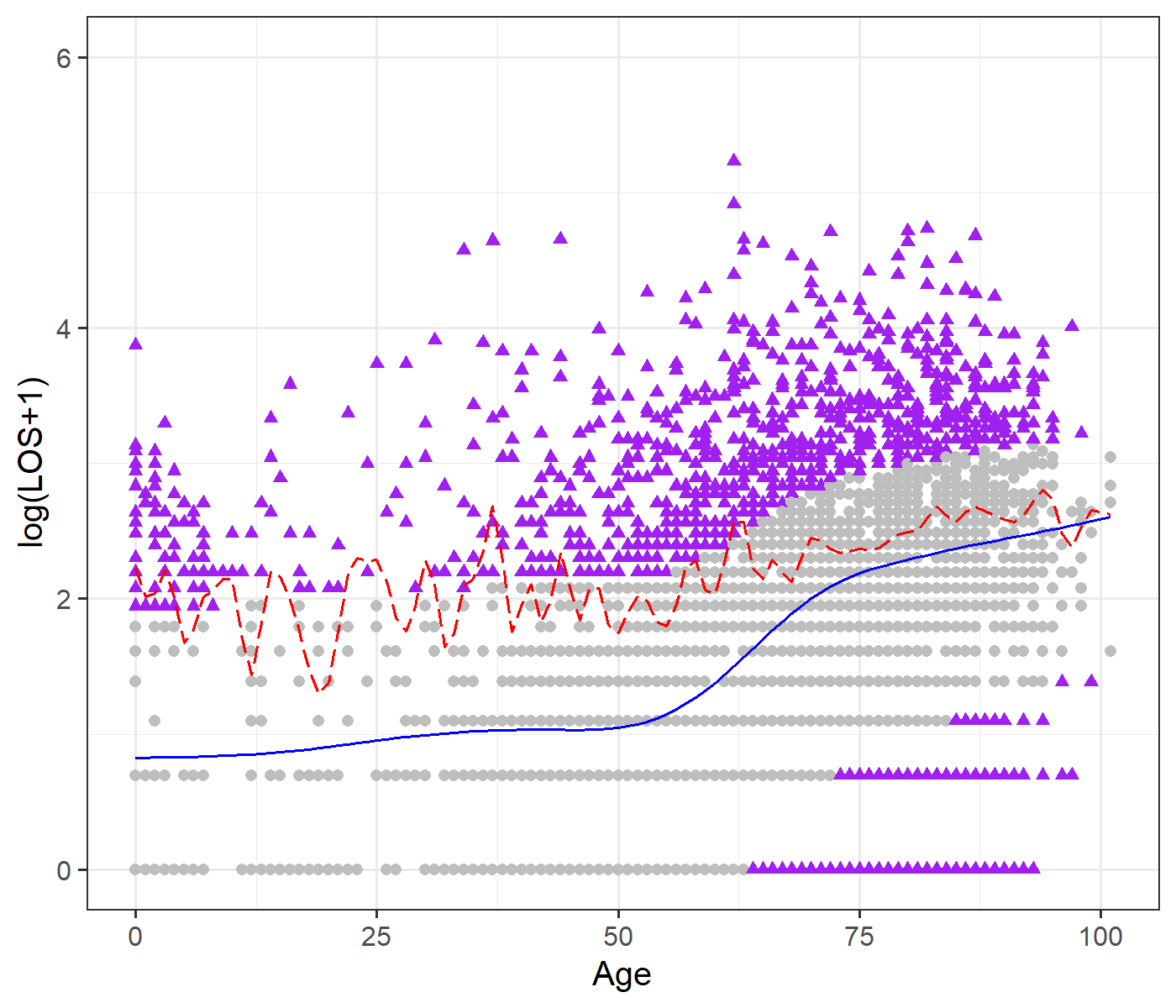}}	
\caption{Left: Logarithm of LOS$+1$ versus age for patients with disorders of the respiratory system. Right: Logarithm of LOS$+1$ versus age for patients with disorders of the circulatory system. The lines (\full, \denselydashed) correspond to DPD($\widehat{\alpha}$) and GAM estimates respectively. Observations indicated with (\triangled) exhibit large Anscombe residuals according to DPD($\widehat{\alpha}$).}	
\label{fig:2}
\end{figure}
\subsection{Length of hospital stay in Switzerland}

Patients admitted into Swiss hospitals are classified into homogeneous diagnosis related groups (DRG). The length of stay (LOS) is an important variable in the determination of the cost of treatment so that the ability to predict LOS from the characteristics of the patient is helpful. Herein, we use the age of the patient (in years) to predict the average LOS (in days) for two DRG comprising \textit{Diseases and disorders of the respiratory system} and \textit{Diseases and disorders of the circulatory system}, which we henceforth abbreviate as DRG 1 and DRG 2. We assume that LOS can be modelled as a Poisson random variable with
\begin{align*}
\log( \mathbb{E}\{\text{LOS}_j \}) = g_j(\text{Age}), \quad j \in \{1, 2\},
\end{align*}
for unknown functions $g_1$ and $g_2$ corresponding to DRG 1 and DRG 2, respectively. The data consisting of 2807 and 3922 observations are plotted in the left and right panels of Figure~\ref{fig:2}, together with the DPD$(\widehat{\alpha})$ and GAM-estimates of their regression functions.

The plots show that the GAM-estimates lack smoothness and are shifted upwards in relation to the DPD$(\widehat{\alpha})$-estimates. Both facts are attributable to a lack of resistance of GAM-estimates towards the considerable number of patients with atypically lengthy hospital stays given their age. It should be noted that, while it is always possible to manually increase the smoothness of the GAM-estimates, non-robust automatic methods are very often affected by outlying observations resulting in under or oversmoothed estimates, as observed by \citet{Cantoni:2001a}. Thus, in practice, robust methods of estimation need to be accompanied by robust model selection criteria.

On the other hand, our algorithm selects $\alpha = 1$ in both cases which combined with the robust AIC proposed in Section~\ref{sec:4} leads to reliable estimates for the regression functions, even in the presence of numerous outlying observations. These estimates largely conform to our intuition, as they suggest that older patients are, on average, more likely to experience longer hospital stays. To detect the outlying observations, we may again use the Anscombe residuals of the DPD($\widehat{\alpha})$-estimates, which in the Poisson case are given by
\begin{align*}
r_{A,i} = \frac{\frac{3}{2}(Y_i^{2/3}-\widehat{\mu}_i^{2/3})}{\widehat{\mu}_i^{1/6}}, \quad (i=1, \ldots, n).
\end{align*} 
Observations with $|r_{A,i}| \geq 2.6$ are indicated with a different shape and colour coding in Figure~\ref{fig:2}. These panels suggest that while there exist patients with atypically brief stays, the vast majority of outliers is in the opposite direction, thereby explaining the upper vertical shift of the sensitive GAM-estimates.

\section{Discussion}
\label{sec:7}

This paper greatly extends penalized likelihood methods for nonparametric estimation in GLMs and derives new and important theoretical properties for this broad class of estimators. In practice, the proposed class of estimators behaves similarly to non-robust GAM estimators in the absence of atypical observations, but exhibits a high degree of robustness in their presence. These properties make the proposed methodology particularly well-suited for the analysis of many complex datasets commonly encountered nowadays, such as the diabetes and length of hospital stay data analysed in Section~\ref{sec:6}.

There is a number of interesting and practically useful extensions we aim to consider in future work. These include the case of higher-dimensional non-parametric components, modelled, for example, with thin-plate or tensor product penalties \citep[Chapter 5]{Wood:2017}, as well as more general semi-parametric models based on density power divergence that would allow for both parametric and non-parametric components. Currently, our density power divergence estimator depends on the tuning parameter $\alpha$ and for the selection of $\alpha$ we have developed a data-dependent scheme. An intriguing alternative would be a composite penalized estimator involving several values of $\alpha$, as proposed by  \citet{Zou:2008} in the context of quantile regression. Such an approach has the potential of producing another resistant yet highly efficient competitor to standard maximum likelihood estimators.

\section*{Acknowledgements}

We thank Professor Alfio Marazzi (Lausanne University Hospital) for providing the length of hospital stay data. The research of I. Kalogridis was supported by the Research Foundation-Flanders (project 1221122N). Their support is gratefully acknowledged. G. Claeskens acknowledges support from the KU Leuven Research Fund C1-project C16/20/002.

\section{Appendix: proofs of the theoretical results}

\section{Proofs of Theorem 1 and Corollary 1}

Our main proofs are based on the following optimization lemma regarding extrema in real Hilbert spaces.

\begin{lemma}
\label{lem:1}
Let $\mathcal{H}$ denote a real Hilbert space of functions with inner product $\langle \cdot , \cdot \rangle_{\mathcal{H}}$ and associated norm $\|\cdot\|_{\mathcal{H}}$ and let $L: \mathcal{H} \to \mathbbm{R}$ denote a weakly lower semi-continuous functional whose range is bounded from below, say, $ L: \mathcal{H} \to [M,\infty)$. If there exists $g$ such that $\|g\|_{\mathcal{H}} < 1$ and
\begin{align*}
L(g)< \inf_{\|f\|_{\mathcal{H}} = 1} L(f),
\end{align*}
then $L$ possesses a (possibly) local minimum $f_0$ in the interior of the ball $\{f \in \mathcal{H}: ||f||_{\mathcal{H}} \leq 1 \}$.
\end{lemma}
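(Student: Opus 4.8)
The plan is to run the direct method of the calculus of variations on the closed unit ball $\bar{B} := \{f \in \mathcal{H}: \|f\|_{\mathcal{H}} \leq 1\}$ and then exploit the strict gap between $L(g)$ and the infimum over the unit sphere to conclude that the minimizer cannot sit on the boundary. The key structural fact I would invoke is that $\mathcal{H}$, being a Hilbert space, is reflexive, so the bounded, closed, convex set $\bar{B}$ is weakly sequentially compact (Banach--Alaoglu together with the Eberlein--\v{S}mulian theorem); moreover, weak lower semi-continuity of $L$ implies weak sequential lower semi-continuity, since weakly convergent sequences converge in the weak topology.

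First I would show that $L$ attains its infimum over $\bar{B}$. Since $L$ takes values in $[M,\infty)$ and $g \in \bar{B}$, the number $m := \inf_{f \in \bar{B}} L(f)$ satisfies $M \leq m \leq L(g) < \infty$. Choosing a minimizing sequence $(f_j)_{j\geq 1} \subset \bar{B}$ with $L(f_j) \to m$, weak sequential compactness yields a subsequence $f_{j_k} \rightharpoonup f_0$ with $f_0 \in \bar{B}$, and weak sequential lower semi-continuity gives $L(f_0) \leq \liminf_k L(f_{j_k}) = m \leq L(f_0)$, so $L(f_0) = m$.

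Next I would argue that $f_0$ lies in the open unit ball. Suppose, for contradiction, that $\|f_0\|_{\mathcal{H}} = 1$. Then, using the hypothesis and the inclusion $g \in \bar{B}$ (valid because $\|g\|_{\mathcal{H}} < 1$),
\begin{align*}
m = L(f_0) \geq \inf_{\|f\|_{\mathcal{H}} = 1} L(f) > L(g) \geq \inf_{f \in \bar{B}} L(f) = m,
\end{align*}
which is absurd. Hence $\|f_0\|_{\mathcal{H}} < 1$, and there is $\varepsilon > 0$ with $\{f \in \mathcal{H}: \|f - f_0\|_{\mathcal{H}} \leq \varepsilon\} \subseteq \bar{B}$; on this neighbourhood $L(f) \geq m = L(f_0)$, so $f_0$ is a (possibly only local) minimizer of $L$ in the interior of $\bar{B}$, as claimed.

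The only non-routine ingredient is the weak sequential compactness of $\bar{B}$ paired with the observation that weak lower semi-continuity suffices to pass to the limit along a minimizing sequence; with these in place the remaining steps are elementary, so I do not expect a genuine obstacle here.
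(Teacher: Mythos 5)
Your proposal is correct and follows essentially the same route as the paper's proof: the direct method on the closed unit ball, using reflexivity of $\mathcal{H}$ to extract a weakly convergent subsequence from a minimizing sequence and weak lower semi-continuity to pass to the limit, then using the strict inequality $L(g) < \inf_{\|f\|_{\mathcal{H}}=1} L(f)$ to rule out a boundary minimizer. Your phrasing of the last step as a contradiction and your explicit appeal to Eberlein--\v{S}mulian are minor presentational differences only.
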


\begin{proof}

Define $\mathcal{B} = \{f \in \mathcal{H}: ||f||_{\mathcal{H}} \leq 1 \}$ and set $y = \inf_{f \in \mathcal{B}} L(f)$. Observe that $y$ is finite, as the range of $L$ is bounded from below. Thus, there exists a minimizing sequence $\{f_n\}_n \in \mathcal{B}$, that is,
\begin{align*}
\lim_{n \to \infty} L(f_n) =  \inf_{f \in \mathcal{B}} L(f) = y.
\end{align*}
Now, the ball $\mathcal{B}$ is closed, bounded and convex. The space $\mathcal{H}$ is reflexive, \citep[see, e.g.,][]{Rynne:2008}, hence $\mathcal{B}$ is weakly compact. Therefore, there exists a subsequence $\{f_{n_k}\}_k$, which converges weakly to some $f_0 \in \mathcal{B}$. The weak lower semicontinuity of $L$ now implies
\begin{align*}
y \leq L(f_0) = L(\lim_{k \to \infty} f_{n_k}) \leq \liminf_{k \to \infty} L(f_{n_k}) = y,
\end{align*}
and it must be that $L(f_0) = y$. Our assumptions then yield $L(f_0) \leq L(g) < \inf_{\|f\|_{\mathcal{H}} = 1} L(f)$, which further implies that $f_0$ is in the interior of the ball, i.e., $\|f_0\|_{\mathcal{H}} <1$. The proof is complete.
\end{proof}

The following lemma will allow us to compare sums with integrals and may be viewed as a version of the Euler-Maclaurin formula.

\begin{lemma}[Quadrature]
\label{lem:2}
Let $m \geq 1$. Assuming that the design is quasi-uniform in the sense of (A6), there exists a constant $c_m$ depending only on $m$ such that, for all $f\in \mathcal{W}^{m,2}([0,1])$ and all $n \geq 2$,
\begin{equation*}
\|f\|^2 - \frac{c_m}{n \lambda^{1/2m}} \|f\|_{m, \lambda}^2 \leq \frac{1}{n} \sum_{i=1}^n |f(t_{i})|^2 \leq \|f\|^2 + \frac{c_m}{n \lambda^{1/2m}} \|f\|_{m, \lambda}^2.
\end{equation*}
\end{lemma}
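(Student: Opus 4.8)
The plan is to bound the quadrature error $n^{-1}\sum_{i=1}^n |f(t_i)|^2 - \|f\|^2$ by applying assumption (A6) to the function $|f|^2$ (more precisely, to a suitable absolutely continuous representative of it), which reduces the task to controlling $\int_0^1 |(|f|^2)'(t)|\,dt = 2\int_0^1 |f(t)||f'(t)|\,dt$ in terms of $\|f\|_{m,\lambda}^2$. First I would verify that $|f|^2 \in \mathcal{W}^{1,1}([0,1])$ whenever $f \in \mathcal{W}^{m,2}([0,1])$ with $m \geq 1$: indeed $f$ is absolutely continuous (being in $\mathcal{W}^{1,2}\subset\mathcal{W}^{1,1}$), hence so is $f^2$, with derivative $2ff'$, and $\int_0^1 |ff'| \leq \|f\|\,\|f'\| < \infty$ by Cauchy--Schwarz. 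Then (A6) applied to $f^2$ gives a constant (absorbed into $c_m$, using $n\geq 2$) such that
\begin{align*}
\left| \frac{1}{n}\sum_{i=1}^n |f(t_i)|^2 - \|f\|^2 \right| \leq \frac{C}{n} \cdot 2\int_0^1 |f(t)|\,|f'(t)|\,dt \leq \frac{C}{n}\left( \varepsilon^{-1}\|f'\|^2 + \varepsilon \|f\|^2 \right)
\end{align*}
for any $\varepsilon > 0$, by the elementary inequality $2ab \leq \varepsilon^{-1}a^2 + \varepsilon b^2$.

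The main obstacle is converting the factor $\|f'\|^2$ into something controlled by $\lambda^{-1/(2m)}\|f\|_{m,\lambda}^2$; a crude bound $\|f'\|^2 \leq \|f\|_{m,\lambda}^2$ is false since $\|f'\|$ is not a part of the $\|\cdot\|_{m,\lambda}$ norm for $m\geq 2$. The key tool here is a Gagliardo--Nirenberg / interpolation inequality for intermediate derivatives on $[0,1]$: there is a constant $c_m$ with $\|f'\|^2 \leq c_m(\eta^{2(m-1)}\|f^{(m)}\|^2 + \eta^{-2}\|f\|^2)$ for all $\eta \in (0,1]$. Choosing $\eta \asymp \lambda^{1/(2m)}$ makes both terms comparable to $\lambda^{-1/(2m)}$ times a piece of $\|f\|_{m,\lambda}^2$: precisely, $\eta^{2(m-1)}\|f^{(m)}\|^2 = \lambda^{(m-1)/m}\|f^{(m)}\|^2 = \lambda^{-1/m}\cdot\lambda\|f^{(m)}\|^2 \leq \lambda^{-1/m}\|f\|_{m,\lambda}^2$, and $\eta^{-2}\|f\|^2 = \lambda^{-1/m}\|f\|^2 \leq \lambda^{-1/m}\|f\|_{m,\lambda}^2$. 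Hence $\|f'\|^2 \leq c_m \lambda^{-1/m}\|f\|_{m,\lambda}^2$ after adjusting constants, and since $\lambda\leq 1$ this also dominates $\|f\|^2\leq \|f\|_{m,\lambda}^2$.

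Putting the pieces together, with $\varepsilon = \lambda^{1/(2m)}$ the bracketed term becomes $\varepsilon^{-1}\|f'\|^2 + \varepsilon\|f\|^2 \leq \lambda^{-1/(2m)}\cdot c_m\lambda^{-1/m}\|f\|_{m,\lambda}^2 + \lambda^{1/(2m)}\|f\|_{m,\lambda}^2$. This is not yet of the claimed form because of the $\lambda^{-1/m}$ exponent; the cleaner route is to choose $\varepsilon$ so that the two contributions are balanced \emph{after} substituting the interpolation bound, i.e. take $\varepsilon \asymp \lambda^{-1/(2m)}\cdot(\text{factor})$ — concretely, use $2\int |f||f'| \leq \eta^{-1}\|f\|^2 + \eta\|f'\|^2$ directly with $\eta \asymp \lambda^{1/(2m)}$ and the interpolation bound on $\|f'\|^2$, which yields $2\int|f||f'| \leq c_m\lambda^{-1/(2m)}\|f\|_{m,\lambda}^2$. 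Multiplying by $C/n$ and renaming constants gives exactly the two-sided bound in the statement. I would carry out the steps in this order: (i) regularity of $f^2$ and application of (A6); (ii) state and invoke the intermediate-derivative interpolation inequality on $[0,1]$; (iii) optimize the split $2ab\leq \eta^{-1}a^2+\eta b^2$ with $\eta\asymp\lambda^{1/(2m)}$; (iv) collect constants and use $\lambda\leq 1$ to absorb the lower-order $\|f\|^2$ term. The only genuinely delicate point is tracking that the combined power of $\lambda$ is exactly $-1/(2m)$ and not worse, which the balancing in step (iii) is designed to ensure.
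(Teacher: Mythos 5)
Your proposal is correct, and it is genuinely more informative than what the paper does: the paper's ``proof'' of this lemma is a one-line citation to Lemma 2.27 of Chapter 13 of Eggermont and LaRiccia (2009) (with $h$ there playing the role of $\lambda^{1/(2m)}$), whereas you reconstruct a self-contained argument. Your two ingredients are exactly the right ones. First, $f^2\in\mathcal{W}^{1,1}([0,1])$ with $(f^2)'=2ff'$, so (A6) bounds the quadrature error by $Cn^{-1}\cdot 2\int_0^1|f||f'|$. Second, the intermediate-derivative interpolation inequality you invoke is not something you need to import from outside the paper: it is precisely the embedding $\|f\|^2+\lambda^{j/m}\|f^{(j)}\|^2\leq c_0^2\|f\|_{m,\lambda}^2$ that the authors themselves cite (Eggermont and LaRiccia, Chapter 13, Lemma 2.17) in their proof of Corollary 1; with $j=1$ it gives $\|f'\|^2\leq c_0^2\lambda^{-1/m}\|f\|_{m,\lambda}^2$ directly, after which Cauchy--Schwarz already yields $2\int|f||f'|\leq 2\|f\|\,\|f'\|\leq c_m\lambda^{-1/(2m)}\|f\|_{m,\lambda}^2$ without even needing the Young-inequality balancing you describe (though your corrected split with $\eta\asymp\lambda^{1/(2m)}$ also lands on the right exponent, and you were right to flag and fix the first, unbalanced attempt, which would have produced $\lambda^{-3/(2m)}$). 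Two minor points to note: the constant you obtain depends on the design constant implicit in (A6) as well as on $m$, which is harmless; and (A6) only guarantees the bound for $n\geq n_0$ rather than for all $n\geq2$ as the lemma states --- but the paper's own proof concedes exactly the same caveat, so this is a discrepancy in the lemma's statement, not in your argument.
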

\begin{proof}
The proof is given in \citet[Lemma 2.27, Chapter 13]{Eg:2009} with $h$ in their notation equivalent to $\lambda^{1/2m}$ in ours. For asymptotically quasi-uniform designs the inequalities hold for $n \geq n_0$.
\end{proof}

We now tend to the proof of Theorem~1. For ease of notation we shall henceforth denote all generic positive constants with $c_0$. Thus, the value of $c_0$ may change from appearance to appearance.

\begin{proof}[Proof of Theorem~1]

Let us use $L_n(g, \widehat{\alpha}_n)$ to denote the objective function, i.e.,
\begin{align*}
L_n(g, \widehat{\alpha}_n) = \frac{1}{n} \sum_{i=1}^n l_{\widehat{\alpha}_n}(Y_i, g(t_i)) + \lambda \|g^{(m)}\|^2,
\end{align*}
and, as in the text, denote the true function with $g_0(t)$.  Notice that by (A2), for every $\delta>0$ we have $|\widehat{\alpha}_n-\alpha_0|<\delta$ with high probability for all large $n$. Choose small enough $\delta>0$ satisfying (A3)--(A5) such that $\alpha_0-\delta>0$ and observe that
\begin{align}
\label{eq:A1}
\Pr&\left[ \inf_{\|g\|_{m,\lambda} = D} L_n(g_0 + C_n^{1/2}g, \widehat{\alpha}_n) > L_n(g_0, \widehat{\alpha}_n) \right] \nonumber = 
\\ & \Pr\left[ \inf_{\|g\|_{m,\lambda} = D} L_n(g_0 + C_n^{1/2}g, \widehat{\alpha}_n) > L_n(g_0, \widehat{\alpha}_n), |\widehat{\alpha}_n-\alpha_0|<\delta/2 \right] + o(1),
\end{align}
as $n \to \infty$. We will show that for every $\epsilon>0$ there exists a sufficiently large $ D= D_{\epsilon}$ such that
\begin{align}
\label{eq:A2}
\liminf_{n \to \infty}\Pr\left[ \inf_{\|g\|_{m,\lambda} = D} L_n(g_0 + C_n^{1/2}g, \widehat{\alpha}_n) > L_n(g_0, \widehat{\alpha}_n), |\widehat{\alpha}_n-\alpha_0|<\delta/2 \right] \geq 1-\epsilon/2,
\end{align}
where $C_n = n^{-1} \lambda^{-1/2m} + \lambda$. Provided that we can check the conditions of Lemma~\ref{lem:1}, \eqref{eq:A1} and \eqref{eq:A2} together would imply the existence of a (local) minimizer in the ball $\{f \in \mathcal{W}^{m,2}([0,1]):\|f-g_0\|_{m,\lambda} \leq D C_n^{1/2} \}$ with probability at least $1-\epsilon$, which in turn we would establish the result of Theorem~1.

To check the conditions of Lemma~\ref{lem:1} we need to check the weak lower semicontinuity of $L_n(g, \widehat{\alpha}_n)$ for $\widehat{\alpha}_n > \alpha_0 -\delta>0$, as, by the uniform boundedness of the densities given in (A3), $L_n(g, \widehat{\alpha}_n)$ is bounded from below. Let $g_k \to g$ weakly in $\mathcal{W}^{m,2}([0,1])$ as $k \to \infty$ and let $\mathcal{R}_{m,\lambda}$ denote the RK of $\mathcal{W}^{m,2}([0,1])$ for the chosen inner product. The reproducing property and the definition of weak convergence imply
\begin{align*}
\lim_{k\to \infty} g_k(t_i) = \lim_{k\to \infty} \langle g_k, \mathcal{R}_{m,\lambda}(\cdot, t_i) \rangle_{m,\lambda} = \langle g, \mathcal{R}_{m,\lambda}(\cdot, t_i) \rangle_{m,\lambda} = g(t_i), \quad (i = 1, \ldots, n).
\end{align*}
At the same time, by the Hahn-Banach theorem \citep{Rynne:2008}, norms are weakly lower semicontinuous and therefore $\|g^{(m)}\|^2 \leq  \liminf_{k \to \infty}\|g_k^{(m)}\|^2$. Combining these two observations yields
\begin{align*}
L_n(g, \widehat{\alpha}_n)  \leq \liminf_{k \to \infty} L_n(g_k, \widehat{\alpha}_n),
\end{align*}
which is equivalent to weak lower semicontinuity.

To establish \eqref{eq:A2}, use the fundamental theorem of calculus to decompose the difference $L_n(g_0 + C_n^{1/2}g, \widehat{\alpha}_n) - L_n(g_0, \widehat{\alpha}_n)$ as follows:
\begin{align*}
L_n(g_0+C_n^{1/2}g, \widehat{\alpha}_n) - L_n(g_0, \widehat{\alpha}_n) &= \frac{1}{n} \sum_{i=1}^n l_{\widehat{\alpha}_n}(Y_i, g_0(t_i) + C_n^{1/2}g(t_i)) - \frac{1}{n} \sum_{i=1}^n l_{\widehat{\alpha}_n}(Y_i, g_0(t_i))
\\ & \quad + 2 C_n^{1/2} \lambda \langle g_0^{(m)}, g^{(m)} \rangle + \lambda C_n\|g^{(m)}\|^2
\\ & = \frac{1}{n} \sum_{i=1}^n \int_{0}^{C_n^{1/2}(g(t_i)} \{l_{\widehat{\alpha}_n}^{\prime}(Y_i, g_0(t_i)+u) - l_{\widehat{\alpha}_n}^{\prime}(Y_i, g_0(t_i))\} du
\\ & \quad + \frac{C_n^{1/2}}{n}  \sum_{i=1}^n g(t_i) l_{\widehat{\alpha}_n}^{\prime}(Y_i, g_0(t_i)) + 2 C_n^{1/2} \lambda \langle g_0^{(m)}, g^{(m)} \rangle
\\ &\quad + \lambda C_n\|g^{(m)}\|^2
\\ & = I_1(g, \widehat{\alpha}_n) + I_2(g, \widehat{\alpha}_n) + I_3(g, \widehat{\alpha}_n),
\end{align*}
say, with 
\begin{align*}
I_1(g, \widehat{\alpha}_n) &:= \frac{1}{n} \sum_{i=1}^n \int_{0}^{C_n^{1/2}g(t_i)} \{l_{\widehat{\alpha}_n}^{\prime}(Y_i, g_0(t_i)+u) - l_{\widehat{\alpha}_n}^{\prime}(Y_i, g_0(t_i))\} du + \lambda C_n \|g^{(m)}\|^2
\\ I_2(g, \widehat{\alpha}_n) &:=  \frac{C_n^{1/2}}{n}  \sum_{i=1}^n g(t_i) l_{\widehat{\alpha}_n}^{\prime}(Y_i, g_0(t_i))
\\ I_3(g, \widehat{\alpha}_n) &=  2 C_n^{1/2} \lambda \langle g_0^{(m)}, g^{(m)} \rangle.
\end{align*}
Here and in the sequel we write
\begin{align*}
l^{\prime}_{\alpha}(Y_i,g(t_i)) = \frac{\partial l_{\alpha}(Y_i,x)}{\partial x} \bigg \vert_{x = g(t_i)}
\end{align*}
and define higher order derivatives in the same manner.

Our proof consists of showing the following
\begin{align}
\inf_{\|g\|_{m, \lambda} = D,\ |\alpha - \alpha_0| < \delta/2} \mathbb{E}\{I_1(g, \alpha) \} & \geq c_0 D^2 C_n \label{eq:A3} \\
\sup_{\|g\|_{m, \lambda} \leq D, \ |\alpha - \alpha_0| < \delta/2	 } |I_1(g, \alpha)-\mathbb{E}\{I_1(g, \alpha)\}| & = o_P(1) C_n \label{eq:A4}
\\ \sup_{\|g\|_{m, \lambda} \leq D} |I_2(g, \widehat{\alpha}_n)| &= O_P(1) D C_n \label{eq:A5} \\
\sup_{\|g\|_{m, \lambda} \leq D} |I_3(g, \widehat{\alpha}_n)| & = O(1) D C_n \label{eq:A6},
\end{align}
for a strictly positive $c_0$ in \eqref{eq:A3}. In combination, \eqref{eq:A3}--\eqref{eq:A6} would imply that for sufficiently large $D$, $\mathbb{E}\{I_1(g, \widehat{\alpha}_n) \}$ would be positive at the $D$-sphere and dominate all other terms. Thus, \eqref{eq:A2} would hold and the theorem would be proven.

We begin by showing \eqref{eq:A6}. For this, observe that, by the Schwarz inequality in $L^2([0,1])$, we have
\begin{align*}
|I_3(g, \widehat{\alpha}_n)| \leq 2 C_n^{1/2} \lambda \| g_0^{(m)}\| \|g^{(m)}\| = 2 C_n^{1/2} \lambda^{1/2} \|g_0^{(m)}\| \lambda^{1/2} \|g^{(m)}\|,
\end{align*}
whence, by definition of $C_n$,
\begin{align*}
\sup_{\|g\|_{m,\lambda} \leq D} |I_3(g, \widehat{\alpha}_n)| \leq 2 D \lambda^{1/2} C_n^{1/2} \|g_0^{(m)}\| \leq c_0 D C_n.
\end{align*}
The bound in \eqref{eq:A6} now follows.

We now establish the bound in \eqref{eq:A5}. Clearly,
\begin{align}
\label{eq:A7}
\sup_{\|g\|_{m,\lambda} \leq D} \left| \frac{1}{n} \sum_{i=1}^n g(t_i) l^{\prime}_{\widehat{\alpha}_n}(Y_i,g_0(t_i)) \right| \leq \sup_{\|g\|_{m,\lambda} \leq D, |\alpha-\alpha_0|<\delta/2} \left| \frac{1}{n} \sum_{i=1}^n g(t_i) l^{\prime}_{\alpha}(Y_i,g_0(t_i)) \right|.
\end{align}
Fix $\alpha \in (\alpha_0-\delta/2, \alpha_0+\delta/2)$ for the time being and consider the supremum of the RHS in \eqref{eq:A7} over $\mathcal{B}_D:=\{g \in \mathcal{W}^{m,2}([0,1]):\|g\|_{m,\lambda} \leq D \}$. The random variables $l^{\prime}_{\alpha}(Y_i,g_0(t_i)), i = 1, \ldots, n$ are independent and, by Fisher consistency, have mean zero for every $\alpha>0$. Moreover, by (A3), they are uniformly bounded. Hence, they are also uniformly sub-Gaussian. Let $Q_n$ denote the empirical measure of the $t_i$, that is, 
\begin{align*}
Q_n = \frac{1}{n} \sum_{i=1}^n \delta_{t_i}.
\end{align*}
Further, let $\mathcal{H}(\epsilon, \mathcal{B}_D, Q_n)$ denote the $\epsilon$-entropy in the semi-norm
\begin{align*}
\|g\|_{Q_n}^2 = \int |g|^2 Q_n = \frac{1}{n} \sum_{i=1}^n |g(t_i)|^2.
\end{align*}
That is, $\mathcal{H}(\epsilon, \mathcal{B}_D, Q_n)$ is the logarithm of the smallest value of $N$ such that there exists $\{g_j\}_{j=1}^N$ with the property that
\begin{align*}
\sup_{g \in \mathcal{B}_D} \min_{j=1, \ldots, N} \|g-g_j\|_{Q_n} \leq \epsilon.
\end{align*}
We will now bound the entropy  $\mathcal{H}(\epsilon, \mathcal{B}_D, Q_n)$. Notice first that as
\begin{align*}
\|f-g\|_{Q_n} \leq \|f - g\|_{\infty},
\end{align*}
we have $\mathcal{H}(\epsilon, \mathcal{B}_D, Q_n) \leq \mathcal{H}_{\infty}(\epsilon, \mathcal{B}_D)$ for all $n$, where the latter stands for the entropy in the supremum norm. Hence, it suffices to bound $\mathcal{H}_{\infty}(\epsilon, \mathcal{B}_D)$. Notice next that, by assumption, $\lambda \in (0,1]$ for all large $n$, hence
\begin{align*}
\|g\|_{m,\lambda} \geq \lambda^{1/2} \|g\|_{m,1},
\end{align*}
so that
\begin{align*}
\mathcal{B}_D \subset \{g \in \mathcal{W}^{m,2}([0,1]): \|g\|_{m,1} \leq \lambda^{-1/2} D \},
\end{align*}
which is the  $D$-ball in $\mathcal{W}^{m,2}([0,1])$ equipped with its standard norm. Consequently, by Proposition 6 of \citet{Cucker:2001},
\begin{align}
\label{eq:A8}
\mathcal{H}(\epsilon, \mathcal{B}_D, Q_n) \leq H_{\infty}(\epsilon, \mathcal{B}_D) \leq C D^{1/m} \lambda^{-1/2m} \epsilon^{-1/m},
\end{align}
for some universal constant $C>0$ and all $\epsilon>0$. 

We aim to apply Corollary 8.3 of \citet{van de Geer:2000} and to that end note that by Lemma~\ref{lem:2} and our limit assumptions which entail $n \lambda^{1/2m} \to \infty$, we have
\begin{align*}
\sup_{g \in \mathcal{B}_D} \left| \frac{1}{n} \sum_{i=1}^n |g(t_i)|^2 \right| \leq c_0^2 D^2,
\end{align*}
for some $c_0>0$ depending only on $m$. Take the square root of both sides of \eqref{eq:A8} and integrate from $0$ to $c_0D$ to arrive at
\begin{align*}
\int_{0}^{c_0D} \mathcal{H}^{1/2}(u, \mathcal{B}_D, Q_n) du  \leq C^{1/2} c_0 \frac{2m}{2m-1} D^{1/2m} \lambda^{-1/4m} D^{1-1/2m} = c_0 \lambda^{-1/4m} D,
\end{align*}
for some $c_0>0$. Corollary 8.3 of \citet{van de Geer:2000} now applies and gives
\begin{align}
\label{eq:A9}
\Pr\left[ \sup_{g \in \mathcal{B}_D} \left| \frac{1}{n} \sum_{i=1}^n g(t_i) l^{\prime}_{\alpha}(Y_i,g_0(t_i)) \right| \geq  \frac{c_0 D}{2 n^{1/2} \lambda^{1/4m}} \right] \leq c_0 \exp\left[ -c_0 \lambda^{-1/2m} \right],
\end{align}
for some strictly positive $c_0$.

To make the above argument uniform in $\alpha \in (\alpha_0-\delta/2, \alpha_0+\delta/2)$ partition this interval into $N^{\prime}$ intervals with radii no larger than $c_0 D n^{-1/2} \lambda^{-1/4m}/2$. Notice that since $n^{-1/2} \lambda^{-1/4m} \to 0$, the radii can be assumed smaller than $\delta/4$ for all large $n$. Select an $\alpha_k$ in each one of these intervals. Clearly,
\begin{align*}
\sup_{\substack{g \in \mathcal{B}_D \\ |\alpha-\alpha_0|<\delta/2}} \left| \frac{1}{n} \sum_{i=1}^n g(t_i) l^{\prime}_{\alpha}(Y_i,g_0(t_i)) \right|&  \leq \max_{ k \leq N^{\prime}} \sup_{g \in \mathcal{B}_D} \left| \frac{1}{n} \sum_{i=1}^n g(t_i) l^{\prime}_{\alpha_k}(Y_i,g_0(t_i)) \right| \\ & \quad + \sup_{|\alpha-\alpha_0|<\delta/2} \min_{k \leq N^{\prime}} \sup_{g \in \mathcal{B}_D} \left| \frac{1}{n} \sum_{i=1}^n g(t_i)( l^{\prime}_{\alpha}(Y_i,g_0(t_i)) -l^{\prime}_{\alpha_k}(Y_i,g_0(t_i))) \right|.
\end{align*}
Now, by the triangle and Schwarz inequalities,
\begin{align*}
\left|\frac{1}{n} \sum_{i=1}^n g(t_i)( l^{\prime}_{\alpha}(Y_i,g_0(t_i)) -l^{\prime}_{\alpha_k}(Y_i,g_0(t_i)))\right| &\leq \frac{1}{n} \sum_{i=1}^n |g(t_i)|| l^{\prime}_{\alpha}(Y_i,g_0(t_i)) -l^{\prime}_{\alpha_k}(Y_i,g_0(t_i))|
 \\ & \leq \max_{i \leq n} |l^{\prime}_{\alpha}(Y_i, g_0(t_i))- l^{\prime}_{\alpha_k}(Y_i, g_0(t_i))| \left\{\frac{1}{n} \sum_{i=1}^n |g(t_i)|^2 \right\}^{1/2}
\\ & \leq c_0 D \max_{i \leq n} |l^{\prime}_{\alpha}(Y_i, g_0(t_i))- l^{\prime}_{\alpha_k}(Y_i, g_0(t_i))|,
\end{align*}
for all $g \in \mathcal{B}_D$, by Lemma~\ref{lem:2}. By the triangle inequality yet again,
\begin{align*}
|l^{\prime}_{\alpha}(Y_i, g_0(t_i))- l^{\prime}_{\alpha_k}(Y_i, g_0(t_i))| &\leq (1+\alpha) \left|  \int_{\mathbbm{R}} \left\{ f_{\theta_{0,i}}^{1+\alpha}(y) - f_{\theta_{0,i}}^{1+\alpha_k}(y) \right\} u_{\theta_{0,i}}(y) dy  \right| 
 \\ & \quad + \left|\alpha_k-\alpha\right| \left| \int_{\mathbbm{R}} f_{\theta_{0,i}}^{1+\alpha_k}(y) u_{\theta_{0,i}}(y) dy \right| + \left| \alpha_k - \alpha\right| \left|  f_{\theta_{0,i}}^{\alpha}(Y_i) u_{\theta_{0,i}}(Y_i) \right|
 \\ & \quad +  (1+\alpha)\left| f_{\theta_{0,i}}^{\alpha_k}(Y_i) - f_{\theta_{0,i}}^{\alpha}(Y_i) \right| \left|u_{\theta_{0,i}}(Y_i) \right|.
 \end{align*}
Whenever $|\alpha -\alpha_k|<\delta/4$, by the mean-value theorem, for every $y \in \mathcal{Y}$ there exists an $\widetilde{\alpha}_k = \widetilde{\alpha}_k(y)$ such that $|\widetilde{\alpha}_k - \alpha|  \leq |\alpha_k - \alpha |	 < \delta/4$ and 
\begin{align*}	
\left|f_{\theta_{0,i}}^{\alpha_k}(y) - f_{\theta_{0,i}}^{\alpha}(y)\right| = \left| \alpha_k- \alpha\right| \left| f_{\theta_{0,i}}^{\widetilde{\alpha}_k}(y) \log\left(f_{\theta_{0,i}}(y) \right)  \right|.
\end{align*}
Furthermore, $|\widetilde{\alpha}_k-\alpha_0| \leq |\widetilde{\alpha}_k-\alpha| + |\alpha-\alpha_0|   < \delta/4 + \delta/2 < \delta$. Therefore,
\begin{align*}
|l^{\prime}_{\alpha}(Y_i, g_0(t_i))- l^{\prime}_{\alpha_k}(Y_i, g_0(t_i))| & \leq (1+\alpha) \left| \alpha_k - \alpha\right|   \mathbb{E} \left\{ \sup_{|\alpha-\alpha_0|<\delta}  \left| f_{\theta_{0,i}}^{\alpha}(Y_i) \log\left( f_{\theta_{0,i}}(Y_i) \right)  u_{\theta_{0,i}}(Y_i) \right| \right\} \nonumber
\\ & \quad + \left| \alpha_k - \alpha\right| \mathbb{E}\left\{ \left| f_{\theta_{0,i}}^{\alpha_k}(Y_i) u_{\theta_{0,i}}(Y_i) \right| \right\} + \left| \alpha_k - \alpha\right| \sup_{y \in \mathcal{Y}} \left|  f_{\theta_{0,i}}^{\alpha}(y) u_{\theta_{0,i}}(y) \right| \nonumber
\\ & \quad + (1+\alpha)\left| \alpha_k - \alpha\right| \sup_{y \in \mathcal{Y}} \sup_{|\alpha-\alpha_0|<\delta} \left|  f_{\theta_{0,i}}^{\alpha}(y) \log\left( f_{\theta_{0,i}}(y) \right)  u_{\theta_{0,i}}(y) \right| \nonumber
\\ & \leq c_0 \left| \alpha_k - \alpha\right|,
\end{align*}
By (A3), $c_0$ does not depend on $i$, $n$, $\alpha$ or $\alpha_k$. Hence, since the radii are smaller than $c_0 D n^{-1/2} \lambda^{-1/4m}/2$, we have
\begin{align*}
\Pr & \left[  \sup_{\substack{g \in \mathcal{B}_D \\ |\alpha-\alpha_0|<\delta/2}} \left| \frac{1}{n} \sum_{i=1}^n g(t_i) l^{\prime}_{\alpha}(Y_i,g_0(t_i)) \right| \geq \frac{c_0 D}{n^{1/2} \lambda^{1/4m}} \right] 
\\ & \leq \Pr\left[ \bigcup_{k=1}^{N^{\prime}} \left\{  \sup_{g \in \mathcal{B}_D} \left| \frac{1}{n} \sum_{i=1}^n g(t_i) l^{\prime}_{\alpha_k}(Y_i,g_0(t_i))\right|  \geq \frac{c_0 D}{2 n^{1/2} \lambda^{1/4m}} \right\}  \right]
\\ & \leq \sum_{k=1}^{N^{\prime}} \Pr\left[   \sup_{g \in \mathcal{B}_D} \left| \frac{1}{n} \sum_{i=1}^n g(t_i) l^{\prime}_{\alpha_k}(Y_i,g_0(t_i))\right|  \geq \frac{c_0 D}{2 n^{1/2} \lambda^{1/4m}}  \right]
\\ & \leq N^{\prime} c_0 \exp\left[-c_0 \lambda^{-1/2m}\right] \leq c_0  \frac{n^{1/2} \lambda^{1/4m}}{D}  \exp\left[-c_0 \lambda^{-1/2m}\right],
\end{align*}
where in the second-to-last step we have applied \eqref{eq:A9} and in the last step we have bounded $N^{\prime}$ using Lemma 2.5 of \citep{van de Geer:2000}. Our limit assumptions now imply that 
\begin{align}
\label{eq:A10}
C_n^{1/2}\sup_{\substack{g \in \mathcal{B}_D \\ |\alpha-\alpha_0|<\delta/2}} \left| \frac{1}{n} \sum_{i=1}^n g(t_i) l^{\prime}_{\alpha}(Y_i,g_0(t_i)) \right| = O_P(1) D C_n^{1/2} n^{-1/2} \lambda^{-1/4m} = O_P(1) D C_n,
\end{align}
by definition of $C_n$. Equations \eqref{eq:A7} and \eqref{eq:A10} jointly imply \eqref{eq:A5}.

We now establish the lower bound in \eqref{eq:A3}. Let us begin by noting that the embedding (7) in the main text implies the existence of a symmetric function, $\mathcal{R}_{m,\lambda}: [0,1]^2 \to \mathbbm{R}$, the reproducing kernel, such that for every $y \in [0,1]$ the map $x \mapsto \mathcal{R}_{m,\lambda}(x,y) \in \mathcal{W}^{m,2}([0,1])$ and
\begin{align*}
f(x) = \langle \mathcal{R}_{m,\lambda}(x,\cdot), f \rangle_{m,\lambda}, \quad f \in \mathcal{W}^{m,2}([0,1]).
\end{align*}
Using these facts, the Schwarz inequality in $\mathcal{W}^{m,2}([0,1])$ and (7) in the main text, we obtain
\begin{equation*}
\|\mathcal{R}_{m, \lambda}(x, \cdot)\|_{m, \lambda}^2 =  \langle \mathcal{R}_{m, \lambda}(x, \cdot), \mathcal{R}_{m, \lambda}(\cdot, x) \rangle_{m,\lambda} =  \mathcal{R}_{m, \lambda}(x, x)   \leq c_0 \lambda^{-1/{4m}}  \|\mathcal{R}_{m, \lambda}(x, \cdot)\|_{m, \lambda}.
\end{equation*}
Divide both sides by $\|\mathcal{R}_{m, \lambda}(x, \cdot)\|_{m, \lambda}$ to get
\begin{align}
\label{eq:A11}
\sup_{x\in [0,1]} \|\mathcal{R}_{m, \lambda}(x, \cdot)\|_{m, \lambda} \leq c_0 \lambda^{-1/4m}.
\end{align}
Since $\lambda>0$, it is clear that this inequality also holds when $\|\mathcal{R}_{m, \lambda}(x, \cdot)\|_{m, \lambda} = 0$. Hence, by the reproducing property, the Schwarz inequality and \eqref{eq:A11}, we have
\begin{align}
\label{eq:A12}
C_n^{1/2}\sup_{\|g\|_{m,\lambda} \leq D} \max_{i \leq n} |g(t_i)| & = C_n^{1/2}\sup_{\|g\|_{m,\lambda} \leq D} \max_{i \leq n} |\langle \mathcal{R}_{m,\lambda}(t_i, \cdot), g \rangle_{m,\lambda}| \nonumber
\\ & \leq  D C_n^{1/2} \sup_{x \in [0,1]} \|\mathcal{R}_{m,\lambda}(x, \cdot)\|_{m,\lambda} \nonumber
\\ & \leq c_0 D C_n^{1/2} \lambda^{-1/4m} \nonumber
\\ & = o(1),
\end{align}
for every fixed $D$, as $n \to \infty$. Now, since $|\widehat{\alpha}_n - \alpha_0| < \delta/2$, we clearly have
\begin{align*}
\mathbb{E} \left\{ \int_{0}^{C_n^{1/2}g(t_i)} \{l_{\widehat{\alpha}_n}^{\prime}(Y_i, g_0(t_i)+u) - l_{\widehat{\alpha}_n}^{\prime}(Y_i, g_0(t_i))\} du \right\} \geq  \inf_{|\alpha-\alpha_0|<\delta/2} \left[  \int_{0}^{C_n^{1/2}g(t_i)} \mathbb{E}\{l_{\alpha}^{\prime}(Y_i, g_0(t_i)+u)\} du \right],
\end{align*}
where we have used the fact that $\mathbb{E}\{l_{\alpha}^{\prime}(Y_i, g_0(t_i)) = 0$ for any fixed $\alpha >0$, by Fisher consistency. Furthermore,
\begin{align*}
\int_{0}^{C_n^{1/2}g(t_i)} \mathbb{E}\{l_{\alpha}^{\prime}(Y_i, g_0(t_i)+u)\} du & = \mathbb{E}\{l_{\alpha}(Y_i, g_0(t_i)+C_n^{1/2}g(t_i))\}- \mathbb{E}\{l_{\alpha}(Y_i, g_0(t_i)\}
\\ & = C_n^{1/2}g(t_i) \mathbb{E}\{l_{\alpha}^{\prime}(Y_i, g_0(t_i)\} 
\\ & \quad + 2^{-1}C_n|g(t_i)|^2  \mathbb{E}\{l_{\alpha}^{\prime \prime}(Y_i, g_0(t_i) + s_i C_n^{1/2}g(t_i)\}
\\ & =  2^{-1}C_n|g(t_i)|^2  \mathbb{E}\{l_{\alpha}^{\prime}(Y_i, g_0(t_i) + s_i C_n^{1/2}g(t_i)\}
\end{align*}
for some value $s_i$ satisfying $|s_i| \leq 1$. By the definitions in (A4),
\begin{align*}
\mathbb{E}\{l_{\alpha}^{\prime \prime}(Y_i, g_0(t_i)+ C_n^{1/2}s_ig(t_i))\} = m_{t_i}(C_n^{1/2}s_ig(t_i), \alpha) - \left(1+\frac{1}{\alpha}\right) \mathbb{E}\{n_{t_i}(s_iC_n^{1/2}g(t_i), \alpha, Y_i)\},
\end{align*}
for equicontinuous functions $m_{t_i}(u,\alpha)$ and $n_{t_i}(u, \alpha, y)$ at $u = 0$. By the local uniform boundedness of $n_{t_i}(u, \alpha, y)$ in assumption (A4) and dominated convergence, it may be verified that the functions $\{ \mathbb{E}\{n_{t}(u, \alpha, Y_i)\}, t \in [0,1] \}$ are also equicontinuous. Observe next that
\begin{align*}
\mathbb{E}\{l_{\alpha}^{\prime \prime}(Y_i, g_0(t_i)\} = \mathbb{E}\{f_{\theta_{0,i}}^{\alpha}(Y_i)|u_{\theta_{0,i}}(Y_i)|^2 \},
\end{align*}
so that, by (A5), there exists a $c_0>0$ such that
\begin{align*}
\mathbb{E}\{l_{\alpha}^{\prime \prime}(Y_i, g_0(t_i)\} \geq \inf_{|\alpha-\alpha_0|<\delta/2} \inf_{n} \min_{1\leq i \leq n} \mathbb{E}\{ f_{\theta_{0,i}}^{\alpha}(Y_i) |u_{\theta_{0,i}}(Y_i)|^2 \} \geq c_0>0.
\end{align*}
By equicontinuity and the fact that, by \eqref{eq:A12}, for every $\|g\|_{m,\lambda} \leq D$ and $t \in [0,1]$, $C_n^{1/2}|g(t)| \to 0$, as $n \to \infty$, conclude that for small enough $\delta>0$ and all large $n$,
\begin{align*}
\mathbb{E}\{l_{\alpha}^{\prime \prime}(Y_i, g_0(t_i)+ C_n^{1/2}s_ig(t_i))\} \geq \inf_{|u| <\delta} \mathbb{E}\{l_{\alpha}^{\prime \prime}(Y_i, g_0(t_i)+u) \}  \geq c_0,
\end{align*}
for some $c_0$ not depending on either $\alpha$ or $i$. Thus, there exists a $c_0>0$ such that
\begin{align*}
\mathbb{E} \left\{ \int_{0}^{C_n^{1/2}g(t_i)} \{l_{\widehat{\alpha}_n}^{\prime}(Y_i, g_0(t_i)+u) - l_{\widehat{\alpha}_n}^{\prime}(Y_i, g_0(t_i))\} du \right\} \geq c_0 C_n |g(t_i)|^2,
\end{align*}
independently of $i$, $n$ and $\alpha$. Averaging and approximating the sum from below with the help of Lemma~\ref{lem:2} we now see that
\begin{align*}
\inf_{|\alpha-\alpha_0|<\delta/2}\mathbb{E}\{I_1(g, \alpha) \}  & \geq c_0 \frac{C_n}{n} \sum_{i=1}^n |g(t_i)|
^2 + \lambda C_n \|g^{(m)}\|^2 ,
\\ & \geq \min\{1,c_0\} C_n \|g\|_{m,\lambda}^2\left(1 - \frac{c_m}{n \lambda^{1/2m}} \right)
\\ & \geq c_0 C_n \|g\|_{m,\lambda}^2,
\end{align*}
for some $c_0>0$, where we have used our limit assumptions, which entail that $n^{2m} \lambda \to \infty$. Taking the infimum over the $D$-sphere and iterating infima,
\begin{align*}
\inf_{\|g\|_{m,\lambda} = D,\ |\alpha-\alpha_0|<\delta/2}\mathbb{E}\{I_1(g, \alpha) \}  \geq c_0 D^2 C_n,
\end{align*}
for a strictly positive $c_0$, which is precisely \eqref{eq:A3}.

To complete the proof we now show that the remainder term $I_1(g, \alpha) - \mathbb{E}\{I_1(g, \alpha) \}$ is under our assumptions asymptotically negligible uniformly in $\|g\|_{m,\lambda} \leq D$ and $\alpha \in (\alpha_0-\delta, \alpha_0+\delta)$. We show this with an empirical process argument. Recall that the $t_i$ are fixed and $Y_i \sim F_{\theta_{0,i}}$. Thus, the distribution of each pair $(t_i, Y_i)$ is given by the product measure $P_i = \delta_{t_i} \times F_{\theta_{0,i}}$. Put $\bar{P} = n^{-1} \sum_{i=1}^n P_i$. Further, let $P_n = n^{-1} \sum_{i=1}^n \delta_{t_i, Y_i}$ denote the empirical measure placing mass $n^{-1}$ on each pair $(t_i, Y_i)$. Then, adopting the notation of \citet[Chapters 5 and 8]{van de Geer:2000} we have
\begin{align*}
I_1(g, \alpha) - \mathbb{E}\{I_1(g, \alpha) \} = \int h_{g, \alpha} d(P_n - \bar{P}) = n^{-1/2} v_n(h_{g, \alpha}),
\end{align*}
where $v_n(\cdot)$ denotes the empirical process and $h_{g,\alpha}$ is the function $[0,1] \times \mathbbm{R} \to \mathbbm{R}$ given by
\begin{align*}
h_{g,\alpha}(t,y) := \int_{0}^{C_n^{1/2}g(t)}\{l^{\prime}_{\alpha}(y, g_0(t)+u) - l^{\prime}_{\alpha}(y, g_0(t)) \} du,
\end{align*}
for each $g \in \mathcal{B}_{D} := \{f \in \mathcal{W}^{m,2}([0,1]): \|f\|_{m,\lambda} \leq D\}$ and $\alpha \in \mathcal{V}_{\alpha_0} :=  (\alpha_0-\delta/2, \alpha_0+\delta/2)$. This class of functions depends on $n$ through $C_n^{1/2}$, but we suppress this dependence for notational convenience.

We will apply Theorem 5.11 of \citet{van de Geer:2000} to this empirical process adapted for independent but not identically distributed random variables $(t_i, Y_i)$, see the remarks in \citet[pp. 131--132]{van de Geer:2000}. For this, we first derive a uniform bound on $h_{g,\alpha}(t,y)$ and a bound on its $\mathcal{L}^2 (\bar{P})$-norm. For the former note that by \eqref{eq:A12} we have $C_n^{1/2} |g(t)| \to 0$ uniformly in $t \in [0,1]$ and $g \in \mathcal{B}_D$. Furthermore, for every $|u| < \delta$ and $\alpha \in \mathcal{V}_{\alpha_0}$ the mean-value theorem reveals that
\begin{align} 
\label{eq:A13}
|l_{\alpha}^{\prime}(y, g_0(t) + u) - l_{\alpha}^{\prime}(y,g_0(t))| \leq \sup_{y \in \mathcal{Y}} \sup_{\substack{ |u| < \delta \\ |\alpha-\alpha_0| <\delta/2 }} | l_{\alpha}^{\prime \prime}(y, \theta_0(t)+u)| \ |u|.
\end{align}
By (A4),
\begin{align}
\sup_{y \in \mathcal{Y}} \sup_{\substack{ |u| < \delta \\ |\alpha-\alpha_0| <\delta/2 }} | l_{\alpha}^{\prime \prime}(y, \theta_0(t)+u)| & \leq \sup_{\substack{ |u|< \delta \\ |\alpha-\alpha_0| <\delta/2 }} \left[ \left|m_{t}(u, \alpha)\right| + \left(1+ \frac{1}{\alpha} \right) \sup_{y \in \mathcal{Y}} \left|n_{t}(u, \alpha, y)\right|  \right] \nonumber
\\ & \leq c_0 \sup_{\substack{ |u|< \delta \\ |\alpha-\alpha_0| <\delta/2 }} \left[ \left|m_{t}(u, \alpha)\right| + \sup_{y \in \mathcal{Y}} \left|n_{t}(u, \alpha, y)\right|  \right] \nonumber
\\ & \leq c_0 M^{\prime} \label{eq:A14},
\end{align}
with $c_0$ and $M^{\prime}$ independent of $t \in [0,1]$. Combining \eqref{eq:A13} and \eqref{eq:A14} yields
\begin{align*}
|h_{g,\alpha}(t,y)| \leq  c_0 M^{\prime} \left| \int_{0}^{C_n^{1/2}g(t)} |u| du \right| \leq 2^{-1}c_0 M^{\prime} C_n |g(t)|^2 \leq c_0 C_n \lambda^{-1/2m},
\end{align*}
for some $c_0$ independent of $g \in \mathcal{B}_D$, $\alpha \in \mathcal{V}_{\alpha_0}$, $t \in [0,1]$ and $y\in \mathcal{Y}$. It follows that we may take $K: = c_0 C_n \lambda^{-1/2m}$ in Lemma 5.8 of \citet{van de Geer:2000}.

Similarly, the Cauchy-Schwarz integral inequality and \eqref{eq:A13}--\eqref{eq:A14} yield
\begin{align*}
\int |h_{g,\alpha}|^2 d \bar{P} & = n^{-1} \sum_{i=1}^n  \mathbb{E} \left\{ \left| \int_{0}^{C_n^{1/2}g(t_i) } \left\{l_{\alpha}^{\prime}(Y_i, g_0(t_i) + u) - l_{\alpha}^{\prime}(Y_i,g_0(t_i)) \right\} du \right|^2 \right\} \nonumber
\\& \leq  c_0 n^{-1} \sum_{i=1}^n |C_n^{1/2}g(t_i)| \left| \int_{0}^{C_n^{1/2}g(t_i) } |u|^2 du \right| \nonumber
\\ & \leq c_0 C_n^{2} \sup_{t\in [0,1]} |g(t)|^2 n^{-1}\sum_{i=1}^n |g(t_i)|^2 \nonumber
\\ & = c_0 \lambda^{-1/{2m}} C_n^{2},
\end{align*}
where in the last step we have used (7) from the main text and Lemma~\ref{lem:2} in order to bound $n^{-1} \sum |g(t_i)|^2$ over $g \in \mathcal{B}_D$. The constant $c_0$ does not depend on $g \in \mathcal{B}_D$, $\alpha \in \mathcal{V}_{\alpha_0}$, $t \in [0,1]$. Thus, we may take $R = c_0 \lambda^{-1/4m} C_n$ in Lemma 5.8 of \citet{van de Geer:2000}. 

With this choice of $K$ and $R$, it follows from Lemma 5.8 of \citet{van de Geer:2000} that $|\rho_{K}(h_{g,\alpha})|^2 \leq c_0  R^2$, where $\rho_{K}(h_{g, \alpha})$ denotes the Bernstein seminorm given by
\begin{align*}
|\rho_K (h_{g, \alpha})|^2 = 2K^2 \int \left( e^{ |h_{g, \alpha}|/K } - 1 - |h_{g, \alpha}|/K \right)  d \bar{P}, \quad K> 0.
\end{align*}
Furthermore, using $\mathcal{H}_{B,K}(\epsilon, \{h_{g, \alpha},\ g \in \mathcal{B}_D, \ \alpha \in \mathcal{V}_{\alpha_0} \}, \bar{P})$ to denote the $\epsilon$-generalized entropy with bracketing in the Bernstein norm $\rho_{K}$, Lemma 5.10 of \citet{van de Geer:2000} shows that
\begin{align*}
\mathcal{H}_{B,K}(\epsilon, \{h_{g, \alpha},\ g \in \mathcal{B}_D, \ \alpha \in \mathcal{V}_{\alpha_0} \}, \bar{P}) \leq \mathcal{H}_B( c_0 \epsilon, \{h_{g, \alpha},\ g \in \mathcal{B}_D, \ \alpha \in \mathcal{V}_{\alpha_0} \}, \bar{P}),
\end{align*}
where $\mathcal{H}_B(\epsilon, \{h_{g, \alpha},\ g \in \mathcal{B}_D, \ \alpha \in \mathcal{V}_{\alpha_0} \}, \bar{P})$ stands for the usual $L_2(\bar{P})$ $\epsilon$-entropy with bracketing. As $\bar{P}$ is a probability measure, we further have
\begin{align*}
\mathcal{H}_B( c_0 \epsilon, \{h_{g, \alpha},\ g \in \mathcal{B}_D, \ \alpha \in \mathcal{V}_{\alpha_0} \}, \bar{P}) \leq \mathcal{H}_{\infty}(c_0\epsilon, \{h_{g, \alpha},\ g \in \mathcal{B}_D, \ \alpha \in \mathcal{V}_{\alpha_0} \}),
\end{align*}
with $H_{\infty}(\epsilon, \mathcal{F})$ denoting the $\epsilon$-entropy in the supremum norm of a class of functions $\mathcal{F}$. 

We next derive a bound for $\mathcal{H}_\infty(\epsilon, \{h_{g, \alpha},\ g \in \mathcal{B}_D, \ \alpha \in \mathcal{V}_{\alpha_0} \})$, for every sufficiently small $\epsilon>0$. For this, observe that, by the triangle inequality,
\begin{align}
\label{eq:A15}
\left|h_{g_1, \alpha_1}(t,y) - h_{g_2,\alpha_2}(t,y)\right| &\leq \left|h_{g_1, \alpha_1}(t,y) - h_{g_2,\alpha_1}(t,y)\right| + \left|h_{g_2, \alpha_1}(t,y) - h_{g_2,\alpha_2}(t,y)\right|  \nonumber
\\ &  \leq \left| \int_{C_n^{1/2}g_2(t)}^{C_n^{1/2}g_1(t) } \left\{l_{\alpha_1}^{\prime}(y, g_0(t) + u) - l_{\alpha_1}^{\prime}(y,g_0(t)) \right\} du \right|  \nonumber
\\ & \quad + \left| \int_{0}^{C_n^{1/2}g_2(t)}  \left\{ l_{\alpha_1}^{\prime}(y, g_0(t)+u) -  l_{\alpha_2}^{\prime}(y, g_0(t)+u) \right\} du \right| \nonumber
\\ & \quad + \left| \int_{0}^{C_n^{1/2}g_2	(t)}  \left\{ l_{\alpha_1}^{\prime}(y, g_0(t)) -  l_{\alpha_2}^{\prime}(y, g_0(t)) \right\} du \right|.
\end{align}
For the first term on the RHS of \eqref{eq:A15}, by \eqref{eq:A13} and \eqref{eq:A14}, we obtain
\begin{align*}
\left| \int_{C_n^{1/2}g_2(t)}^{C_n^{1/2}g_1(t) } \left\{l_{\alpha_1}^{\prime}(y, g_0(t) + u) - l_{\alpha_1}^{\prime}(y,g_0(t)) \right\} du \right| \leq c_0 C_n^{1/2}|g_1(t)- g_2(t)| \leq c_0 C_n^{1/2}\|g_1- g_2\|_{\infty},
\end{align*}
for some $c_0$ not depending on either $y$, $t$ or $\alpha$. Next, for the second term we have, by the mean value theorem and (A3),
\begin{align*}
&\left| \int_{0}^{C_n^{1/2}g_2(t)}  \left\{ l_{\alpha_1}^{\prime}(y, g_0(t)+u) -  l_{\alpha_2}^{\prime}(y, g_0(t)+u) \right\} du \right|
\\ & \quad \leq 2 c_0 C_n^{1/2}\lambda^{-1/4m}| \alpha_1-\alpha_2| \sup_{\substack{|\alpha-\alpha_0|<\delta/2 \\ |u| < \delta}} \sup_{y \in \mathcal{Y}}  \left|f_{\theta_0(t)+u}^{\alpha}(y) \log\left(f_{\theta_0(t)+u}(y) \right) u_{\theta_0(t)+u}(y) \right| 
\\ & \quad \quad  + 2 c_0 C_n^{1/2}\lambda^{-1/4m} | \alpha_1-\alpha_2| \sup_{\substack{|\alpha-\alpha_0|<\delta/2 \\ |u| < \delta}} \sup_{y \in \mathcal{Y}} \left| f_{\theta_0(t)+u}^{\alpha}(y) u_{\theta_0(t)+u}(y) \right| 
\\ & \quad \leq c_0|\alpha_1-\alpha_2|, 
\end{align*}
for some global constant $c_0$ and we also have used the fact that $C_n^{1/2} \lambda^{-1/4m} \to 0$, hence $C_n^{1/2} \lambda^{-1/4m}$ is bounded.  A similar bound holds for the third term on the RHS of \eqref{eq:A15}. Putting everything together, we have
\begin{align*}
\sup_{t\in [0,1], y \in \mathcal{Y}}\left|h_{g_1, \alpha_1}(t,y) - h_{g_2,\alpha_2}(t,y)\right| &\leq c_0C_n^{1/2}\|g_1-g_2 \|_{\infty}+c_0|\alpha_1-\alpha_2|,
\end{align*}
which implies that, for every $\epsilon>0$,
\begin{align}
\label{eq:A16}
\mathcal{H}_{\infty}(\epsilon, \{h_{g, \alpha},\ g \in \mathcal{B}_D, \ \alpha \in \mathcal{V}_{\alpha_0} \}) \leq \mathcal{H}_{\infty}(c_0 \epsilon/ C_n^{1/2}, \mathcal{B}_D) + \mathcal{H}(c_0 \epsilon, \mathcal{V}_{\alpha_0}),
\end{align}
where  $\mathcal{H}_{\infty}(c_0 \epsilon/ C_n^{1/2}, \mathcal{B}_D)$ denotes the $c_0\epsilon/C_n^{1/2}$-entropy in the sup-norm of $\mathcal{B}_D \subset \mathcal{W}^{m,2}([0,1])$ and $\mathcal{H}(c_0 \epsilon, \mathcal{V}_{\alpha_0})$ denotes the $c_0\epsilon$-entropy of the interval $\mathcal{V}_{\alpha_0} \subset \mathbbm{R}$. 

We now bound the entropies on the RHS of \eqref{eq:A16}. By \eqref{eq:A8} we immediately have
\begin{align*}
\mathcal{H}_{\infty}(c_0\epsilon/C_n^{1/2}, \mathcal{B}_D) \leq c_0 C_n^{1/2m} (\lambda^{1/2}\epsilon)^{-1/{m}}.
\end{align*}
Moreover, by Lemma 2.5 of \citet{van de Geer:2000}, 
\begin{align*}
H(c_0\epsilon, \mathcal{V}_{\alpha_0}) \leq \log\left(\frac{c_0}{\epsilon}+1\right).
\end{align*}
For all $x\geq 0$ and natural numbers $m \in \mathbbm{N}$ we have $\log(1+x^m) \leq \log (m!) + x$. Hence, for all $x \geq \log (m!)$, $\log(1+x^m) \leq 2 x$. Applying this inequality with $x = (c_0/\epsilon)^{1/m}$ and sufficiently small $\epsilon>0$, yields
\begin{align*}
H(c_0\epsilon, \mathcal{V}_{\alpha_0}) \leq 2 \frac{c_0^{1/m}}{\epsilon^{1/m}} \leq c_0 \frac{C_n^{1/2m}}{\lambda^{1/2m} \epsilon^{1/m}},
\end{align*}
for some $c_0>0$ where we have also used $C_n > \lambda$ and consequently $C_n/\lambda>1$. Returning to \eqref{eq:A16}, we now have
\begin{align*}
\mathcal{H}_{B,K}(\epsilon, \{h_{g, \alpha},\ g \in \mathcal{B}_D, \ \alpha \in \mathcal{V}_{\alpha_0} \}, \bar{P}) \leq \mathcal{H}_{\infty}(\epsilon, \{h_{g, \alpha},\ g \in \mathcal{B}_D, \ \alpha \in \mathcal{V}_{\alpha_0} \}) \leq c_0 \frac{C_n^{1/2m}}{\lambda^{1/2m} \epsilon^{1/m}},
\end{align*}
for some $c_0>0$ and all small $\epsilon>0$.  Remembering that by our limit assumptions $C_n \lambda^{-1/4m} \to 0$, the bracketing integral of the class of functions $\{h_{g, \alpha},\ g \in \mathcal{B}_D, \ \alpha \in \mathcal{V}_{\alpha_0} \}$ from $0$ to $R = c_0 C_n \lambda^{-1/(4m)}$ may be bounded by
\begin{align*}
\int_{0}^{c_0 C_n \lambda^{-1/4m}} \mathcal{H}_{B,K}^{1/2}(u, \{h_{g, \alpha},\ g \in \mathcal{B}_D, \ \alpha \in \mathcal{V}_{\alpha_0} \}, \bar{P})du \leq c_0 C_n^{1-1/4m} \lambda^{1/8m^2} \lambda^{-1/2m},
\end{align*}
for all large $n$, as $R \to 0$, by our limit assumptions. Fix $\epsilon^{\prime}>0$ and take $a = \epsilon^{\prime} n^{1/2} C_n$ in Theorem 5.11 of \citet{van de Geer:2000}. We need to check the conditions
\begin{align*}
\epsilon^{\prime} n^{1/2} C_n & \geq c_0 C_n^{1-1/4m} \lambda^{1/8m^2} \lambda^{-1/2m} \\ \epsilon^{\prime} n^{1/2} C_n & \geq c_0 \lambda^{-1/4m} C_n,
\end{align*}
for a sufficiently large positive constant $c_0$. The latter condition is satisfied whenever $n^{1/2} \lambda^{1/4m} \to \infty$, equivalently, $n^{2m} \lambda \to \infty$. Our assumption $n^{m} \lambda \to \infty$ as $n \to \infty$ ensures that this condition is satisfied. The former condition is also satisfied under this limit assumption. Therefore, setting $C_1 = C_0$ in Theorem 5.11 of \citet{van de Geer:2000} yields
\begin{align*}
\Pr\left[ \sup_{g \in \mathcal{B}_{D},   \alpha \in \mathcal{V}_{\alpha_0} } |I_1(g, \alpha) - \mathbb{E}\{I_1(g, \alpha) \}| \geq \epsilon^{\prime} C_n  \right] & = \Pr\left[ \sup_{g \in \mathcal{B}_D,  \alpha \in \mathcal{V}_{\alpha_0} } |v_n(h_{g,\alpha})| \geq \epsilon^{\prime} n^{1/2} C_n \right]
\\ & \leq  c_0 \exp \left[ - c_0 |\epsilon^{\prime}|^2 n   \lambda^{1/2m} \right]
\end{align*}
for all large $n$. The exponential tends to zero for every $\epsilon^{\prime}$, hence  we have established \eqref{eq:A4} and the result of the theorem follows.

\end{proof}

\begin{proof}[Proof of Corollary~1]

The proof may be deduced from \citet[Chapter~13, Lemma~2.17]{Eg:2009}, which establishes the embedding
\begin{equation*}
\{\|f\|^2 + \lambda^{j/m} \|f^{(j)}\|^2  \}^{1/2} \leq c_0 \|f\|_{m , \lambda},
\end{equation*}
for all $j \leq m$ and  $f \in \mathcal{W}^{m, 2}([0,1])$ with $c_0$ depending only on $j$ and $m$. Since $\mathcal{W}^{m, 2}([0,1])$ is a vector space, Theorem 1 now implies that for any $1 \leq j \leq m$
\begin{equation*}
n^{-j/(2m+1)} \|\widehat{g}_n^{(j)} - g_0^{(j)}\| \leq c_0 \|\widehat{g}_n-g_0\|_{m, \lambda} = O_P(n^{-m/(2m+1)}),
\end{equation*}
for $\lambda \asymp n^{-2m/(2m+1)}$. The result follows.
\end{proof}

\section{Proofs of Proposition 1, Theorem 2 and Corollary 2}

For the proofs of this section we will use the analogue of Lemma 2 for spline functions. The proof of Lemma 3 below is given in \citet[Lemma 6.1]{Zhou:1998}.
\begin{lemma}
\label{lem:3}
Assume (B5)--(B7). Then, there exist constants $0<c_1<c_2<\infty$ (independent of $n$ and $K$) such that for any $f \in S_{K}^p([0,1])$ and $n \geq n_0$,
\begin{align*}
c_1 \|f\|^2 \leq \frac{1}{n} \sum_{i=1}^n |f(t_i)|^2 \leq c_2 \|f\|^2.
\end{align*}
\end{lemma}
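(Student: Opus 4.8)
The statement is a classical norm-equivalence for spline spaces, and indeed coincides (up to the labelling of the hypotheses) with Lemma~6.1 of \citet{Zhou:1998}; here is the route I would follow if I were to prove it directly. The plan is to compare $n^{-1}\sum_{i=1}^n |f(t_i)|^2$ first with the population quantity $\int_0^1 |f(t)|^2\, dQ(t)$, where $Q$ is the limiting design distribution, and then with $\|f\|^2 = \int_0^1|f(t)|^2\,dt$. The second comparison is immediate: since $Q$ has a Lebesgue density $q$ with $0<\inf q \le \sup q<\infty$, one has $(\inf q)\|f\|^2 \le \int_0^1 |f|^2\, dQ \le (\sup q)\|f\|^2$. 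Everything therefore reduces to controlling $\bigl|\int_0^1 |f|^2\, d(Q_n-Q)\bigr|$ uniformly over $f \in S_{K}^p([0,1])$, where $Q_n$ denotes the empirical distribution of the design.

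That comparison I would handle exactly as the paper treats assumption (A6). Writing $h=f^2$, which is continuous, piecewise polynomial, and hence absolutely continuous on $[0,1]$, an integration by parts gives $\bigl|\int_0^1 h\, d(Q_n-Q)\bigr| \le \sup_{t\in[0,1]}|Q_n(t)-Q(t)|\int_0^1 |h'(t)|\,dt$, the boundary terms vanishing because $Q_n$ and $Q$ agree at the endpoints of $[0,1]$. Since $h'=2ff'$ almost everywhere, Cauchy--Schwarz yields $\int_0^1|h'| \le 2\|f\|\,\|f'\|$. The crux is then an inverse (Bernstein--Markov) inequality for splines: there is a constant $c_0$, depending only on $p$ and the mesh-ratio bound, such that $\|f'\| \le c_0 K\|f\|$ for every $f \in S_{K}^p([0,1])$. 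This I would obtain by applying the $L^2$ Markov inequality for polynomials of degree $\le p-1$ on each knot interval $[x_{\ell-1},x_\ell]$ and summing, $\|f'\|^2_{L^2[x_{\ell-1},x_\ell]} \le (C_p/h_\ell)^2\|f\|^2_{L^2[x_{\ell-1},x_\ell]}$, using that the knots are quasi-uniform (so $h_\ell^{-1}\le c_0 K$ uniformly in $\ell$, since $\sum_\ell h_\ell=1$ forces $\max_\ell h_\ell \asymp K^{-1}$).

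Combining the two steps gives $\bigl|n^{-1}\sum_{i=1}^n |f(t_i)|^2 - \int_0^1 |f|^2\, dQ\bigr| \le 2c_0 K\,\|f\|^2\,\sup_t|Q_n(t)-Q(t)|$, and the design condition $\sup_t|Q_n(t)-Q(t)| = o(K^{-1})$ makes the prefactor $o(1)\|f\|^2$ uniformly in $f$. Hence for all $n$ beyond some $n_0$ we have $\bigl|n^{-1}\sum_{i=1}^n|f(t_i)|^2 - \int_0^1|f|^2\,dQ\bigr| \le \tfrac12(\inf q)\|f\|^2$, and the desired two-sided bound follows with $c_1=\tfrac12\inf q$ and $c_2=2\sup q$, both independent of $n$ and $K$.

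The only genuinely technical ingredient is the inverse inequality $\|f'\|\le c_0 K\|f\|$ with a constant free of $n$ and $K$, and this is precisely where the quasi-uniformity of the knots cannot be dispensed with, since without it the shortest knot interval could inflate the constant. An alternative and equally standard route would be to pass to the B-spline coefficient vector: de Boor's two-sided stability of the B-spline Gram matrix gives $\|f\|^2 \asymp K^{-1}\sum_j\beta_j^2$, and the design condition ensures each knot interval contains $\asymp n/K$ well-spread points so that $n^{-1}\sum_i|f(t_i)|^2\asymp K^{-1}\sum_j\beta_j^2$ as well; with that route the obstacle shifts to the Gram-matrix bound, which is itself classical.
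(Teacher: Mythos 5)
Your argument is correct, but note that the paper does not actually prove this lemma: it simply cites Lemma 6.1 of \citet{Zhou:1998}, whose own proof runs through the B-spline coefficient vector and two-sided spectral bounds on the empirical Gram matrix $n^{-1}\sum_i \mathbf{B}(t_i)\mathbf{B}^{\top}(t_i)$ --- essentially the ``alternative route'' you sketch in your closing paragraph. Your primary route (integration by parts against $Q_n-Q$, Cauchy--Schwarz on $(f^2)'=2ff'$, and the inverse inequality $\|f'\|\le c_0K\|f\|$ obtained from the $L^2$ Markov inequality on each knot interval together with quasi-uniformity) is a different and equally valid mechanism; it is in fact the lower-rank analogue of how the paper handles the full-rank case, where assumption (A6) is paired with the Eggermont--LaRiccia quadrature bound of Lemma~\ref{lem:2}. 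What your route buys is an explicit accounting of where the hypotheses enter: the mesh-ratio bound in (B7) controls the inverse-inequality constant via $\min_\ell h_\ell \ge c/(MK)$, and the design condition $\sup_t|Q_n(t)-Q(t)|=o(K^{-1})$ in (B8) exactly cancels the factor $K$, so the perturbation is $o(1)\|f\|^2$ uniformly and the constants $c_1,c_2$ can be taken as fixed multiples of $\inf q$ and $\sup q$. Two minor caveats: the integration-by-parts step needs $f^2$ absolutely continuous, so it tacitly assumes $p\ge 2$; this is harmless in context since the lemma is only invoked with $p>m\ge 1$, but for $p=1$ (step functions) one would instead count design points per knot interval directly. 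Also, the hypothesis labels in the statement should read (B6)--(B8), which are the assumptions your proof actually uses.
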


We first provide the proof of Proposition 1. For the proof we make use of the B-spline functions $B_{1,p}, \ldots, B_{K+p,p}$ supported by the $K$ interior knots $0<x_1< \ldots, x_K<1$ and their properties. See \citet{DB:2001} for a thorough treatment. For simplicity we drop the $p-$subscript and simply write $B_1, \ldots, B_{K+p}$.

\begin{proof}[Proof of Proposition~1] We only need to prove $\sup_{t \in [0,1]}|f(t)| \leq c_0 K^{1/2} \|f\|_{m,\lambda}$ for some positive and finite $c_0$, as for $p>m$ we have $S_{K}^p([0,1]) \subset \mathcal{W}^{m,2}([0,1])$ and by (7) in the main text for all $\lambda \in (0,1]$ there exists $c_0$ such that $\sup_{t \in [0,1]}|f(t)| \leq c_0 \lambda^{-1/4m} \|f\|_{m,\lambda}$.

For any $f \in S_{K}^p([0,1])$ write $f(t) = \sum_{j=1}^{K+p} f_j B_{j}(t)$. The Schwarz inequality yields
\begin{align*}
|f(t)|^2 & \leq \left\{\sum_{j=1}^{K+p} |f_j|^2 \right\}\left\{\sum_{j=1}^{K+p} |B_{j}(t)|^2 \right\}  \\ & \leq \left\{\sum_{j=1}^{K+p} |f_j|^2 \right\}\left\{ \sup_{t \in [0,1]}\max_{1 \leq j \leq {K+p}}|B_j(t)| \sum_{j=1}^{K+p} |B_{j}(t)| \right\}
\\ & \leq \sum_{j=1}^{K+p} |f_j|^2,
\end{align*}
where to derive the last inequality we have used the facts $0\leq B_j(t) \leq 1$ and  $\sum_{j=1} B_j(t) = 1$ for every $t \in [0,1]$, see \citet[p. 96]{DB:2001}. By Lemma 6.1 in \citet{Zhou:1998}, there exists a positive constant $c_1$ depending only on $Q$ such that
\begin{align*}
\sum_{j=1}^{K+p} |f_j|^2 \leq c_1 K \int |f(t)|^2 dQ(t) \leq c_0 K \int |f(t)|^2 dt \leq c_0 K\left( \|f\|^2 + \lambda \|f^{(m)}\|^2 \right),
\end{align*}
where the second inequality follows from (B8) and the third inequality from the positivity of $\lambda$ and the map $f \mapsto \|f^{(m)}\|^2$. Putting everything together, there exists a constant $c_0$ not depending on $f \in S_{K}^p([0,1])$, $\lambda$ or $K$ such that
\begin{align*}
\sup_{t \in [0,1]} |f(t)| \leq c_0 K^{1/2} \|f\|_{m,\lambda},
\end{align*}
as asserted.

\end{proof}

For the proof of Theorem~2 we will need the following approximation lemma.

\begin{lemma}
\label{lem:4}
For each $f \in \mathcal{C}^{j}([0,1])$ there exists a spline function $s_f$ of order $p$ with $p >j$ such that
\begin{equation*}
\sup_{t \in [0,1]} | f(t) - s_f(t) | \leq c_0 |\mathbf{x}|^{j} \sup_{|t-y|<\mathbf{x}} |f^{(j)}(t)-f^{(j)}(y)|,
\end{equation*}
where $x_i$ are the knots,  $\mathbf{x} = \max_i|x_i - x_{i-1}|$ is the maximum distance of adjacent knots and the constant $c_0$ depends only on $p$ and $j$.
\end{lemma}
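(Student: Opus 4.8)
The plan is to take $s_f = Qf$, where $Q$ is a linear quasi-interpolation operator onto $S_{K}^p([0,1])$ that is (i) local and bounded in the supremum norm with a constant depending only on $p$, and (ii) exact on polynomials of degree at most $p-1$. Such an operator is classical: one may use the de Boor quasi-interpolant constructed from the dual functionals of the B-spline basis $B_1, \ldots, B_{K+p}$ of $S_{K}^p([0,1])$, see \citet{DB:2001}. Concretely, for every knot interval $I_\ell = [x_\ell, x_{\ell+1}]$ the values of $Qg$ on $I_\ell$ depend only on the values of $g$ on a set $\widetilde{I}_\ell$ consisting of at most $p$ consecutive knot intervals containing $I_\ell$, so that $|\widetilde{I}_\ell| \leq p\mathbf{x}$, and $\|Qg\|_{L^\infty(I_\ell)} \leq c_0 \|g\|_{L^\infty(\widetilde{I}_\ell)}$ with $c_0$ depending only on $p$; moreover $Qg = g$ whenever $g$ is a polynomial of degree at most $p-1$.

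Fix $t \in [0,1]$, let $I_\ell$ be a knot interval containing $t$, and let $T_j f$ be the degree-$j$ Taylor polynomial of $f$ about the left endpoint $x_\ell$. Since $j < p$, the polynomial $T_j f$ has degree at most $p-1$, so $Q(T_j f) = T_j f$ and, by linearity,
$$ f(t) - (Qf)(t) = \bigl( f(t) - T_j f(t) \bigr) - \bigl( Q(f - T_j f) \bigr)(t). $$
For the first term, Taylor's formula with integral remainder gives $f(u) - T_j f(u) = \tfrac{1}{(j-1)!}\int_{x_\ell}^{u}(u-s)^{j-1}\bigl(f^{(j)}(s) - f^{(j)}(x_\ell)\bigr)\,ds$ for every $u$, whose integrand involves only increments of $f^{(j)}$ over distances at most $|u - x_\ell|$; this yields the pointwise bound $|f(u) - T_j f(u)| \leq c_0 |u - x_\ell|^{j}\,\omega_{|u-x_\ell|}$, where $\omega_r := \sup_{|s-v| \le r}|f^{(j)}(s) - f^{(j)}(v)|$. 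Taking $u = t$ (so $|t - x_\ell| \leq \mathbf{x}$) bounds the first term by $c_0 \mathbf{x}^{j}\,\omega_{\mathbf{x}}$. For the second term, property (i) gives $|(Q(f - T_j f))(t)| \leq c_0 \|f - T_j f\|_{L^\infty(\widetilde{I}_\ell)}$, and applying the pointwise bound over $\widetilde{I}_\ell$ (where $|u - x_\ell| \leq p\mathbf{x}$), together with the subadditivity estimate $\omega_{p\mathbf{x}} \leq p\,\omega_{\mathbf{x}}$ valid because $p$ is a fixed integer, shows $\|f - T_j f\|_{L^\infty(\widetilde{I}_\ell)} \leq c_0 \mathbf{x}^{j}\,\omega_{\mathbf{x}}$. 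Combining the two estimates, and observing that $t$ (hence $\ell$) was arbitrary and that $\omega_{\mathbf{x}}$ coincides with $\sup_{|s-v|<\mathbf{x}}|f^{(j)}(s)-f^{(j)}(v)|$ by continuity of $f^{(j)}$, we obtain the asserted inequality with a constant $c_0$ depending only on $p$ and $j$.

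The only step that is not a routine Taylor estimate is the existence of the quasi-interpolant $Q$ with properties (i) and (ii); this, however, is entirely standard B-spline theory and contributes no difficulty beyond citing \citet{DB:2001}. The minor bookkeeping points worth a word in the write-up are the passage from the modulus of continuity at scale $p\mathbf{x}$ down to scale $\mathbf{x}$, harmless since $p$ is a fixed integer determined by the spline order, and the remark that the de Boor quasi-interpolant of a function on $[0,1]$ indeed produces an element of $S_{K}^p([0,1])$ for the interior knots $x_1, \ldots, x_K$ together with boundary knots at $0$ and $1$.
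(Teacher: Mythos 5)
Your argument is correct and is essentially the same as the paper's: the paper proves Lemma~\ref{lem:4} simply by citing \citet[pp.~145--149]{DB:2001}, and what you have written out is precisely the de Boor quasi-interpolant argument contained on those pages (local bounded projector reproducing polynomials of degree $<p$, plus a Taylor remainder estimate in terms of the modulus of continuity of $f^{(j)}$). The bookkeeping points you flag (subadditivity $\omega_{p\mathbf{x}}\leq p\,\omega_{\mathbf{x}}$ and the strict-versus-weak inequality in the supremum) are handled correctly.
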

\begin{proof}
See \citet[pp. 145--149]{DB:2001}.
\end{proof}

We now turn to the proof of Theorem~2.

\begin{proof}[Proof of Theorem~2]

As previously, we denote the objective function in (6) of the main text with $L_n(g, \widehat{\alpha}_n)$, that is, for every $g \in S_{K}^p([0,1])$,
\begin{align*}
L_n(g, \widehat{\alpha}_n)= n^{-1} \sum_{i=1}^n l_{\widehat{\alpha}_n}(Y_i,g(t_i)) + \lambda\|g^{(m)}\|^2,
\end{align*}
and let $g_0 \in \mathcal{C}^{j}([0,1])$ denote the true function. Furthermore, let $s_{g_0}$ denote the spline approximation to $g_0$ constructed with the help of Lemma~\ref{lem:4}. Lemma~\ref{lem:4} and assumptions (B6)--(B7) imply that
\begin{align}
\label{eq:A17}
\sup_{t \in [0,1]}|g_0(t) - s_{g_0}(t)| = O(K^{-j}).
\end{align}
Hence, we may write $s_{g_0}(t_i) = g_0(t_i) + R(t_i)$ with $R(t_i) =   s_{g_0}(t_i) - g_0(t_i)$. As in the proof of Theorem~1, the theorem will be proven if establish that for every $\epsilon>0$ there exists a sufficiently large $ D = D_{\epsilon}$ such that
\begin{align}
\label{eq:A18}
\liminf_{n \to \infty}\Pr\left[ \inf_{g \in S_{K}^p([0,1]):\|g\|_{m,\lambda} = D} L_n(s_{g_0} + C_n^{1/2}g, \widehat{\alpha}_n) > L_n(s_{g_0}, \widehat{\alpha}_n), |\widehat{\alpha}_n-\alpha_0|<\delta/2 \right] \geq 1-\epsilon/2,
\end{align}
where $C_n = n^{-1} \min\{K, \lambda^{-1/2m}\} + \min\{\lambda^2 K^{2m},\lambda\} + K^{-2j}$, as an application of Lemma~\ref{lem:1} would yield the existence of a $\widehat{g}_n \in S_{K}^p([0,1])$ such that $\|\widehat{g}_n-s_{g_0}\|_{m,\lambda} \leq D C_n^{1/2}$ with probability at least $1-\epsilon$. Note that $S_{K}^p([0,1])$ is a finite-dimensional Hilbert space under $\langle \cdot, \cdot \rangle_{m,\lambda}$, hence weak and strong continuity as well as weak and strong convergence are equivalent. Now, by \eqref{eq:A17},
\begin{align*}
\|\widehat{g}_n-g_0\|^2 & \leq 2\|\widehat{g}_n-s_{g_0}\|^2 + 2\|s_{g_0}-g_0\|^2 \leq 2D C_n + 2D^{\prime}K^{-2j} \leq  4DC_n,
\end{align*}
for all large $D$ with probability at least $1-\epsilon$. Thus, the result of Theorem~2 follows upon proving \eqref{eq:A18}.

To establish \eqref{eq:A18} we decompose  $L_n(s_{g_0} + C_n^{1/2}g, , \widehat{\alpha}_n) - L_n(s_{g_0}, , \widehat{\alpha}_n)$ as follows:
\begin{align*}
L_n(s_{g_0} + C_n^{1/2}g, \widehat{\alpha}_n) - L_n(s_{g_0}, \widehat{\alpha}_n) & = \frac{1}{n} \sum_{i=1}^n l_{\widehat{\alpha}_n}(Y_i, s_{g_0}(t_i) +C_n^{1/2}g(t_i)) - \frac{1}{n} \sum_{i=1}^n l_{\widehat{\alpha}_n}(Y_i, s_{g_0}(t_i))
\\ & \quad +2 \lambda C_n^{1/2} \langle s_{g_0}^{(m)}, g^{(m)} \rangle + \lambda C_n \|g^{(m)}\|^2
\\ & =\frac{1}{n} \sum_{i=1}^n \int_{R(t_i)}^{R(t_i)+ C_n^{1/2}g(t_i)} \{l_{\widehat{\alpha}_n}^{\prime}(Y_i, g_0(t_i)+u) - l_{\widehat{\alpha}_n}^{\prime}(Y_i, g_0(t_i))\} du
\\ & \quad + \frac{C_n^{1/2}}{n} \sum_{i=1}^n g(t_i) l_{\widehat{\alpha}_n}^{\prime}(Y_i, g_0(t_i)) + 2 \lambda C_n^{1/2}  \langle s_{g_0}^{(m)}, g^{(m)} \rangle
\\ &\quad + \lambda C_n\|g^{(m)}\|^2
\\ & = I_1(g, \widehat{\alpha}_n) + I_2(g, \widehat{\alpha}_n) + I_3(g),
\end{align*}
say, with
\begin{align*}
I_1(g, \widehat{\alpha}_n) &:= \frac{1}{n} \sum_{i=1}^n \int_{R(t_i)}^{R(t_i)+C_n^{1/2}g(t_i)} \{l_{\widehat{\alpha}_n}^{\prime}(Y_i, g_0(t_i)+u) - l_{\widehat{\alpha}_n}^{\prime}(Y_i, g_0(t_i))\} du + \lambda C_n \|g^{(m)}\|^2
\\ I_2(g, \widehat{\alpha}_n) &:=  \frac{C_n^{1/2}}{n}  \sum_{i=1}^n g(t_i) l_{\widehat{\alpha}_n}^{\prime}(Y_i, g_0(t_i))
\\ I_3(g) &=  2 \lambda C_n^{1/2}  \langle s_{g_0}^{(m)}, g^{(m)} \rangle.
\end{align*}
We will show that
\begin{align}
\inf_{ \substack{ g \in S_{K}^p([0,1]):\|g\|_{m, \lambda} = D \\ |\alpha-\alpha_0| < \delta/2 }} \mathbb{E}\{I_1(g, \alpha) \} & \geq c_0 D^2 C_n - c_1 D C_n \label{eq:A19} \\
\sup_{\substack{g \in S_{K}^p([0,1]):\|g\|_{m, \lambda} \leq D \\  |\alpha-\alpha_0| < \delta/2}} |I_1(g, \alpha)-\mathbb{E}\{I_1(g, \alpha)\}| & = o_P(1) C_n \label{eq:A20}
\\ \sup_{g \in S_{K}^p([0,1]):\|g\|_{m, \lambda} \leq D} |I_2(g, \widehat{\alpha}_n)| &= O_P(1) D C_n \label{eq:A21} \\
\sup_{g \in S_{K}^p([0,1]):\|g\|_{m, \lambda} \leq D} |I_3(g)| & = O(1) D C_n \label{eq:A22},
\end{align}
for a strictly positive $c_0$. These are sufficient for Theorem~2 to hold, as, for large enough satisfying $D\geq 1$, the infimum of  $\mathbb{E}\{I_1(g, \widehat{\alpha}_n) \}$ will be positive and dominate all other terms in the decomposition guaranteeing \eqref{eq:A18}.

Beginning with \eqref{eq:A22}, the Schwarz inequality immediately yields
\begin{align*}
\sup_{g \in S_{K}^p([0,1]):\|g\|_{m, \lambda} \leq D}|I_3(g)| \leq 2 \lambda C_n^{1/2}\| s_{g_0}^{(m)}\| \|g^{(m)}\| \leq c_0 \lambda C_n^{1/2} \sup_{g \in S_{K}^p([0,1]):\|g\|_{m, \lambda} \leq D}\|g^{(m)}\|.
\end{align*}
Here, we have used the boundedness of $||s_{g_0}^{(m)}||$, see \citet[p. 155]{DB:2001}. By definition of $\|\cdot\|_{m,\lambda}$, we have $\lambda^{1/2}\|g^{(m)}\| \leq \|g\|_{m,\lambda}$. At the same time, for every $g \in S_{K}^p([0,1])$ with $g = \sum_j g_j B_{j}$,
\begin{align*}
\|g^{(m)}\|^2 = \|D^m g\|^2 \leq c_0 K^{2m-1} \sum_{j=1}^{K+p} |g_j|^2 \leq c_0 K^{2m} \|g\|_{m,\lambda}^2,
\end{align*}
where $D^m$ denotes the $m$th order differentiation operator on $S_{K}^p$, the second inequality follows from Lemma 5.2 of \citet{Cardot:2002} and the third inequality follows as in the proof of Lemma~3. Combining these two bounds we find
\begin{align*}
\sup_{g \in S_{K}^p([0,1]):\|g\|_{m, \lambda} \leq D}|I_3(g)| \leq c_0 D \lambda C_n^{1/2} \min\{ K^m, \lambda^{-1/2} \}  = c_0 D C_n^{1/2} \min\{ \lambda K^m, \lambda^{1/2} \} \leq c_0 D C_n,
\end{align*}
as, by definition of $C_n$,  $\min\{ \lambda K^m, \lambda^{1/2} \} \leq C_n^{1/2}$. 

We now prove \eqref{eq:A21}. As in the proof of Theorem 1, the crucial quantity is the empirical entropy $H(\epsilon, \{g \in S_{K}^p([0,1]): \|g\|_{m,\lambda} \leq D \}, Q_n)$.  Denote for simplicity $\mathcal{B}_D = \{g \in S_{K}^p([0,1]): \|g\|_{m,\lambda} \leq D\}$. Since, $S_{K}^p([0,1]) \subset \mathcal{W}^{m,2}([0,1])$ for $p >m$, by \eqref{eq:A8}, 
\begin{align}
\label{eq:A23}
H(\epsilon, \mathcal{B}_D, Q_n) \leq C D^{1/m} \lambda^{-1/2m} \epsilon^{-1/m}.
\end{align}
At the same time for any $f,g \in S_{K}^p([0,1])$, by Lemma\ref{lem:3}, we have $\|f-g\|_{Q_n} \leq c_2 \|f-g\|$.
It follows that $H(\epsilon, \mathcal{B}_D, Q_n) \leq H(c_0 \epsilon, \mathcal{B}_D)$ where the latter denotes the entropy with respect to the $\mathcal{L}^2([0,1])$ distance. Notice further that $\mathcal{B}_D \subset \{f \in S_{K}^p([0,1]): \|f\| \leq D\}$. Hence, by Corollary 2.6 in \citet{van de Geer:2000}, we have
\begin{align}
\label{eq:A24}
H(\epsilon, \mathcal{B}_D, Q_n) \leq H(c_0 \epsilon, \mathcal{B}_D) \leq H(c_0 \epsilon, \{f \in S_{K}^p([0,1]): \|f\| \leq D\}) \leq (K+p) \log \left( \frac{4Dc_0}{\epsilon} + 1 \right).
\end{align}
Taking square roots and integrating the bounds in \eqref{eq:A23} and \eqref{eq:A24}, we find that for all large $n$ there exists a $c_0>0$ independent of $n$ such that
\begin{align}
\label{eq:A25}
\int_{0}^{c_0 D} H^{1/2}(u, \mathcal{B}_D, Q_n) du \leq c_0 D \min\{K^{1/2}, \lambda^{-1/4m}\},
\end{align}
where we have used the fact that
\begin{align*}
\int_{0}^{c_0 D} \log^{1/2} \left( \frac{4Dc_0}{u} + 1 \right) du = 4 c_0 D \int_{0}^{1/4} \log^{1/2} \left( \frac{1}{u} + 1 \right) du 
\end{align*}and the latter integral is independent of $n$ and finite. As in the proof of Theorem 1, \eqref{eq:A25} and Corollary 8.3 of \citet{van de Geer:2000} now imply that for any $\alpha \in (\alpha_0-\delta, \alpha_0+\delta)$,
\begin{align*}
\Pr\left( \sup_{g \in \mathcal{B}_D} \left| \frac{1}{n} \sum_{i=1}^n g(t_i) l^{\prime}_{\alpha}(Y_i,g_0(t_i)) \right| \geq  \frac{c_0 D \min\{K^{1/2}, \lambda^{-1/4m}}{2 n^{1/2}} \right) \leq c_0 \exp\left[ -c_0 \min\{K, \lambda^{-1/2m}\} \right],
\end{align*}
so that yet again by our assumptions and the union bound we find
\begin{align*}
\Pr & \left( \sup_{g \in \mathcal{B}_D, |\alpha-\alpha_0|<\delta/2} \left| \frac{1}{n} \sum_{i=1}^n g(t_i) l^{\prime}_{\alpha}(Y_i,g_0(t_i)) \right|  \geq  \frac{c_0 D \min\{K^{1/2}, \lambda^{-1/4m}\}}{2 n^{1/2}} \right) \\ & \leq c_0 n^{1/2} \max\{K^{-1/2}, \lambda^{1/4m} \} \exp\left[ -c_0 \min\{K, \lambda^{-1/2m}\} \right],
\end{align*}
and the RHS of this inequality tends to zero, by our limit assumptions. Hence,
\begin{align*}
\sup_{g \in S_{K}^p([0,1]):\|g\|_{m, \lambda} \leq D} |I_2(g, \widehat{\alpha}_n)| & \leq \sup_{g \in S_{K}^p([0,1]):\|g\|_{m, \lambda} \leq D, |\alpha - \alpha_0| < \delta/2} |I_2(g, \alpha)|  
\\ & = D C_n^{1/2} O_P(1) \frac{\min\{K^{1/2}, \lambda^{-1/4m}\}}{n^{1/2}}  \\ & = O_P(1) D C_n, 
\end{align*}
which yields \eqref{eq:A21}.

We now establish a uniform lower bound on $\mathbb{E}\{I_1(g, \alpha)\}$, as required in \eqref{eq:A19}. For this, first note that $\sup_{t \in [0,1]} |R(t)| = o(1)$ as $n \to \infty$, by \eqref{eq:A16} and (B6). Furthermore, Proposition 1 in the main text yields the existence of a reproducing kernel $\mathcal{R}_{m,K,\lambda}: [0,1]^2 \to \mathbbm{R}$  such that, for every $g \in S_{K}^p([0,1])$, $g(t) = \langle g, \mathcal{R}_{m,K,\lambda}(t, \cdot)  \rangle_{m,\lambda}$ and
\begin{align}
\label{eq:A26}
\sup_{t \in [0,1]} \| \mathcal{R}_{m,K,\lambda}(t, \cdot)\|_{m,K,\lambda} \leq c_0 \min\{\lambda^{-1/(4m)}, K^{1/2} \},
\end{align}
for some universal constant $c_0$, not depending on $K$ or $\lambda$.  Using these two properties and the Schwarz inequality we now see that
\begin{align}
\label{eq:A27}
C_n^{1/2} \sup_{g \in S_{K}^p([0,1]): \|g\|_{m,\lambda} \leq D} \|g\|_{\infty} \leq c_0 D C_n^{1/2} \min\{\lambda^{-1/(4m)}, K^{1/2} \} = o(1),
\end{align}
as $n \to \infty$, by our limit assumptions. Now, for any $\alpha \in (\alpha_0-\delta/2, \alpha_0+\delta/2)$  a first order Taylor expansion about zero shows the existence of an $|s_i| \leq 1$ such that
\begin{align}
\mathbb{E}\{l_{\alpha}^{\prime}(Y_i, g_0(t_i)+u)\} du & =  \mathbb{E}\{l_{\alpha}^{\prime \prime}(Y_i, g_0(t_i))\} u   + \left[\mathbb{E}\{l_{\alpha}^{\prime \prime}(Y_i, g_0(t_i)+ s_i u) - \mathbb{E}\{l_{\alpha}^{\prime \prime}(Y_i, g_0(t_i))\right]u
\label{eq:A28},
\end{align}
as, by Fisher consistency, $\mathbb{E}\{l_{\alpha}^{\prime}(Y_i, g_0(t_i))\} = 0$. Now, in the notation of (A4),
\begin{align*}
\left|\mathbb{E}\{l_{\alpha}^{\prime \prime}(Y_i, g_0(t_i)+ s_i u) - \mathbb{E}\{l_{\alpha}^{\prime \prime}(Y_i, g_0(t_i))\right| & \leq \left|m_{t_i}(s_i u, \alpha)-m_{t_i}(0, \alpha)\right|
\\& \quad + \left(1+\frac{1}{\alpha}\right) \mathbb{E} \left\{\left| n_{t_i}(s_i u, \alpha, Y_i) - n_{t_i}(0, \alpha, Y_i)  \right| \right\}
\\ & = o(1),
\end{align*}
as $u \to 0$, uniformly in $i$ and $n$ and $\alpha \in (\alpha_0-\delta/2, \alpha_0+\delta/2)$, by equicontinuity and dominated convergence. Using \eqref{eq:A28} inside the integral below we get
\begin{align*}
\int_{R(t_i)}^{R(t_i)+ C_n^{1/2}g(t_i)} \mathbb{E}\{l_{\alpha}^{\prime}(Y_i, g_0(t_i)+u))\} du &= \int_{R(t_i)}^{R(t_i)+ C_n^{1/2}g(t_i)} \mathbb{E}\{l_{\alpha}^{\prime \prime}(Y_i, g_0(t_i))\}u(1  + o(1)) du
\\ & =  \mathbb{E}\{l_{\alpha}^{\prime \prime}(Y_i, g_0(t_i))\} \left[ 2^{-1} C_n |g(t_i)|^2 + R(t_i) C_n^{1/2}g(t_i) \right](1+o(1)).
\end{align*}
Notice that the approximation is valid, since the domains of integration tend to zero as $n \to \infty$. Noting now that $\mathbb{E}\{l_{\alpha}^{\prime \prime}(Y_i, g_0(t_i))\} = \mathbb{E} \{ f_{\theta_{0,i}}(Y_i) |u_{\theta_{0,i}}(Y_i)|^2 \}$, (A5) reveals that, for all large $n$,
\begin{align*}
\int_{R(t_i)}^{R(t_i)+ C_n^{1/2}g(t_i)} \mathbb{E}\{l_{\alpha}^{\prime}(Y_i, g_0(t_i)+u))\} du \geq c_0 C_n |g(t_i)|^2 - c_1 C_n^{1/2}|R(t_i)| |g(t_i)|,
\end{align*}	
for strictly positive $c_0$ and $c_1$ that, by (A5), does not depend on $i$ and $\alpha \in (\alpha_0-\delta/2, \alpha_0+\delta/2)$. Averaging and approximating the sum from below with the help of Lemma~\ref{lem:3},
\begin{align*}
\inf_{|\alpha-\alpha_0|<\delta}\mathbb{E}\{I_1(g,\alpha)\} & \geq c_0 \frac{C_n}{n} \sum_{i = 1}^n |g(t_i)|^2 - c_1 \frac{C_n^{1/2}}{n} \sum_{i=1}^n |R(t_i)| |g(t_i)| + \lambda \|g^{(m)} \|^2
 \\ & \geq  c_0 C_n \|g\|_{m,\lambda}^2 - c_1 \frac{C_n^{1/2}}{n} \sum_{i=1}^n |R(t_i)| |g(t_i)|.
\end{align*}
Furthermore, using Lemma~\ref{lem:4} along with (B6) in order to bound $R(t_i)$, we get
\begin{align*}
\left|C_n^{1/2}n^{-1} \sum_{i=1}^n R(t_i) g(t_i)\right| & \leq c_1 C_n^{1/2} K^{-j} n^{-1}\sum_{i=1}^n|g(t_i)|
\\ & \leq c_1 C_n^{1/2} K^{-j} \left\{n^{-1} \sum_{i=1}^n |g(t_i)|^2 \right\}^{1/2}
\\ & \leq c_1 C_n \|g\|_{m,\lambda},
\end{align*}
for some $0<c_1<\infty$. To derive the last inequality we have used $K^{-j} \leq C_n^{1/2}$, Lemma~\ref{lem:3} and the inequality $\|g\| \leq \|g\|_{m,\lambda}$. Combining the above, we find
\begin{align*}
\inf_{\substack{ g \in S_{K}^p([0,1]): \|g\|_{m,\lambda} = D \\ |\alpha-\alpha|<\delta/2} } \mathbb{E}\{I_1(g, \alpha)\} &\geq c_0 D^2 C_n - c_1 D C_n,
\end{align*}
which is precisely \eqref{eq:A19}.

To complete the proof we now show \eqref{eq:A20} and for this we largely adopt the notation in the proof of Theorem~1. In this notation we may write
\begin{align*}
I_1(g, \alpha) - \mathbb{E}\{I_1(g, \alpha) \} = \int h_{g, \alpha} d(P_n - \bar{P}) = n^{-1/2} v_n(h_{g, \alpha}),
\end{align*}
where $v_n(\cdot)$ denotes the empirical process and $h_{g, \alpha}$ is the function $[0,1] \times \mathbbm{R} \to \mathbbm{R}$ given by
\begin{align*}
h_{g,\alpha}(t,y) := \int_{R(t)}^{R(t) + C_n^{1/2}g(t)}\{l^{\prime}_{\alpha}(y, g_0(t)+u) - l^{\prime}_{\alpha}(y, g_0(t)) \} du,
\end{align*}
for each $g \in \mathcal{B}_{D} := \{f \in S_{K}^p	([0,1]): \|f\|_{m,\lambda} \leq D\}$ and $\alpha \in \mathcal{V}_{\alpha_0} := (\alpha_0-\delta/2, \alpha_0+\delta/2)$. The proof is based on Theorem 5.11 of \citet{van de Geer:2000}; we avoid repetitions and provide only its most important elements, namely a uniform bound on the class of functions $\{h_{g,\alpha}, g \in \mathcal{B}_{D}, \alpha \in \mathcal{V}_{\alpha_0} \}$, a uniform bound on its $L^2(\bar{P})$-norm, the Lipschitz constants and a bound on the covering number. In particular, under our assumptions with arguments similar as in the proof of Theorem~1, it is easy to show that
\begin{align}
\label{eq:A29}
\sup_{\substack{g \in \mathcal{B}_D, \alpha \in \mathcal{V}_{\alpha_0}}} \|h_{g, \alpha}\|_{\infty} & \leq c_0 \min\{K, \lambda^{-1/(2m)} \} C_n,
\end{align}
for all large $n$, as $\sup_{t \in [0,1]}|R(t)| =O(K^{-j}) = O(C_n^{1/2})$ and $K \to \infty$. Secondly,
\begin{align}
\label{eq:A30}
\sup_{\substack{g \in \mathcal{B}_D, \alpha \in \mathcal{V}_{\alpha_0}}}\int |h_{g, \alpha}|^2 d \bar{P}  \leq c_0 \min\{K, \lambda^{-1/(2m)} \} C_n^2.
\end{align}
In addition, for any $(g_1, \alpha_1), (g_2, \alpha_2) \in \mathcal{B}_D \times \mathcal{V}_{\alpha_0}$  we have
\begin{align}
\label{eq:A31}
\left|h_{g_1, \alpha_1}(t,y) - h_{g_2,\alpha_2}(t,y)\right| & \leq c_0 C_n^{1/2} |g_1(t)-g_2(t)| + c_0 |\alpha_1-\alpha_2| + c_0 C_n^{1/2} K^{1/2}|\alpha_1-\alpha_2| \nonumber
\\ & \leq c_0 C_n^{1/2} K^{1/2} \|g_1-g_2\| + c_0 C_n^{1/2} K^{1/2}|\alpha_1-\alpha_2| \nonumber
\\ & \leq c_0 \|g_1-g_2\| + c_0 |\alpha_1-\alpha_2|
\end{align}
where the second-to-last inequality follows as in the proof of Proposition~1 and the last inequality from the fact that $C_n^{1/2} K^{1/2} \to 0$, by our limit assumptions. By Lemma 2.5 of \citet{van de Geer:2000}, it follows that
\begin{align}
\label{eq:A32}
\mathcal{H}_{\infty}(\epsilon, \{h_{g,\alpha}, g\in \mathcal{B}_D, \alpha \in \mathcal{V}_{\alpha_0}\}) \leq (K+p+1) \log\left(\frac{c_0}{\epsilon}+1\right).
\end{align}

By theorem 5.11 of \citet{van de Geer:2000} and under our limit assumptions, \eqref{eq:A29}--\eqref{eq:A32} imply that for every $\epsilon^{\prime}>0$,
\begin{align*}
\Pr\left[ \sup_{\substack{g \in S_{K}^p([0,1]):\|g\|_{m, \lambda} \leq D \\  |\alpha-\alpha_0| < \delta/2}} |I_1(g, \alpha)-\mathbb{E}\{I_1(g, \alpha)\}| \geq \epsilon^{\prime} C_n  \right] & = \Pr\left[ \sup_{g \in \mathcal{B}_D, \alpha \in \mathcal{V}_{\alpha_0}} |v_n(h_{g, \alpha})| \geq \epsilon^{\prime} n^{1/2} C_n \right]
\\ & \leq  c_0 \exp \left[ - c_0 |\epsilon^{\prime}|^2 n/K \right]
\end{align*}
for all large $n$. The result follows, as, by (B6),  $n/K \to \infty$ for $n \to \infty$. The proof is complete.
\end{proof}

Finally, we provide the proof of Corollary~2.

\begin{proof}[Proof of Corollary~2] For $g_0 \in \mathcal{C}^m([0,1])$,  inspection of the proof of Theorem~2 reveals that we have actually shown the stronger
\begin{align*}
\|\widehat{g}_n-s_{g_0}\|_{m,\lambda}^2 = O_P(C_n),
\end{align*}
for $C_n = n^{-1} \min\{K, \lambda^{-1/2m}\} + \min\{\lambda^2 K^{2m}, \lambda\} + K^{-2m}$. Thus, by the inequality $|x+y|^2 \leq 2x^2+2y^2$, the definition of $\| \cdot \|_{m,\lambda}$ and Lemma~\ref{lem:4},
\begin{align*}
\|\widehat{g}_n-g_0\|_{m,\lambda}^2 & \leq 2 \|\widehat{g}_n-s_{g_0}\|_{m,\lambda}^2 + 2 \|s_{g_0}-g_0\|_{m,\lambda}^2
\\ &  = O_P(C_n) + O(K^{-2m}) + 2\lambda\|s_{g_0}^{(m)} - g_0^{(m)}\|^2.
\end{align*}
By Theorem (26) in \citet[p. 155]{DB:2001}, $\|s_{g_0}^{(m)} - g_0^{(m)}\| = O(1)$. Moreover, for =  $K \asymp n^{\gamma}$ with $\gamma \geq 1/(2m+1)$ we have $C_n = n^{-1} \lambda^{-1/2m} + \lambda + K^{-2m}$, so that
\begin{align*}
\|\widehat{g}_n-g_0\|_{m,\lambda}^2 = O_P(C_n) + O(K^{-2m}) + O(\lambda) = O_P(C_n).
\end{align*}
With our choice of tuning parameters, $K $ and $\lambda$, $\|\widehat{g}_n-g_0\|_{m,\lambda}^2 = O(n^{-2m/(2m+1)})$. The result now follows exactly as in the proof of Corollary~1.

\end{proof}


\begin{thebibliography}{}

\bibitem[{Aeberhard et~al.(2021)}]{Aeb:2021}
\bibinfo{author}{Aeberhard, W.H.}, \bibinfo{author}{Cantoni, E.}, \bibinfo{author}{Marra, G.}, and \bibinfo{author}{Radice, R.} (\bibinfo{year}{2021})
\bibinfo{title}{Robust fitting for generalized additive models for location, scale and shape},
\bibinfo{journal}{Stat. Comput.} \bibinfo{volume}{31} \bibinfo{pages}{1--16}.

\bibitem[{Azadeh and Salibian-Barrera}(2011)]{Ali:2011}
\bibinfo{author}{Azadeh, A.}, and \bibinfo{author}{Salibian-Barrera, M.} (\bibinfo{year}{2011})
\bibinfo{title}{An Outlier-Robust Fit for Generalized Additive Models With Applications to Disease Outbreak Detection},
\bibinfo{journal}{J. Amer. Statist. Assoc.} \bibinfo{volume}{106} \bibinfo{pages}{719--731}.

\bibitem[{Basak et~al.(2021)}]{Basak:2021}
\bibinfo{author}{Basak, S.}, \bibinfo{author}{Basu, A.}, and \bibinfo{author}{Jones M.C.} (\bibinfo{year}{2021})
\bibinfo{title}{On the "optimal" density power divergence tuning
parameter},
\bibinfo{journal}{J. Appl. Stat.} \bibinfo{volume}{48} \bibinfo{pages}{536--556}.

\bibitem[{Basu et~al.(1998)}]{Basu:1998}
\bibinfo{author}{Basu, A.}, \bibinfo{author}{Harris, I.R.}, \bibinfo{author}{Hjort, N.L}, and \bibinfo{author}{Jones, M.C.} (\bibinfo{year}{1998})
\bibinfo{title}{Robust and efficient estimation by minimising a density power divergence},
\bibinfo{journal}{Biometrika} \bibinfo{volume}{85} \bibinfo{pages}{549--559}.

\bibitem[{Bianco et~al.(2011)}]{Bianco:2011}
\bibinfo{author}{Bianco, A.M.}, \bibinfo{author}{Boente, G.}, and \bibinfo{author}{Sombielle, S.} (\bibinfo{year}{2011})
\bibinfo{title}{Robust estimation for nonparametric generalized regression},
\bibinfo{journal}{Statist. Probab. Lett.} \bibinfo{volume}{81} \bibinfo{pages}{1986--1994}.

\bibitem[{Boente et~al.(2010)}]{Boente:2010}
\bibinfo{author}{Boente, G.}, \bibinfo{author}{Ruiz, M.}, and \bibinfo{author}{Zamar, R.} (\bibinfo{year}{2010})
\bibinfo{title}{On a robust local estimator for the scale function in heteroscedastic nonparametric regression},
\bibinfo{journal}{Statist. Probab. Lett.} \bibinfo{volume}{80} \bibinfo{pages}{1185--1195}.

\bibitem[{Cantoni and Ronchetti(2001a)}]{Cantoni:2001a}
\bibinfo{author}{Cantoni, E.}, and \bibinfo{author}{Ronchetti, E.} (\bibinfo{year}{2001a})
\bibinfo{title}{Resistant selection of the smoothing
parameter for smoothing splines},
\bibinfo{journal}{Stat. Comput.} \bibinfo{volume}{11} \bibinfo{pages}{141--146}.

\bibitem[{Cantoni and Ronchetti(2001b)}]{Cantoni:2001b}
\bibinfo{author}{Cantoni, E.}, and \bibinfo{author}{Ronchetti, E.} (\bibinfo{year}{2001b})
\bibinfo{title}{Robust Inference for Generalized Linear Models},
\bibinfo{journal}{J. Amer. Statist. Assoc.} \bibinfo{volume}{96} \bibinfo{pages}{1022--1030}.

\bibitem[{Cardot(2002)}]{Cardot:2002}
\bibinfo{author}{Cardot, H.} (\bibinfo{year}{2002})
\bibinfo{title}{Spatially Adaptive Splines for Statistical Linear Inverse Problems},
\bibinfo{journal}{J. Multivariate Anal.} \bibinfo{volume}{81} \bibinfo{pages}{100--119}.

\bibitem[{Claeskens and Hjort(2008)}]{Claeskens:2008}
\bibinfo{author}{Claeskens, G.}, and \bibinfo{author}{Hjort, N.L.} (\bibinfo{year}{2008})
\bibinfo{title}{Model Selection and Model Averaging},
\bibinfo{publisher}{Cambridge University Press, Cambridge}.

\bibitem[{Claeskens et~al.(2009)}]{Claeskens:2009}
\bibinfo{author}{Claeskens, G.}, \bibinfo{author}{Krivobokova, T.}, and \bibinfo{author}{Opsomer, J.D.} (\bibinfo{year}{2009})
\bibinfo{title}{Asymptotic properties of penalised spline estimators},
\bibinfo{journal}{Biometrika} \bibinfo{volume}{96}  \bibinfo{pages}{529--544}.

\bibitem[{Cleveland(1979)}]{Clev:1979}
\bibinfo{author}{Cleveland, W.S.} (\bibinfo{year}{1979})
\bibinfo{title}{Robust Locally Weighted Regression and Smoothing Scatterplots},
\bibinfo{journal}{J. Amer. Statist. Assoc.} \bibinfo{volume}{74} \bibinfo{pages}{829--836}.

\bibitem[{Croux et~al.(2012)}]{Croux:2012}
\bibinfo{author}{Croux, C.}, \bibinfo{author}{Gijbels, I.}, and \bibinfo{author}{Prosdocimi, I.} (\bibinfo{year}{2012})
\bibinfo{title}{Robust Estimation of Mean and Dispersion Functions in Extended
Generalized Additive Models},
\bibinfo{journal}{Biometrics} \bibinfo{volume}{68} \bibinfo{pages}{31--44}.

\bibitem[{Cox and O'Sullivan}(1990)]{Cox:1990}
\bibinfo{author}{Cox, D.D.}, and \bibinfo{author}{O'Sullivan, F.} (\bibinfo{year}{1990})
\bibinfo{title}{Asymptotic Analysis of Penalized Likelihood and Related Estimators},
\bibinfo{journal}{Ann. Statist.} \bibinfo{volume}{18} \bibinfo{pages}{1676--1695}.

\bibitem[{Cucker and Smale(2001)}]{Cucker:2001}
\bibinfo{author}{Cucker, F.} and \bibinfo{author}{Smale, S.} (\bibinfo{year}{2001})
\bibinfo{title}{On the Mathematical Foundations of Learning},
\bibinfo{journal}{Bul. Amer. Math. Soc.} \bibinfo{volume}{39} \bibinfo{pages}{1--49}.

\bibitem[{de Boor(2001)}]{DB:2001}
\bibinfo{author}{de Boor, C.} (\bibinfo{year}{2001})
\bibinfo{title}{A Practical Guide to Splines}, Revised ed.,
\bibinfo{publisher}{Springer, New York}.

\bibitem[{Eggermont and LaRiccia(2009)}]{Eg:2009}
\bibinfo{author}{Eggermont, P.P.B.}, and \bibinfo{author}{LaRiccia, V.N.} (\bibinfo{year}{2009})
\bibinfo{title}{Maximum Penalized Likelihood Estimation, Volume II: Regression},
\bibinfo{publisher}{Springer, New York}.

\bibitem[{Eilers and Marx(1996)}]{Eilers:1996}
\bibinfo{author}{Eilers, P.H.C}, and \bibinfo{author}{Marx, B.D.} (\bibinfo{year}{1996})
\bibinfo{title}{Flexible smoothing with B-splines and penalties},
\bibinfo{journal}{Statist. Sci.} \bibinfo{volume}{11} \bibinfo{pages}{89--102}.

\bibitem[{Eubank(1999)}]{Eubank:1999}
\bibinfo{author}{Eubank, R.L.} (\bibinfo{year}{1999})
\bibinfo{title}{Nonparametric Regression and Spline Smoothing}, 2nd ed.,
\bibinfo{publisher}{CRC Press, New York}.

\bibitem[{Fan and Li(2001)}]{Fan:2001}
\bibinfo{author}{Fan, J.}, and \bibinfo{author}{Li, R.} (\bibinfo{year}{2001})
\bibinfo{title}{Variable Selection via Nonconcave Penalized Likelihood and its Oracle Properties},
\bibinfo{journal}{J. Amer. Statist. Assoc.} \bibinfo{volume}{96} \bibinfo{pages}{1348--1360}.

\bibitem[{Ghement et~al.(2008)}]{Ghement:2008}
\bibinfo{author}{Ghement, I.R.}, \bibinfo{author}{Ruiz, M.}, and \bibinfo{author}{Zamar, R.}
(\bibinfo{year}{2008})
\bibinfo{title}{Robust estimation of error scale in nonparametric regression models}, \bibinfo{journal}{J. Statist. Plann. and Inference} \bibinfo{volume}{138} \bibinfo{pages}{3200--3216}.

\bibitem[{Ghosh and Basu}(2013)]{Ghosh:2013}
\bibinfo{author}{Ghosh, A.}, and \bibinfo{author}{Basu, A.} (\bibinfo{year}{2013})
\bibinfo{title}{Robust estimation for independent non-homogeneous observation using density power divergence with applications to linear regression},
\bibinfo{journal}{Electron. J. Stat.} \bibinfo{volume}{7} \bibinfo{pages}{2420--2456}.	

\bibitem[{Ghosh and Basu}(2015)]{Ghosh:2015}
\bibinfo{author}{Ghosh, A.}, and \bibinfo{author}{Basu, A.} (\bibinfo{year}{2015})
\bibinfo{title}{Robust estimation for non-homogeneous data and the selection of the optimal tuning parameter: the density power divergence approach},
\bibinfo{journal}{J. Appl. Stat.} \bibinfo{volume}{42} \bibinfo{pages}{2056-–2072}.	

\bibitem[{Ghosh and Basu}(2016)]{Ghosh:2016}
\bibinfo{author}{Ghosh, A.}, and \bibinfo{author}{Basu, A.} (\bibinfo{year}{2016})
\bibinfo{title}{Robust estimation in generalized linear models: the density power divergence approach},
\bibinfo{journal}{TEST} \bibinfo{volume}{25} \bibinfo{pages}{269--290}.	

\bibitem[{Green and Silverman(1994)}]{Green:1994}
\bibinfo{author}{Green, P.J.}, and \bibinfo{author}{Silverman, B.W.} (\bibinfo{year}{1994})
\bibinfo{title}{Nonparametric Regression and Generalized Linear Models},
\bibinfo{publisher}{Chapman \& Hall, London}.

\bibitem[{Gu(2013)}]{Gu:2013}
\bibinfo{author}{Gu, C.} (\bibinfo{year}{2013})
\bibinfo{title}{Smoothing Spline ANOVA Models}, 2nd ed.,
\bibinfo{publisher}{Springer, New York}.

\bibitem[{Hampel et~al.(2011)Hampel, Ronchetti, Rousseeuw and
  Stahel}]{Hampel:2011}
\bibinfo{author}{Hampel, F.R.}, \bibinfo{author}{Ronchetti, E.M.},
  \bibinfo{author}{Rousseeuw, P.J.}, and \bibinfo{author}{Stahel, W.A.} (\bibinfo{year}{2011})
  \bibinfo{title}{Robust Statistics: The Approach Based on Influence
  Functions},
  \bibinfo{publisher}{Wiley, New York}.

\bibitem[{Hastie and Tibshirani(1990)}]{Hastie:1990}
\bibinfo{author}{Hastie, T.J.}, and \bibinfo{author}{Tibshirani, R.J.} (\bibinfo{year}{1990})
\bibinfo{title}{Generalized Additive Models},
\bibinfo{publisher}{Chapman \& Hall, Suffolk}.

\bibitem[{Hastie et~al.(2009)Hastie, Tibshirani \& Friedman}]{Hastie:2009}
\bibinfo{author}{Hastie, T.J.}, \bibinfo{author}{Tibshirani, R.J.}, and \bibinfo{author}{Friedman, J.} (\bibinfo{year}{2009})
\bibinfo{title}{The Elements of Statistical Learning: Data Mining, Inference, and Prediction}, 2nd ed.,
\bibinfo{publisher}{Springer, New York}.

\bibitem[{Huber and Ronchetti(2009)}]{Huber:2009}
\bibinfo{author}{Huber, P.J.}, and \bibinfo{author}{Ronchetti, E.M.} (\bibinfo{year}{2009})
\bibinfo{title}{Robust Statistics}, 2nd ed., \bibinfo{publisher}{Wiley}, Hoboken, NJ.

\bibitem[{Kalogridis(2021)}]{Kalogridis:2021}
\bibinfo{author}{Kalogridis, I.} (\bibinfo{year}{2021})
\bibinfo{title}{Asymptotics for M-type smoothing splines with non-smooth objective functions},
\bibinfo{journal}{TEST} \bibinfo{volume}{31} \bibinfo{pages}{373--389}.

\bibitem[{Kalogridis and Van Aelst(2021)}]{Kal:2021}
\bibinfo{author}{Kalogridis, I.}, and \bibinfo{author}{Van Aelst, S.} (\bibinfo{year}{2021})
\bibinfo{title}{Robust penalized spline estimation with difference penalties},
\bibinfo{journal}{Econ. Statist.}, \bibinfo{volume}{appeared online}.


\bibitem[{Kauermann et~al.(2009)}]{Kau:2009}
\bibinfo{author}{Kauermann, G.}, \bibinfo{author}{Krivobokova, T.}, and \bibinfo{author}{Fahrmeir, L.} (\bibinfo{year}{2009})
\bibinfo{title}{Some asymptotic results on generalized penalized spline smoothing},
\bibinfo{journal}{J. R. Stat. Soc. Ser. B. Stat. Methodol.} \bibinfo{volume}{71} \bibinfo{pages}{487--503}.


\bibitem[{Kim and Gu(2004)}]{Kim:2004}
\bibinfo{author}{Kim, Y.-J.}, and \bibinfo{author}{Gu, C.} (\bibinfo{year}{2004)}
\bibinfo{title}{Smoothing spline Gaussian regression: more
scalable computation via efficient approximation},
\bibinfo{journal}{J. R. Stat. Soc. Ser. B. Stat. Methodol.}
\bibinfo{volume}{66} \bibinfo{pages}{337--356}.

\bibitem[{Mammen and van de Geer(1997)}]{Mammen:1997}
\bibinfo{author}{Mammen, E.}, and \bibinfo{author}{van de Geer, S.} (\bibinfo{year}{1997})
\bibinfo{title}{Penalized Quasi-Likelihood Estimation in Partial Linear Models},
\bibinfo{journal}{Ann. Statist.} \bibinfo{volume}{25} \bibinfo{pages}{1014--1035}.	


\bibitem[{Maronna et~al.(2019)}]{Maronna:2019}
\bibinfo{author}{Maronna, R.A.}, \bibinfo{author}{Martin, D.}, \bibinfo{author}{Salibi\'{a}n-Barrera, M.} and \bibinfo{author}{Yohai, V.J.} (\bibinfo{year}{2019})
\bibinfo{title}{Robust Statistics: Theory and Methods}, 2nd ed., \bibinfo{publisher}{Wiley, Chichester}.

\bibitem[{McCullagh and Nelder(1983)}]{MCN:1983}
\bibinfo{author}{McCullagh, P.}, and \bibinfo{author}{Nelder, J.A.} (\bibinfo{year}{1983})
\bibinfo{title}{Generalized Linear Models},
\bibinfo{publisher}{Chapman \& Hall, London}.

\bibitem[{Nocedal and Wright(2006)}]{Nocedal:2006}
\bibinfo{author}{Nocedal, J.}, and \bibinfo{author}{Wright, S.J.} (\bibinfo{year}{2006})
\bibinfo{title}{Numerical Optimization}, 2nd ed.,
\bibinfo{publisher}{Springer, New York}.

\bibitem[{O'Sullivan(1986)}]{O:1986}
\bibinfo{author}{O'Sullivan, F.} (\bibinfo{year}{1986})
\bibinfo{title}{A statistical perspective of ill-posed problems},
\bibinfo{journal}{Statist. Sci.} \bibinfo{volume}{1} \bibinfo{pages}{502--518}.

\bibitem[{Pearce and Wand(2006)}]{Pearce:2006}
\bibinfo{author}{Pearce, N.D.}, and \bibinfo{author}{Wand, M.P.} (\bibinfo{year}{2006})
\bibinfo{title}{Penalized Splines and Reproducing Kernel Methods},
\bibinfo{journal}{Amer. Statist.} \bibinfo{volume}{60} \bibinfo{pages}{233--240}.

\bibitem[{Ruppert et~al.(2003)}]{Ruppert:2003}
\bibinfo{author}{Ruppert, D.}, \bibinfo{author}{Wand, M.P.}, and \bibinfo{author}{Carroll, R.J.} (\bibinfo{year}{2003})
\bibinfo{title}{Semiparametric regression},
\bibinfo{publisher}{Cambridge, NY}.

\bibitem[{Rynne and Youngston(2008)}]{Rynne:2008}
\bibinfo{author}{Rynne, B.}, and \bibinfo{author}{Youngston, M.A.} (\bibinfo{year}{2008})
\bibinfo{title}{Linear functional analysis},
\bibinfo{publisher}{Springer, London}.

\bibitem[{Scott(2001)}]{Scott:2001}
\bibinfo{author}{Scott, D.W.} (\bibinfo{year}{2001})
\bibinfo{title}{Parametric Statistical Modeling by Minimum Integrated Square Error},
\bibinfo{journal}{Technometrics} \bibinfo{volume}{43} \bibinfo{pages}{274--285}.
	
\bibitem[{van de Geer(2000)}]{van de Geer:2000}
\bibinfo{author}{van de Geer, S.} (\bibinfo{year}{2000})
\bibinfo{title}{Empirical Processes in M-Estimation},
\bibinfo{publisher}{Cambridge University Press}, New York, NY.	
	
\bibitem[{van der Vaart(1998)}]{VDV:1998}
\bibinfo{author}{van der Vaart, A.W.} (\bibinfo{year}{1998})
\bibinfo{title}{Asymptotic Statistics},
\bibinfo{publisher}{Cambridge University Press, New York, NY}.
	
\bibitem[{Wahba(1990)}]{Wahba:1990}
\bibinfo{author}{Wahba, G.} (\bibinfo{year}{1990})
\bibinfo{title}{Spline models for observational data},
\bibinfo{publisher}{Siam}, Philadelphia, Pen.


\bibitem[{Wang et~al.(2013)}]{Wang:2013}
\bibinfo{author}{Wang, X.}, \bibinfo{author}{Jiang, Y.}, \bibinfo{author}{Huang, M.}, and \bibinfo{author}{Zhang, H.} (\bibinfo{year}{2013})
\bibinfo{title}{Robust Variable Selection With Exponential Squared Loss},
\bibinfo{journal}{J. Amer. Statist. Assoc.} \bibinfo{volume}{108} \bibinfo{pages}{632--643}.

\bibitem[{Wong et~al.(2014)}]{Wong:2014}
\bibinfo{author}{Wong, R.K.W.}, \bibinfo{author}{Yao, F.}, and \bibinfo{author}{Lee, T.C.M.} (\bibinfo{year}{2014})
\bibinfo{title}{Robust Estimation for Generalized Additive Models},
\bibinfo{journal}{J. Comput. Graph. Statist.} \bibinfo{volume}{23} \bibinfo{pages}{270--289}.

\bibitem[{Warwick and Jones(2005)}]{War:2005}
\bibinfo{author}{Warwick, J.}, and \bibinfo{author}{Jones, M.C.}
(\bibinfo{year}{2005})
\bibinfo{title}{Choosing a robustness tuning parameter},
\bibinfo{journal}{J. Stat. Comput. Simul.} \bibinfo{volume}{75} \bibinfo{pages}{581--588}.

\bibitem[{Wood(2017)}]{Wood:2017}
\bibinfo{author}{Wood, S.N.} (\bibinfo{year}{2017})
\bibinfo{title}{Generalized Additive Models}, 2nd ed.,
\bibinfo{publisher}{CRC Press, Boca Raton, FL}.

\bibitem[{Xiao(2019)Xiao}]{Xiao:2019}
\bibinfo{author}{Xiao, L.} (\bibinfo{year}{2019})
\bibinfo{title}{Asymptotic theory of penalized splines},
\bibinfo{journal}{Electron. J. Stat.} \bibinfo{volume}{13} \bibinfo{pages}{747--794}.

\bibitem[{Zhou et al.(1998)}]{Zhou:1998}
\bibinfo{author}{Zhou, S.}, \bibinfo{author}{Shen, W.}, and \bibinfo{author}{Wolfe, D.A.} (\bibinfo{year}{1998})
\bibinfo{title}{Local Asymptotics for Regression Splines and Confidence Regions},
\bibinfo{journal}{Ann. Statist.} \bibinfo{volume}{26} \bibinfo{pages}{1760--1782}.

\bibitem[{Zou and Yuan(2008)}]{Zou:2008}
\bibinfo{author}{Zou, H.}, and \bibinfo{author}{Yuan, M.} (\bibinfo{year}{2008})
\bibinfo{title}{Composite quantile regression and the oracle model selection theory},
\bibinfo{journal}{Ann. Statist.} \bibinfo{volume}{36} \bibinfo{pages}{1108--1126}.

\end{thebibliography}
\end{document}